\newcommand{\para}[1]{\medskip \par \noindent {\bf #1}}
\newtheorem{lemma}{Lemma}
\newtheorem{theorem}[lemma]{Theorem}
\newtheorem{definition}[lemma]{Definition}
\newtheorem{proposition}[lemma]{Proposition}
\newcommand{\bw}{\mathbf{w}}
\newcommand{\ty}{\tilde{y}}
\newcommand{\eps}{\varepsilon}
\newcommand{\E}{\mathbb{E}}
\newcommand{\cD}{\mathcal{D}}
\newcommand{\cL}{\mathcal{L}}
\newcommand{\cR}{\mathcal{R}}
\newcommand{\cI}{\mathcal{I}}
\newcommand{\hf}{\hat{f}}
\newcommand{\Ileft}{I_{\text{left}}}
\newcommand{\Iright}{I_{\text{right}}}
\DeclareMathOperator{\Thre}{Thresh}
\DeclareMathOperator{\point}{point}
\DeclareMathOperator{\pol}{pol}
\newcommand{\bD}{x}%
\newcommand{\cA}{\mathcal{A}}
\newcommand{\cH}{\mathcal{H}}
\newcommand{\dec}{\text{dec}}
\newcommand{\clip}{\text{clip}}
\newcommand{\hs}{\text{halfspace}}
\newcommand{\R}{\mathbb{R}}
\newcommand{\bone}{\mathbf{1}}
\newcommand{\PDP}{$\nabla_0$DP\xspace}
\newcommand{\ex}[2]{{\ifx&#1& \mathbb{E} \else
\underset{#1}{\mathbb{E}} \fi \left[#2\right]}}
\newcommand{\pr}[2]{{\ifx&#1& \mathbb{P} \else
\underset{#1}{\mathbb{P}} \fi \left[#2\right]}}
\newcommand{\var}[2]{{\ifx&#1& \mathsf{Var} \else
\underset{#1}{\mathsf{Var}} \fi \left[#2\right]}}
\newcommand{\dr}[3]{\mathrm{D}_{#1}\left(#2\middle\|#3\right)}
\newcommand*{\citet}[1]{\AtNextCite{\AtEachCitekey{\defcounter{maxnames}{2}}}
\textcite{#1}}
\newcommand*{\citep}[1]{\cite{#1}}
\DeclareMathOperator*{\argmin}{arg\,min}
\DeclareMathOperator*{\sgn}{sign}
\DeclareMathOperator*{\Lap}{Lap}
\DeclareMathOperator*{\err}{err}
\DeclareMathOperator*{\poly}{poly}
\newcommand{\pasin}[1]{\textcolor{red}{[Pasin: #1]}}
\newcommand{\thomas}[1]{\textcolor{red}{[Thomas: #1]}}
\newcommand{\badih}[1]{\textcolor{red}{[Badih: #1]}}
\newcommand{\todo}[1]{\textcolor{red}{[Todo: #1]}}
\newcommand{\pasin}[1]{}
\newcommand{\thomas}[1]{}
\newcommand{\badih}[1]{}
\newcommand{\todo}[1]{}
\title{Algorithms with More Granular Differential Privacy Guarantees}
\author{
Badih Ghazi
\and
Ravi Kumar
\and
Pasin Manurangsi
\and
Thomas Steinke
}
\date{}
\begin{document}

\maketitle
\footnotetext{Alphabetical order. Google Research. \texttt{badihghazi@google.com, ravi.k53@gmail.com, pasin@google.com, partialdp@thomas-steinke.net}}

\begin{abstract}
    Differential privacy is often applied with a privacy parameter that is larger than the theory suggests is ideal; various informal justifications for tolerating large privacy parameters have been proposed.
    In this work, we consider partial differential privacy (DP), which allows quantifying the privacy guarantee on a per-attribute basis. 
    In this framework, we study several basic data analysis and learning tasks, and design algorithms whose per-attribute privacy parameter is smaller that the best possible privacy parameter for the entire record of a person (i.e., all the attributes).
\end{abstract}

\newpage

\begin{small}
\tableofcontents
\end{small}

\newpage

\section{Introduction}

Differential Privacy (DP) \cite{DworkMNS06} provides a strict worst-case privacy guarantee --- even an adversary that knows the entire dataset except for one bit of information about one individual cannot learn that bit, even when the dataset and the bit in question are arbitrary. Since its inception, researchers have sought to relax the DP definition in order to permit better data analysis while still providing meaningful privacy guarantees \cite{DesfontainesP19}.

The only approach to relaxing the definition of DP that has gained widespread use --- albeit not acceptance --- is quantitative relaxation. That is, it is common to set the main privacy parameter (usually denoted by $\varepsilon$) to be larger than the theory allows us to easily interpret. 
More precisely, the privacy loss bound $\varepsilon$ is used to quantify the tolerable accuracy with which an adversary can learn the unknown bit. Theory would suggest that $\varepsilon \le 1$ provides a good privacy guarantee, and that the guarantee rapidly degrades if we further increase $\varepsilon$. The setting $\varepsilon=10$ permits a sensitive bit to be revealed with $99.995\%$ accuracy, if we are unlucky enough to be in a truly worst-case setting. Nevertheless, it is common to set $\varepsilon \ge 10$ \cite{desfontaines2020list,tang2017privacy,censusparams2021}. 
This raises the question: \emph{Can one still provide meaningful privacy guarantees even when the DP parameter $\varepsilon$ is large?}

Informally, we can justify large DP parameters by arguing that, for ``realistic'' adversaries, ``natural'' data, and ``nice'' algorithms, the ``real'' privacy guarantee is better than the worst-case guarantee summarized by the single parameter $\varepsilon$. (And, of course, we can also hope that better privacy parameters could be obtained by a more careful analysis of the algorithm.)

In this paper, we seek to understand the above intuitive justification for tolerating large DP parameters. This requires us to formalize what constitutes a ``nice'' algorithm, which we do with \emph{partial DP} --- a notion based on previous studies in the DP literature that provides a more granular accounting of the privacy loss parameter. In particular, this permits us to quantify a ``per-attribute $\varepsilon_0$'' in addition to the usual ``per-person $\varepsilon$.'' (See Section~\ref{sec:prelim} for the formal definition, and Section~\ref{sec:related-work} for an overview of the prior work.)  Setting $\varepsilon \ge 10$ may become more palatable if, e.g., we can simultaneously assert that each (sensitive) attribute has $\varepsilon_0 \le 1$. To interpret such a guarantee, we must also discuss what sort of adversaries and attacks we are and are not protected against.

The focus of our work is on designing and analyzing algorithms that establish a quantitative separation between the attainable per-person $\varepsilon$ and the per-attribute $\varepsilon_0$. 
The key message is that we demonstrate that, in many circumstances, we can say more about the privacy properties of an algorithm beyond what can be conveyed by the standard single-parameter definition of DP.

\subsection{Contributions}
We investigate a variety of fundamental data analysis and learning tasks through the lens of per-attribute partial DP. That is, we present several algorithms and analyze their fine-grained privacy properties.
More specifically, under the partial DP notion, we consider three data analysis tasks and obtain more granular bounds than under standard DP:

(i) We first study algorithms for answering general families of statistical queries (\Cref{sec:gen_alg_results}). We analyze the projection mechanism and a variant of the multiplicative weights exponential mechanism under per-attribute partial DP. These results show a \emph{separation} between the standard per-person $\varepsilon$ and the per-attribute $\varepsilon_0$ that scales polynomially with the dimension (i.e., the number of attributes in each person's record).
    
(ii) We next present a new algorithm for computing histograms (a.k.a., heavy hitters) that gives a per-attribute privacy parameter $\varepsilon_0$ that is \emph{exponentially smaller} than the standard per-person $\varepsilon$-DP parameter in terms of the number of attributes (\Cref{sec:heavy_hitters}). That is, if each person's record is $d$ bits, the error of our algorithm grows as $\frac{\log d}{\varepsilon_0}$, while the standard pure DP algorithm would have error $\frac{d}{\varepsilon}$. We also prove a near-matching lower bound.
    
Note that histograms are an important case study, as the standard algorithms for this closely resemble the kind of worst-case algorithms that we wish to rule out. E.g., if  we add Laplace noise to each count in a histogram to achieve $\varepsilon$-DP with $\varepsilon \ge 10$, then $\ge 99\%$ of counts would still round back to the exact value. Depending on the sparsity of the histogram, this would be a weak privacy guarantee in practice. Hence we design a partial DP algorithm that avoids this worst-case behaviour.
    
(iii)  Finally, we present an algorithm for robustly learning halfspaces under per-attribute partial DP (Section \ref{sec:heavy_hitters}). This has a per-attribute privacy parameter that does not grow with the dimension, as is the case for the standard per-person privacy parameter.  %

Our results are summarized in Table~\ref{tab:result summary}.
\setlength{\tabcolsep}{1pt}
\begin{table}
    \centering
    \begin{scriptsize}
    \begin{tabular}{|c|c|c|}
        \hline
         Task & Standard $\varepsilon$-DP / $\frac12\varepsilon^2$-zCDP & Per-attribute $\varepsilon_0$-$\nabla_0$CDP \\
         \hline
         $\mathcal{Q} = \{ \text{$k$-way marginals on $\{0,1\}^d$}\}$, average error $\alpha$ & $n = O\left(\min\left\{ \frac{\sqrt{d}}{\varepsilon\alpha^2}, \frac{\sqrt{|\mathcal{Q}|}}{\varepsilon\alpha}\right\}\right)$ & $n \!=\! O\!\left(\!\min\!\left\{ \frac{\sqrt{k}}{\varepsilon_0\alpha^2},\! \frac{\sqrt{|\mathcal{Q}| \cdot k/d}}{\varepsilon_0\alpha}\!\right\}\!\right)\!$\\
         \hline
         $\mathcal{Q} = \{ \text{$k$-way marginals on $\{0,1\}^d$}\}$, max$^*$ error $\alpha$ & $n = O\left(\frac{\sqrt{d} \cdot k \log d}{\varepsilon \alpha^2}\right)$ & $n = O\left(\frac{k^{3/2} \cdot \log d}{\varepsilon_0 \alpha^2}\right)$\\
         \hline
         Histogram / Heavy Hitters on $\mathcal{X}=\{0,1\}^d$, error $\alpha$ & $n = O\left(\frac{d}{\varepsilon\alpha}\right)$ / $n = O\left(\frac{\sqrt{d}}{\varepsilon\alpha}\right)$ & $n = O\left(\frac{\log d}{\varepsilon_0\alpha}\right)$ \\
         \hline
         $\gamma$-robust learning of halfspaces over $\{\pm 1\}^d$ & $n\!=\!O\!\left(\frac{d\log(1/\alpha)}{\alpha\varepsilon} \!+\! \frac{\log(1/\alpha)}{\varepsilon\alpha\gamma^2} \!+\! \frac{1}{\alpha^2\gamma^2}\right)\!$ & $n={O}\left(\frac{1}{\varepsilon_0^2 \alpha^2 \gamma^4} + \frac{\log(1/\gamma)}{\varepsilon_0^2 \alpha \gamma}\right)$\\
         \hline
    \end{tabular}
    \caption{Summary of sample complexities of our algorithmic results for per-attribute partial DP, compared to standard DP. See Section~\ref{sec:prelim} for the definition of $\frac12\varepsilon^2$-zCDP and $\varepsilon_0$-$\nabla_0$CDP.}
    \label{tab:result summary}
    \end{scriptsize}
\end{table}

\section{Formal Definitions \& Basic Properties}
\label{sec:prelim}

We briefly recall the definition of differential privacy (DP)~\cite{DworkMNS06,DworkKMMN06}.
\begin{definition}[DP]
    A randomized algorithm $M : \mathcal{X}^n \to \mathcal{Y}$ is \emph{$(\varepsilon,\delta)$-differentially private ($(\varepsilon,\delta)$-DP)} if, for all $x,x' \in \mathcal{X}^n$ differing on a single entry (i.e., $\exists i \in [n] ~ \forall j \in [n]\setminus\{i\} ~x_j=x'_j$) and all measurable $S \subset \mathcal{Y}$, we have $\pr{}{M(x)\in S} \le e^\varepsilon \cdot \pr{}{M(x')\in S}+\delta$.
\end{definition}
The setting with $\delta=0$ (abbreviated  $\varepsilon$-DP) is called \emph{pure} DP, while $\delta>0$ is called \emph{approximate} DP.  We also work with zero-concentrated DP (zCDP)~\cite{BunS16}, which is a refinement of the original definition of concentrated DP~\cite{DworkR16}. This is formulated via R\'enyi divergences \cite{Renyi61}:
\begin{definition}\label{defn:pdp}
    Let $P$ and $Q$ be probability distributions on $\Omega$ with a common $\sigma$-algebra.\footnote{Formally, we assume that $Q(S)=0 \implies P(S) = 0$ for all measurable $S \subset \Omega$. (If this assumption does not hold, define $\dr{\lambda}{P}{Q}=\infty$ for all $\lambda \in [1,\infty]$.) Let $P(x)/Q(x)$ denote the Radon--Nikodym derivative of $P$ with respect to $Q$ evaluated at $x \in \Omega$ so that $P(S) := \ex{X \gets P}{\mathbb{I}[X \in S]} = \ex{X \gets Q}{\mathbb{I}[X \in S] \cdot P(X)/Q(X)}$ for all measurable $S \subset \Omega$.} For $\lambda \in (1,\infty)$, define \(\dr{\lambda}{P}{Q} := \frac{1}{\lambda-1} \log \ex{X \gets P}{\left(\frac{P(x)}{Q(x)}\right)^{\lambda-1}}.\)
    We define $\dr{*}{P}{Q} := \sup_{\lambda \in (1,\infty)} \frac{1}{\lambda} \dr{\lambda}{P}{Q}$ and $\dr{\infty}{P}{Q} := \sup_{S \subset \Omega : P(S)>0} \log(P(S)/Q(S))$.
\end{definition}
\begin{definition}[CDP]
    A randomized algorithm $M : \mathcal{X}^n \to \mathcal{Y}$ is \emph{$\frac12\varepsilon^2$-zero-Concentrated DP ($\frac12\varepsilon^2$-zCDP)} if, for all $x,x' \in \mathcal{X}^n$ differing on a single entry, $\dr{*}{M(x)}{M(x')} \le \frac12\varepsilon^2$.
\end{definition}

\subsection{Partial Differential Privacy}

Partial DP is a natural extension of the standard DP definition; we replace the single parameter $\varepsilon$ with a function that measures the dissimilarity of two persons' records. Similar definitions have appeared in the literature before (Section~\ref{sec:related-work}).
\begin{definition}[Partial DP]\label{defn:ppdp}
Let $\varepsilon : \mathcal{X} \times \mathcal{X} \to \mathbb{R}$ be symmetric and non-negative (i.e., $\forall x,x'\in\mathcal{X},~\varepsilon(x,x')=\varepsilon(x',x)\ge 0$).\footnote{By group privacy, we can assume, without loss of generality, that $\varepsilon$ also satisfies the triangle inequality $\varepsilon(x,x'') \le \varepsilon(x,x')+\varepsilon(x',x'')$ for all $x,x',x''\in\mathcal{X}$.}
We say that a randomized algorithm $M : \mathcal{X}^n \to \mathcal{Y}$ is \emph{$\varepsilon$-partially DP ($\varepsilon$-$\nabla$DP)} if, for all inputs $x,x'\in \mathcal{X}^n$ differing only on a single entry and for all measurable $S \subset \mathcal{Y}$, 
\begin{equation}\pr{}{M(x) \in S} \le e^{\varepsilon(x_i,x'_i)} \cdot \pr{}{M(x') \in S},
\label{eq:ppdp}\end{equation} 
where  $i \in [n]$ is the index of the entry on which $x$ and $x'$ differ (i.e., $\forall i' \in [n]\setminus\{i\} ~ x_{i'} = x'_{i'}$).
\end{definition}
\begin{definition}[Per-Attribute Partial DP]
For $x,x'\in\mathcal{X} = \mathcal{X}_1 \times \cdots \times \mathcal{X}_d$, we denote the Hamming distance $\|x-x'\|_0 := |\{j \in [d] : x_j \ne x'_j\}|$. For $\varepsilon_0 \ge 0$, we define \emph{$\varepsilon_0$-per-attribute partial DP ($\varepsilon_0$-$\nabla_0$DP)} (which we also call \emph{per-attribute DP} or \emph{Hamming partial DP}) to be $\varepsilon$-$\nabla$DP with $\varepsilon(x,x') := \varepsilon_0 \cdot \|x-x'\|_0$.
\end{definition}

We also consider a partial DP equivalent of CDP.
\begin{definition}[Partial CDP and Per-Attribute Partial CDP]
Let $\varepsilon : \mathcal{X} \times \mathcal{X} \to \mathbb{R}$ be symmetric and non-negative.  We say that a randomized algorithm $M : \mathcal{X}^n \to \mathcal{Y}$ is \emph{$\varepsilon$-partially CDP ($\varepsilon$-$\nabla$CDP)} if, for all inputs $x,x'\in \mathcal{X}^n$ differing only on a single entry,
$\dr{*}{M(x)}{M(x')} \le \frac12 \varepsilon(x_i,x'_i)^2,$ where  $i \in [n]$ is the index on which $x$ and $x'$ differ (i.e., $\forall i' \ne i ~~ x_{i'} = x'_{i'}$).

For $\varepsilon_0 \in \mathbb{R}$, we define \emph{$\varepsilon_0$-per-attribute partial CDP ($\varepsilon_0$-$\nabla_0$CDP)} to be $\varepsilon$-$\nabla$CDP with $\varepsilon(x,x') := \varepsilon_0 \cdot \|x-x'\|_0$.
\label{defn:pcdp}
\end{definition}
Definition \ref{defn:pcdp} is a relaxation of Definition \ref{defn:ppdp}, i.e., $\varepsilon$-$\nabla$DP implies $\varepsilon$-$\nabla$CDP. CDP has better composition properties than pure DP, which makes it more useful in practice.\footnote{We do not consider approximate DP as a basis for partial DP, as approximate DP has poor group privacy properties, which are essential for our setting. Another option is Gaussian DP \cite{dong2019gaussian}, which has properties very similar to CDP.}

DP is usually defined in terms of neighboring datasets --- i.e., a binary relation, rather than a metric. We can define partial DP in terms of such a graph of neighboring records:
\begin{lemma}[Equivalent Definition of Partial DP]\label{lem:graph}
    Let $G$ be an undirected non-negatively weighted graph on $\mathcal{X}$. Let $M : \mathcal{X}^n \to \mathcal{Y}$ be a randomized algorithm. Suppose, for every pair $x,x'\in \mathcal{X}^n$ differing on a single entry $i \in [n]$, if $\{x_i,x'_i\}$ is an edge in the graph $G$ with weight $\varepsilon_0$, then $\dr{\infty}{M(x)}{M(x')} \le \varepsilon_0$ (resp., $\dr{*}{M(x)}{M(x')} \le \frac12\varepsilon_0^2$). Let $\varepsilon : \mathcal{X} \times \mathcal{X} \to \mathbb{R}$ be the distance metric on the graph $G$. Then $M$ is $\varepsilon$-$\nabla$DP (resp., $\varepsilon$-$\nabla$CDP).
\end{lemma}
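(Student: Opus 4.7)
The plan is to prove Lemma~\ref{lem:graph} by an interpolation argument along a near-shortest path in $G$. Fix $x, x' \in \mathcal{X}^n$ that differ only on a single coordinate $i$, and write $a := x_i$ and $b := x'_i$. By definition of the graph metric, for every $\eta > 0$ there is a path $a = v_0, v_1, \ldots, v_k = b$ in $G$ whose edge weights $\varepsilon_1, \ldots, \varepsilon_k$ sum to at most $\varepsilon(a,b) + \eta$. Define the interpolating datasets $z_0, z_1, \ldots, z_k \in \mathcal{X}^n$ obtained from $x$ by overwriting coordinate $i$ with $v_j$. Then $z_0 = x$, $z_k = x'$, and every consecutive pair $(z_{j-1}, z_j)$ differs only at coordinate $i$, where the two entries $v_{j-1}, v_j$ form an edge of $G$ with weight $\varepsilon_j$. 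Hence the hypothesis of the lemma applies to every consecutive pair along the path.

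For the pure-DP version, the max-divergence bounds chain trivially: for every measurable $S \subseteq \mathcal{Y}$, $\Pr[M(z_{j-1}) \in S] \le e^{\varepsilon_j}\Pr[M(z_j) \in S]$, so telescoping gives $\Pr[M(x) \in S] \le e^{\sum_j \varepsilon_j}\Pr[M(x') \in S] \le e^{\varepsilon(a,b)+\eta}\Pr[M(x') \in S]$. Sending $\eta \to 0$ yields $\dr{\infty}{M(x)}{M(x')} \le \varepsilon(x_i, x'_i)$, which is precisely the $\varepsilon$-$\nabla$DP condition.

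For the CDP version, the required chaining step is a weighted analog of the standard zCDP group-privacy theorem of Bun--Steinke: if $\dr{*}{M(z_{j-1})}{M(z_j)} \le \tfrac12 \varepsilon_j^2$ for each $j \in [k]$, then $\dr{*}{M(z_0)}{M(z_k)} \le \tfrac12 \bigl(\sum_j \varepsilon_j\bigr)^2$. I would prove this by induction on $k$, peeling off the last edge via the shifted-order triangle-type inequality for Rényi divergence used in the standard group-privacy proof. Concretely, writing $\sigma_j := \sqrt{2\,\dr{*}{M(z_0)}{M(z_j)}}$, the induction step yields $\sigma_j \le \sigma_{j-1} + \varepsilon_j$, and summing gives $\sigma_k \le \sum_j \varepsilon_j \le \varepsilon(a,b) + \eta$. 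Letting $\eta \to 0$ then delivers $\dr{*}{M(x)}{M(x')} \le \tfrac12\,\varepsilon(x_i, x'_i)^2$, establishing $\varepsilon$-$\nabla$CDP. The main obstacle is precisely this weighted group-privacy step, since the standard Bun--Steinke statement is only articulated for a uniform per-edge budget $\varepsilon_j \equiv \varepsilon_0$ (giving the usual $k^2$ quadratic degradation). The same induction goes through with edge-dependent Rényi shifts, and the key identity $(\sigma+\varepsilon)^2 = \sigma^2 + 2\sigma\varepsilon + \varepsilon^2$ ensures that the per-step squared budgets telescope correctly into $(\sum_j \varepsilon_j)^2$; once this weighted composition lemma is in place, the rest of the argument is just bookkeeping and taking $\eta \to 0$.
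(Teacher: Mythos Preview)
Your proposal is correct and matches the paper's intended argument. The paper does not spell out a proof of Lemma~\ref{lem:graph}; it only remarks (after Proposition~\ref{prop:ppdp-conversion}) that the conversion is an application of group privacy, i.e., the triangle inequalities $\dr{\infty}{P}{Q} \le \dr{\infty}{P}{R} + \dr{\infty}{R}{Q}$ and $\dr{*}{P}{Q} \le \bigl(\sqrt{\dr{*}{P}{R}} + \sqrt{\dr{*}{R}{Q}}\bigr)^2$. Your path-interpolation and telescoping are exactly this, and your inductive step $\sigma_j \le \sigma_{j-1} + \varepsilon_j$ is precisely the second triangle inequality applied with $P = M(z_0)$, $R = M(z_{j-1})$, $Q = M(z_j)$; the paper cites this inequality directly rather than rederiving it from the Bun--Steinke shifted-order argument, so your ``main obstacle'' is already taken as given. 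The $\eta \to 0$ step to handle non-attained infima is a nice extra bit of care.
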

In particular, Lemma \ref{lem:graph} tells us that per-attribute partial DP is equivalent to changing the neighboring relation of DP to consider changing only a single attribute of a person, rather than an entire person record. That is, an equivalent definition of $\varepsilon_0$-$\nabla_0$DP (or $\varepsilon_0$-$\nabla_0$CDP) is to require that for all pairs $x,x' \in (\mathcal{X}_1 \times \cdots \times \mathcal{X}_d)^n$ of datasets differing on a single attribute of a single person, we have $\forall S ~~\pr{}{M(x) \in S} \le e^{\varepsilon_0} \cdot \pr{}{M(x')}$ (or, respectively, $\dr{*}{M(x)}{M(x')} \le \frac12 \varepsilon_0^2$). I.e., the graph in Lemma \ref{lem:graph} is the Hamming graph with each edge having the same weight $\varepsilon_0$. We will define such pairs as \emph{neighboring}.

\subsection{Basic Properties of Partial DP}
An essential property of partial DP is that it is directly comparable to standard DP: %
\begin{proposition}\label{prop:ppdp-conversion}
    If $M : \mathcal{X}^n \to \mathcal{Y}$ satisfies $\varepsilon_0$-$\nabla_0$DP and $\mathcal{X} = \mathcal{X}_1 \times  \cdots \times \mathcal{X}_d$ consists of $d$ attributes, then $M$ satisfies $(d \cdot \varepsilon_0)$-DP. Similarly, $\varepsilon_0$-$\nabla_0$CDP implies $\frac12d^2\varepsilon_0^2$-zCDP, which in turn implies $(\tilde\varepsilon,\tilde\delta)$-DP for all $\tilde\varepsilon \ge \frac12 d^2 \varepsilon_0^2$ and $\tilde\delta = \exp\left(-(\tilde\varepsilon - \frac12 d^2 \varepsilon_0^2)^2 /2 d^2 \varepsilon_0^2 \right)$. %
    
    More generally, if $M : \mathcal{X}^n \to \mathcal{Y}$ satisfies $\varepsilon$-$\nabla$DP, then $M$ satisfies $(\sup_{x,x'\in\mathcal{X}} \varepsilon(x,x'))$-DP.
    Conversely, if $M$ satisfies $\varepsilon$-DP, then $M$ satisfies $\varepsilon$-$\nabla$DP (where we interpret $\varepsilon$ as a constant function) and $\varepsilon$-$\nabla_0$DP.
\end{proposition}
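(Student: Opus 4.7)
The plan is to derive each implication directly from group privacy and previously established conversions. The core observation is that two records $x_i, x'_i \in \mathcal{X} = \mathcal{X}_1 \times \cdots \times \mathcal{X}_d$ can differ on at most $d$ attributes, so $\|x_i - x'_i\|_0 \le d$. Hence I can reduce a single-person change to a sequence of at most $d$ single-attribute changes and compose the per-attribute guarantees.

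For the pure-DP implication, I would invoke Lemma~\ref{lem:graph} with the weighted Hamming graph in which every edge has weight $\varepsilon_0$: the graph distance between any two records is at most $d \cdot \varepsilon_0$, and Lemma~\ref{lem:graph} then gives $\dr{\infty}{M(x)}{M(x')} \le d\varepsilon_0$ for any $x, x'$ differing on a single entry, i.e., $(d\varepsilon_0)$-DP. (Equivalently, one can unfold this inductively: fix a path $x_i = z^{(0)}, z^{(1)}, \dots, z^{(k)} = x'_i$ in the Hamming graph with $k \le d$, let $x^{(j)}$ be $x$ with the $i$-th entry replaced by $z^{(j)}$, and chain $k$ applications of the single-attribute bound $\Pr[M(x^{(j)}) \in S] \le e^{\varepsilon_0} \Pr[M(x^{(j+1)}) \in S]$.) The same path-composition argument with the Rényi-divergence group-privacy bound $\dr{*}{P_k}{Q_k} \le k^2 \cdot \dr{*}{P_0}{Q_0}$ (where intermediate distributions interpolate between $P_0$ and $Q_k$) yields $\varepsilon_0$-$\nabla_0$CDP $\Rightarrow \frac12 d^2\varepsilon_0^2$-zCDP; this is again an instance of Lemma~\ref{lem:graph} in its zCDP form.

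From $\frac12 d^2 \varepsilon_0^2$-zCDP to $(\tilde\varepsilon,\tilde\delta)$-DP, I would cite the standard zCDP-to-DP conversion (Bun--Steinke): $\rho$-zCDP implies $(\rho + 2\sqrt{\rho \log(1/\tilde\delta)},\tilde\delta)$-DP for every $\tilde\delta > 0$. Setting $\rho = \frac12 d^2\varepsilon_0^2$ and solving the quadratic $\tilde\varepsilon = \rho + 2\sqrt{\rho \log(1/\tilde\delta)}$ for $\tilde\delta$ gives exactly $\tilde\delta = \exp\bigl(-(\tilde\varepsilon - \tfrac12 d^2 \varepsilon_0^2)^2 / 2d^2\varepsilon_0^2\bigr)$ for $\tilde\varepsilon \ge \tfrac12 d^2 \varepsilon_0^2$, matching the proposition.

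The ``more generally'' claim is immediate from Definition~\ref{defn:ppdp}: $\varepsilon(x_i,x'_i) \le \sup_{x,x'} \varepsilon(x,x')$ for every pair, so $\varepsilon$-$\nabla$DP yields $(\sup \varepsilon)$-DP. The converse is also immediate: if $M$ is $\varepsilon$-DP, then the bound $e^\varepsilon$ holds for every single-entry change and in particular for every single-attribute change, so interpreting $\varepsilon$ as the constant function yields $\varepsilon$-$\nabla$DP, and restricting to Hamming neighbors yields $\varepsilon$-$\nabla_0$DP. No step here presents a real obstacle; the only mildly delicate calculation is inverting the zCDP-to-DP conversion to recover the precise closed form for $\tilde\delta$, which is routine algebra.
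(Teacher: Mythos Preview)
Your proposal is correct and follows essentially the same approach as the paper, which simply attributes the result to ``the group privacy property (a.k.a.\ the triangle inequality for R\'enyi divergences)'' and states the inequalities $\dr{\infty}{P}{Q} \le \dr{\infty}{P}{R} + \dr{\infty}{R}{Q}$ and $\dr{*}{P}{Q} \le \bigl(\sqrt{\dr{*}{P}{R}} + \sqrt{\dr{*}{R}{Q}}\bigr)^2$. Your write-up unpacks this in more detail (via Lemma~\ref{lem:graph} and an explicit path argument) and spells out the Bun--Steinke zCDP-to-$(\tilde\varepsilon,\tilde\delta)$-DP inversion, but the underlying argument is the same.
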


The conversion from per-attribute partial DP to standard DP is an application of the group privacy property (a.k.a. the triangle inequality for R\'enyi divergences). That is, $\dr{\infty}{P}{Q} \le \dr{\infty}{P}{R} + \dr{\infty}{R}{Q}$ and $\dr{*}{P}{Q} \le \left( \sqrt{\dr{*}{P}{R}} + \sqrt{\dr{*}{R}{Q}} \right)^2$ for all appropriate probability distributions $P$, $Q$, and $R$. 
More generally, if we have $\varepsilon_0$-$\nabla_0$DP and the adversary is interested in only $k$ attributes, then we obtain a privacy guarantee comparable to $(k\cdot\varepsilon_0)$-DP. 
This conversion may or may not be tight, but it is important that we can directly relate the partial DP guarantee back to standard DP.

Next we have composition, which is inherited from the standard DP definition.

\begin{lemma}[Sequential composition]
\label{lem:composition}
    Let $M_1 : \mathcal{X}^n \to \mathcal{Y}_1$ be $\varepsilon_1$-$\nabla$DP (respectively,  $\varepsilon_1$-$\nabla$CDP). Let $M_2 : \mathcal{X}^n \times \mathcal{Y}_1 \to \mathcal{Y}_2$ be such that the restriction $M_2(\cdot,y) : \mathcal{X}^n \to \mathcal{Y}_2$ is $\varepsilon_2$-$\nabla$DP (resp., $\varepsilon_2$-$\nabla$CDP) for all $y \in \mathcal{Y}_1$. Define $M_{12} : \mathcal{X}^n \to \mathcal{Y}_2$ by $M_{12}(x)=M_2(x,M_1(x))$. Then $M_{12}$ is $(\varepsilon_1+\varepsilon_2)$-$\nabla$DP (resp., $\sqrt{\varepsilon_1^2+\varepsilon_2^2}$-$\nabla$CDP).
\end{lemma}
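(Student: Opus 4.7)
The plan is to fix an arbitrary pair of neighboring datasets $x, x' \in \mathcal{X}^n$ differing only at entry $i \in [n]$, set $\varepsilon_1^\star := \varepsilon_1(x_i, x'_i)$ and $\varepsilon_2^\star := \varepsilon_2(x_i, x'_i)$, and then mirror the standard DP/CDP composition proofs, using the assumed partial guarantees on this particular pair. Because partial DP is defined pointwise over pairs of records, it suffices to establish the per-pair bound, and the result then holds for all neighboring pairs simultaneously.

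For the $\nabla$DP part, I would carry out the usual composition argument: for any measurable $S \subset \mathcal{Y}_2$,
\[
\pr{}{M_{12}(x) \in S} = \int_{\mathcal{Y}_1} \pr{}{M_2(x,y) \in S} \, d \mu_{M_1(x)}(y),
\]
where $\mu_{M_1(x)}$ is the law of $M_1(x)$. The hypothesis on $M_2(\cdot, y)$ gives $\pr{}{M_2(x,y) \in S} \le e^{\varepsilon_2^\star} \pr{}{M_2(x',y) \in S}$ \emph{for every} $y$, so I can pull $e^{\varepsilon_2^\star}$ out of the integral; then the hypothesis on $M_1$ gives $d\mu_{M_1(x)}(y) \le e^{\varepsilon_1^\star}\, d\mu_{M_1(x')}(y)$, which contributes another factor $e^{\varepsilon_1^\star}$. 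Combining yields $\pr{}{M_{12}(x) \in S} \le e^{\varepsilon_1^\star + \varepsilon_2^\star} \pr{}{M_{12}(x') \in S}$, which is exactly the $(\varepsilon_1 + \varepsilon_2)$-$\nabla$DP bound evaluated at $(x_i, x'_i)$.

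For the $\nabla$CDP part, by post-processing it suffices to bound $\dr{\lambda}{\cdot}{\cdot}$ on the joint distributions $(M_1(x), M_2(x, M_1(x)))$ and $(M_1(x'), M_2(x', M_1(x')))$. Expanding the definition of $\dr{\lambda}{}{}$, factoring the joint density, and using $\sum_z \frac{A_y(z)^\lambda}{B_y(z)^{\lambda-1}} = \exp\!\bigl((\lambda-1)\dr{\lambda}{A_y}{B_y}\bigr)$ where $A_y, B_y$ denote the conditional laws of $M_2(x,y), M_2(x',y)$, I get the standard Rényi chain-rule inequality
\[
\dr{\lambda}{M_{12}(x)}{M_{12}(x')} \;\le\; \dr{\lambda}{M_1(x)}{M_1(x')} + \sup_{y \in \mathcal{Y}_1} \dr{\lambda}{M_2(x,y)}{M_2(x',y)}.
\]
Dividing by $\lambda$, invoking the two hypotheses (each uniformly over $\lambda$ and, for the second, over $y$), and taking $\sup_\lambda$ separately on each term yields $\dr{*}{M_{12}(x)}{M_{12}(x')} \le \tfrac12 (\varepsilon_1^\star)^2 + \tfrac12 (\varepsilon_2^\star)^2 = \tfrac12\bigl(\sqrt{(\varepsilon_1^\star)^2 + (\varepsilon_2^\star)^2}\bigr)^2$, which is the $\sqrt{\varepsilon_1^2 + \varepsilon_2^2}$-$\nabla$CDP bound at $(x_i, x'_i)$.

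The only real subtlety, and the main place where the partial version diverges from the textbook proof, is ensuring that the factor $e^{\varepsilon_2^\star}$ (respectively the bound $\tfrac12(\varepsilon_2^\star)^2$) can be pulled uniformly out of the integral/supremum over $y$. This is guaranteed by the way $\varepsilon_2$-$\nabla$DP of $M_2(\cdot, y)$ is assumed to hold \emph{for every} $y \in \mathcal{Y}_1$ with the same function $\varepsilon_2$ of the pair $(x_i, x'_i)$; the parameter does not depend on the intermediate output. Once this is noted, the rest is the standard sub-additivity of $(D_\infty, D_*)$ under adaptive composition, reused verbatim.
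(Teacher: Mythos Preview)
Your proposal is correct and matches the paper's approach: the paper does not give an explicit proof of this lemma, merely remarking that composition is ``inherited from the standard DP definition,'' and what you have written is precisely the standard pure-DP/zCDP adaptive composition argument applied pointwise to the fixed pair $(x_i,x'_i)$. Your observation that the $\varepsilon_2$-bound is uniform in $y$ (so the factor can be pulled through the integral/sup) is exactly the point that makes the standard proof go through unchanged in the partial setting.
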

A simple difference between per-attribute partial DP and standard DP is that if we run multiple DP algorithms on disjoint sets of attributes, then the privacy parameter does not grow with the number of attributes. In contrast, under standard DP, the privacy parameter would grow with the number of attributes following composition.
\begin{lemma}[Parallel composition, \cite{McSherry}]
    For $j \in [d]$, let $M_j : \mathcal{X}_j^n \to \mathcal{Y}_j$. %
    Let $\mathcal{X} = \mathcal{X}_1 \times \cdots \times \mathcal{X}_d$ and $\mathcal{Y} = \mathcal{Y}_1 \times \cdots \times \mathcal{Y}_d$. 
    Define $M : \mathcal{X}^n \to \mathcal{Y}$ by $M(x)_j = M_j((x_{i, j})_{i \in [n]})$ for all $x \in \mathcal{X}^n$ and $j \in [d]$, where $x_{i, j} \in \mathcal{X}_j$ denotes only the $j$th attribute of the $i$th record.
    If $M_j$ is $\varepsilon$-DP (resp., $\frac12\varepsilon^2$-zCDP) for each $j \in [d]$, then $M$ is $\varepsilon$-$\nabla_0$DP (resp., $\varepsilon$-$\nabla_0$CDP).
\end{lemma}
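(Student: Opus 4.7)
The plan is to reduce to the case of single-attribute changes via \Cref{lem:graph}, and then exploit the product structure of $M$. Concretely, I would apply \Cref{lem:graph} with $G$ taken to be the Hamming graph on $\mathcal{X} = \mathcal{X}_1 \times \cdots \times \mathcal{X}_d$ in which edges connect records differing in exactly one attribute, each with weight $\varepsilon$; the induced graph metric on $\mathcal{X}$ is then exactly $\varepsilon \cdot \|\cdot\|_0$. Consequently, verifying the single-edge hypothesis of \Cref{lem:graph} suffices to conclude $\varepsilon$-$\nabla_0$DP (resp.\ $\varepsilon$-$\nabla_0$CDP).

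To check the hypothesis, I would fix $x, x' \in \mathcal{X}^n$ differing only on the $j^*$-th attribute of the $i^*$-th record, and let $z_j := (x_{i, j})_{i \in [n]}$ and $z'_j := (x'_{i, j})_{i \in [n]}$ denote the column projections. By construction, $z_j = z'_j$ whenever $j \ne j^*$, while $z_{j^*}$ and $z'_{j^*}$ are standard-DP neighbors in $\mathcal{X}_{j^*}^n$ (they differ only in their $i^*$-th coordinate). Because the $M_j$'s are run with independent randomness, $M(x) \sim \bigotimes_{j=1}^d M_j(z_j)$ and analogously for $M(x')$, and for every $j \ne j^*$ the $j$-th marginals of $M(x)$ and $M(x')$ coincide as distributions.

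For the pure-DP claim, I would condition on the coordinates $(M_j(z_j))_{j \ne j^*}$ — whose joint law is the same under $x$ and $x'$ — and apply the $\varepsilon$-DP guarantee of $M_{j^*}$ slicewise to obtain $\pr{}{M(x) \in S} \le e^\varepsilon \cdot \pr{}{M(x') \in S}$ for every measurable $S \subseteq \mathcal{Y}$. For the zCDP claim, I would invoke additivity of R\'enyi divergence under independent products, $\dr{\lambda}{\bigotimes_j P_j}{\bigotimes_j Q_j} = \sum_j \dr{\lambda}{P_j}{Q_j}$, from which all $d-1$ terms with $j \ne j^*$ vanish and the remaining term is bounded by $\frac12 \varepsilon^2$ via the zCDP guarantee of $M_{j^*}$; this yields $\dr{*}{M(x)}{M(x')} \le \frac12 \varepsilon^2$.

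There is no substantial obstacle in the argument; the only subtlety to keep in mind is that the lemma's conclusion must hold for neighbors that may differ on a single \emph{record} in many attributes, whereas the product decomposition above directly handles only single-\emph{attribute} edges. \Cref{lem:graph} is what bridges the two, promoting the edge-wise bound to the full Hamming-distance-weighted $\nabla_0$DP/$\nabla_0$CDP guarantee asserted by the statement.
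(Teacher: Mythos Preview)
Your argument is correct. The paper does not actually supply its own proof of this lemma; it states it with a citation to McSherry and moves on. So there is no paper-side proof to compare against.

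That said, your approach is exactly the one the paper's framework is set up to support: the remark immediately following \Cref{lem:graph} notes that $\varepsilon_0$-$\nabla_0$DP is equivalent to requiring the DP bound only for dataset pairs differing in a single attribute of a single person, which is precisely the reduction you invoke. Once reduced to a single-attribute change, the product structure of $M$ makes all but one factor identical, and the remaining factor is handled by the standard DP (resp.\ zCDP) guarantee of $M_{j^*}$. Your use of additivity of R\'enyi divergence over independent products for the zCDP case is the clean way to finish. No gaps.
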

For example, if we release independent statistics about the medical records, browsing histories, employment, etc.~of a set of people and each individual release is $\varepsilon$-DP, then the overall release is $\varepsilon$-$\nabla_0$DP, assuming no overlapping attributes. In particular, this example characterizes the setting in which non-coordinating entities perform DP analysis on data from the same set of people.

In the rest of this paper, we provide several algorithms in the per-attribute partial DP framework.  We show that, in a variety of settings, there is a quantitative separation between partial DP and standard DP, i.e., we can provide $\varepsilon_0$-$\nabla_0$DP (or $\varepsilon_0$-$\nabla_0$CDP) with a small per-attribute privacy parameter $\varepsilon_0$, but it is not possible to provide $\varepsilon$-DP (or $\frac12\varepsilon^2$-zCDP) with a small per-person privacy parameter $\varepsilon$. We argue that, in such settings, it is more informative to give a small per-attribute guarantee $\varepsilon_0$, in addition to the large per-person guarantee $\varepsilon$.
\section{Related Work}\label{sec:related-work}

\para{Definitions.}
There is a vast literature on privacy definitions both before and since the introduction of DP~\cite{DworkMNS06};~\citet{DesfontainesP19}  catalog 225 DP variants that have been proposed. We only discuss the definitions most closely related to  partial DP. We organize these definitions into three categories: (i) more general than partial DP, (ii) the same as or similar to per-attribute partial DP, and (iii) special cases of partial DP, but different from per-attribute partial DP.

(i) \textsl{Notions more general than partial DP:}
\citet{chatzikokolakis2013broadening} define a notion of \emph{metric DP} or \emph{$d$-privacy}, where the indistinguishability guarantee is determined by a metric $d$ on the space of all input data\emph{sets} (as opposed to data \emph{points} in partial DP). An equivalent definition, dubbed \emph{Lipschitz privacy} was given by~\citet{koufogiannis2015optimality}.
Similarly, \emph{Pufferfish privacy}~\cite{KiferM14} and \emph{Blowfish privacy}~\cite{HeMD14} define a generalized notion of neighboring datasets (a.k.a.~``secrets''), which yields a generalization of DP by taking the metric to be proportional to the distance between datasets on the graph of neighboring datasets (cf.~Lemma \ref{lem:graph}). 
Unlike these prior works, we restrict the definition of partial DP to consider pairs of datasets that only differ on the record of a single person, rather than considering pairs of datasets in which the records of multiple people may change.  We consider this restriction to be an important feature of our definition, as it ensures that partial DP remains comparable to standard DP (Proposition~\ref{prop:ppdp-conversion}) and thus can still be interpreted as an individual privacy guarantee.
\emph{Metric DP} or \emph{context-aware} DP has also been studied in the context of local DP \cite{alvim2018local,acharya2020context}; our focus, however, is on central DP.

(ii) \textsl{Notions comparable to per-attribute partial DP:}
Our notion of per-attribute partial DP is equivalent to the definition of \emph{attribute DP} given by \citet{kifer2011no}; \citet{kenthapadi2012privacy} and \citet{ahmed2016social} also use this definition, but they simply call it ``differential privacy'' without qualification.
\citet{asi2019element} define \emph{element-level DP}, where the distance between data points is determined by the number of ``elements'' on which they differ; examples of ``elements'' are whether or not a certain word is included in a person's message history, or whether or not a domain is in the browsing history. 

\textsl{(iii) Other special cases of partial DP:}
\citet{andres2013geo} define \emph{geo-indistinguishability}, which is a special case of partial DP in which $\varepsilon(x,x')=\varepsilon_0 \cdot \|x-x'\|_2$, i.e., inputs are points in space and the privacy guarantee scales with the Euclidean distance. Another special case of partial DP is \emph{edge DP} in graphs~\cite{hay2009accurate}, where a person corresponds to a vertex that may have many incident edges, but privacy is only guaranteed on a per-edge basis. \emph{Label DP}~\cite{chaudhuri2011sample} is also a special case of partial DP, where $\varepsilon(x,x')=\infty$ if $x$ and $x'$ differ on any attribute other than the label in the training dataset of a supervised machine learning task.

It is common to assume that each person contributes one record to the dataset, but often a person may contribute multiple records. If we do not account for this, then we have a relaxed version of DP, which has been dubbed \emph{item-level DP} or \emph{record-level DP}~\citep{houssiau2022difficulty,MessingDHKMMNPSW20}. A recent line of work on ``user-level DP'' provides algorithms that ensure standard DP even when each user has multiple records~\citep{levy2021learning,ghazi2021user,liu2020learning,cummings2021mean}.

We remark that most of the related prior work considers definitions based on pure DP,\footnote{The only exception is element-level DP \citep{asi2019element}, which is based on  R\'enyi DP \citep{Mironov17}, a relaxation of CDP.} which is rarely used in practice due to its inferior composition properties.  Our work also considers concentrated DP; while this extension is straightforward, we believe it is important.%

\para{Algorithms.}
Although many different privacy definitions have been proposed, surprisingly few algorithms have been studied under these notions. To the best of our knowledge, all of the comparable prior algorithmic results are variants of adding Laplace or Gaussian noise scaled to a modified version of sensitivity that fits the definition.
Our main technical contribution is a deeper exploration of the algorithmic aspects of a more granular privacy analysis; we provide several algorithms for a variety of standard data analysis tasks and give a more granular privacy analysis for each.

\section{Answering Query Workloads}
\label{sec:gen_alg_results} 

In this section we consider the problem of releasing statistics that depend on overlapping sets of attributes.  In particular, we investigate releasing private answers to an arbitrary family $\mathcal{Q}$ of queries $\{ q_j: \mathcal{X} \rightarrow \mathbb{R} \}_{j=1}^m$. E.g., if $\mathcal{X}=\{0,1\}^d$, then $\mathcal{Q}$ could be all $k$-way parities or $k$-way conjunctions.
These examples of low-order marginals are some of the best-studied families of queries in the DP literature; in particular, they are known to be among the ``hardest'' families of queries \citep{BunUV14}.  We design partial DP mechanisms for answering such families of queries to contrast the partial DP bounds against the standard DP bounds. %

\subsection{Warmup: Average Error via Noise Addition}

Abusing notation, let $\mathcal{Q}(x) \in \mathbb{R}^m$ denote the vector $( \frac{1}{n} \sum_{i=1}^n q_j(x_i) )_{j=1}^m$ of answers. 
 Let $\Delta = \sup \{ \|\mathcal{Q}(x)-\mathcal{Q}(x')\|_2 : x,x'\in \mathcal{X}\}$ be the \emph{diameter} and let
$\Delta_0 = \sup \{ \|\mathcal{Q}(x)-\mathcal{Q}(x')\|_2 : x,x'\in \mathcal{X}, \|x-x'\|_0 \le 1\}$ be the \emph{partial diameter} of the set of possible answers.  

A simple algorithm for answering $\mathcal{Q}$ under standard DP with low mean squared error (MSE), is the \emph{Projection Mechanism}~\cite{NikolovTZ13,DworkNT15}, which adds Gaussian noise $\mathcal{N}(\cdot, \cdot)$ to the vector of all query answers and projects this noisy vector back to the set of answer vectors that are consistent with some valid input. 
This naturally extends to partial DP. 
\begin{theorem}\label{thm:projection_mechanism}
For $\sigma>0$ and $\mathcal{Q} : \mathcal{X} \to \mathbb{R}^m$, define the projection mechanism $M : \mathcal{X}^n \to \mathbb{R}^m$ as follows. On input $x \in \mathcal{X}^n$, compute $\mathcal{Q}(x) = \frac1n \sum_i^n \mathcal{Q}(x_i)$, sample ${Y} \gets \mathcal{N}\left( \mathcal{Q}(x) , \sigma^2 I\right)$, and output $\hat{Y} := \argmin_{y \in \mathsf{conv}(\{\mathcal{Q}(\check x) : \check x \in \mathcal{X}\})} \|y-{Y}\|_2$, where $\mathsf{conv}(\cdot)$ is the convex hull. 
Then $M$ simultaneously satisfies $\frac12\varepsilon^2$-zCDP and $\varepsilon_0$-$\nabla_0$CDP for $\varepsilon = \frac{\Delta}{\sigma n}$ and $\varepsilon_0 = \frac{\Delta_0}{\sigma n}$.
Furthermore, for all $x \in \mathcal{X}^n$,
\begin{equation}
    \ex{}{\!\frac{1}{m}\! \left\|M(x)\!-\!\mathcal{Q}(x)\right\|_2^2\!} \!\le\! \min\! \left\{  \sigma^2 ,\! \frac{\sigma \!\cdot\! \Delta \!\cdot\! \sqrt{2\log|\mathcal{X}|}}{m} \right\}.\label{eq:projection_mechanism_partial}
\end{equation}
\end{theorem}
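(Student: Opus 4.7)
The plan is to prove the privacy and utility guarantees separately, exploiting that the projection step $\hat Y := \argmin_{y \in K} \|y - Y\|_2$ onto the data-independent convex set $K := \mathsf{conv}(\{\mathcal{Q}(\check x) : \check x \in \mathcal{X}\})$ is just post-processing applied to the noisy vector $Y$.

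For the privacy claims, I would observe that if $x,x' \in \mathcal{X}^n$ differ only at index $i$, then $\mathcal{Q}(x)-\mathcal{Q}(x') = \tfrac{1}{n}(\mathcal{Q}(x_i)-\mathcal{Q}(x'_i))$, whose $\ell_2$ norm is at most $\Delta/n$ in general, and at most $\Delta_0/n$ when additionally $\|x_i - x'_i\|_0 \le 1$, directly from the definitions of $\Delta$ and $\Delta_0$. The standard Gaussian-mechanism analysis then gives that the intermediate output $Y$ simultaneously satisfies $\tfrac12(\Delta/(n\sigma))^2$-zCDP (under the per-person relation) and $\tfrac12(\Delta_0/(n\sigma))^2$-$\nabla_0$CDP (under the per-attribute relation). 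Since $\hat Y$ is a deterministic function of $Y$ and the fixed set $K$, both guarantees pass through to $M$ by post-processing.

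For the first utility bound $\sigma^2$, I would use that $\mathcal{Q}(x) = \tfrac{1}{n}\sum_i \mathcal{Q}(x_i)$ is a convex combination of elements of the generating set of $K$, hence $\mathcal{Q}(x) \in K$. Euclidean projection onto a closed convex set is non-expansive with respect to any point inside the set, so $\|\hat Y - \mathcal{Q}(x)\|_2 \le \|Y - \mathcal{Q}(x)\|_2$. Taking expectations of squared norms and using $Y - \mathcal{Q}(x) \sim \mathcal{N}(0,\sigma^2 I_m)$ yields $\mathbb{E}\|\hat Y - \mathcal{Q}(x)\|_2^2 \le m\sigma^2$, and dividing by $m$ gives the $\sigma^2$ bound.

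For the other term, the key ingredient is the first-order optimality condition of the projection: for every $z \in K$, $\langle Y - \hat Y, z - \hat Y \rangle \le 0$. Applying this with $z = \mathcal{Q}(x) \in K$ and expanding $\|\hat Y - \mathcal{Q}(x)\|_2^2$ via $\hat Y - \mathcal{Q}(x) = (\hat Y - Y) + (Y - \mathcal{Q}(x))$ yields
\[ \|\hat Y - \mathcal{Q}(x)\|_2^2 \;\le\; \langle Z, \hat Y - \mathcal{Q}(x)\rangle, \qquad Z := Y - \mathcal{Q}(x) \sim \mathcal{N}(0,\sigma^2 I_m). \]
Because the linear functional $\langle Z,\cdot\rangle$ attains its maximum on $K$ at an extreme point $\mathcal{Q}(\check x)$ and each $\mathcal{Q}(\check x)-\mathcal{Q}(x)$ has $\ell_2$ norm at most $\Delta$ (both points lie in $K$), the right-hand side is at most $\max_{\check x \in \mathcal{X}} \langle Z, \mathcal{Q}(\check x) - \mathcal{Q}(x)\rangle$. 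The standard maximum-of-Gaussians inequality $\mathbb{E}[\max_{j \le k}\langle Z, v_j\rangle] \le \sigma R\sqrt{2\log k}$ with $R = \Delta$ and $k = |\mathcal{X}|$, followed by dividing by $m$, gives the second term. The main subtlety is centering the Gaussian-max step at $\mathcal{Q}(x)$ itself rather than at the origin, so that the vectors fed into the maximum inequality have norm bounded by the diameter $\Delta$ of $K$, yielding the tight constant $\sqrt{2\log|\mathcal{X}|}$ instead of a weaker $\sqrt{4\log|\mathcal{X}|}$ that would arise from bounding each term separately.
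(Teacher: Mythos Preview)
Your proposal is correct and follows essentially the same approach as the paper: privacy via the Gaussian mechanism plus post-processing, the first utility term via non-expansiveness of projection onto $K \ni \mathcal{Q}(x)$, and the second term via the obtuse-angle/first-order optimality inequality $\langle Y-\hat Y,\mathcal{Q}(x)-\hat Y\rangle \le 0$ followed by the maximum-of-Gaussians bound over the $|\mathcal{X}|$ vertices with radius $\Delta$. The only cosmetic difference is that the paper writes the key inequality as $\|\hat Y-\mathcal{Q}(x)\|_2^2 \le \langle \hat Y - \mathcal{Q}(x), Y - \mathcal{Q}(x)\rangle$ directly, but this is the same expansion you perform.
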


Theorem \ref{thm:projection_mechanism} simply states that the partial DP guarantee of the projection mechanism scales with the partial diameter $\Delta_0$ in place of the diameter $\Delta$; the former could be much smaller, depending on $\cal Q$, as we show next.

Consider the cases of $k$-way (unsigned) conjunctions or parities on $\mathcal{X}=\{0,1\}^d$, which are families of $m={d \choose k}$ queries each. In both cases, $\Delta = \sqrt{m}$ and $\Delta_0 = \sqrt{{d-1 \choose k-1}} = \sqrt{m \cdot k / d}$.
From Theorem \ref{thm:projection_mechanism}, for a given noise scale $\sigma$ of the projection mechanism, the ratio of the per-person and per attribute privacy parameters is $\frac{\varepsilon}{\varepsilon_0} = \frac{\Delta}{\Delta_0} = \sqrt{\frac{d}{k}}$.
The error guarantee of the projection mechanism under standard DP is near-optimal \citep{BunUV14,BlasiokBNS19}. Thus we have a separation --- in this setting we can report a smaller per-attribute privacy parameter $\varepsilon_0$ than the attainable per-person privacy parameter $\varepsilon$.  

Alternatively, if we fix a $\varepsilon_0$-$\nabla_0$CDP guarantee, the MSE is \[\ex{}{\!\frac{1}{m}\! \left\|M(x)\!-\!\mathcal{Q}(x)\right\|_2^2\!} \!\le\! \min\left\{\frac{\Delta_0^2}{\varepsilon_0^2 n^2} , \frac{\Delta_0 \cdot \Delta \cdot \sqrt{2\log|\mathcal{X}|}}{\varepsilon_0 n m}\right\} = \min\left\{\frac{m k}{\varepsilon_0^2 n^2 d} , \frac{ \sqrt{k \cdot 2\log 2}}{\varepsilon_0 n}\right\}.\] In contrast, for $\frac12 \varepsilon^2$-zCDP the MSE is \[\ex{}{\!\frac{1}{m}\! \left\|M(x)\!-\!\mathcal{Q}(x)\right\|_2^2\!} \!\le\! \min\left\{\frac{\Delta^2}{\varepsilon^2 n^2} , \frac{\Delta^2 \cdot \sqrt{2\log|\mathcal{X}|}}{\varepsilon n m}\right\} = \min\left\{\frac{m}{\varepsilon^2 n^2} , \frac{ \sqrt{d \cdot 2\log 2}}{\varepsilon n}\right\}.\] 
That is to say that the MSE under per-attribute partial DP scales with $k$ (the number of attributes that each query can depend on) rather than $d$ (the total number of attributes). 

\begin{proof}[Proof of Theorem \ref{thm:projection_mechanism}]
    The privacy guarantee follows from the properties of Gaussian noise addition and the fact that the sensitivity of $\mathcal{Q}(x)$ with respect to changing one attribute is $\Delta_0/n$ in the 2-norm, along with the postprocessing property of DP. 
    Now we turn to the accuracy analysis, for which we fix an input $x \in \mathcal{X}^n$.
    We have $\ex{}{\|Y-\mathcal{Q}(x)\|_2^2} = m \cdot \sigma^2$ and we must convert this into a guarantee on $\ex{}{\|\hat Y-\mathcal{Q}(x)\|_2^2}$.
    Since $\hat{Y} := \argmin_{y \in \mathsf{conv}(\{\mathcal{Q}(\check x) : \check x \in \mathcal{X}\})} \|y-{Y}\|_2$ and $\mathcal{Q}(x) \in \mathsf{conv}(\{\mathcal{Q}(\check x) : \check x \in \mathcal{X}\})$ we have $\|\hat Y-\mathcal{Q}(x)\|_2 \le \| Y-\mathcal{Q}(x)\|_2$ with probability $1$. That is, the projection step can never increase the error. This yields the first term in the minimum \eqref{eq:projection_mechanism_partial}.
    Now we have a geometric claim: \[\langle \mathcal{Q}(x) - \hat Y , Y - \hat Y \rangle \le 0.\] This again holds with probability 1. (This is most easily seen by drawing a picture.  Formally, if this were false, then for some $0\le t<1$, setting $\hat{y} = (1-t)\mathcal{Q}(x)+t\hat{Y}$ yields $\|\hat{y}-{Y}\|_2 < \|\hat{Y}-{Y}\|_2$, a contradiction.)
    It follows that \[\|\hat Y - \mathcal{Q}(x)\|_2^2 \le \langle \hat Y - \mathcal{Q}(x) , Y - \mathcal{Q}(x) \rangle \le \sup_{\check y \in \mathsf{conv}(\{\mathcal{Q}(\check x) : \check x \in \mathcal{X}\})} \langle \check y - \mathcal{Q}(x) , Y - \mathcal{Q}(x) \rangle, \] and
    \begin{align*}
        \ex{}{\|\hat Y - \mathcal{Q}(x)\|_2^2} &\le \ex{}{\sup_{\check y \in \mathsf{conv}(\{\mathcal{Q}(\check x) : \check x \in \mathcal{X}\})} \langle \check y - \mathcal{Q}(x) , Y - \mathcal{Q}(x) \rangle }\\
        &= \ex{G \gets \mathcal{N}\left(0,\sigma^2 I\right)}{\sup_{\check y \in \{\mathcal{Q}(\check x) : \check x \in \mathcal{X}\}} \langle \check y - \mathcal{Q}(x) , G \rangle }\\
        &\le \sqrt{2 \log |\{\mathcal{Q}(\check x) : \check x \in \mathcal{X}\}|} \cdot  \sigma \cdot \sup_{\check y \in \{\mathcal{Q}(\check x) : \check x \in \mathcal{X}\}} \| \check y - \mathcal{Q}(x) \|_2\\
        &\le \sigma \cdot \Delta \cdot \sqrt{2\log |\mathcal{X}|}.
    \end{align*}
    The penultimate inequality follows from a union bound style argument for bounding the expectation of the maximum of a finite number of Gaussians \citep{kamath2015bounds}. 
\end{proof}

    The $\Delta \cdot \sqrt{\log |\mathcal{X}|}$ term in the guarantee of Theorem \ref{thm:projection_mechanism} can, in general, be replaced by the Gaussian width $\ex{G \gets \mathcal{N}(0,I)}{\sup_{y \in S_x} \langle y , G \rangle}$ of the set $S_x := \{\mathcal{Q}(\check x) - \mathcal{Q}(x) : \check x \in \mathcal{X}\}$.

\subsection{Maximum Error via an Iterative Algorithm}

Rather than average error, we can obtain bounds on the maximum error over all queries. In the standard DP setting, optimal bounds are given by \emph{Private Multiplicative Weights (PMW)} \citep{HardtR10} and its refinement the \emph{Multiplicative Weights Exponential Mechanism (MWEM)} \cite{HardtLM10}.
We now look at a per-attribute partial DP version of MWEM, $M_{\ell}: {\cal X}^n \rightarrow [0,1]^m$,  given in Algorithm \ref{alg:pdp-mwem}.

In this section we assume that the domain ${\cal X} = {\cal X}_1 \times \cdots\times {\cal X}_d$ of the queries is finite and the range of the queries is $[0, 1]$.  For $x \in {\cal X}^n$, let
$q(x) = \frac1n \sum_{i = 1}^n q(x_i)$ and for a distribution $D$ on ${\cal X}$, let
$q(D) = \ex{u \sim D}{q(u)}$.  

For each $q \in \mathcal{Q}$, let $\mathsf{attr}(q) \subseteq [d]$ denote the attributes that $q$ depends on, i.e., this satisfies the property that, for any $x,x' \in \mathcal{X}$, if $x_i = x'_i$ for all $i \in \mathsf{attr}(q)$, then $q(x)=q(x')$.

%
%

\begin{algorithm}[h]
\small
\caption{Partial DP MWEM, $M_{\ell}: {\cal X}^n \rightarrow [0,1]^m$.} \label{alg:pdp-mwem}
\begin{algorithmic}[1]
    \STATE \textbf{Input:} Private dataset $x \in \mathcal{X}^n$.
    \STATE \textbf{Parameters:} Set $\mathcal{Q}$ of $m$ queries defined by functions $q : \mathcal{X} \to [0,1]$,  privacy parameter $\varepsilon_0>0$, iterations $T \in \mathbb{N}$, and  queries per round $\ell \in \mathbb{N}$.
    \STATE Set $\varepsilon_T \gets \varepsilon_0/\sqrt{2T}$ and $A_1 \gets$  uniform distribution on $\mathcal{X}$.
    \FOR{ $t=1,\dots,T$:}
    \STATE Select $\{q_{t,1}, \dots, q_{t,\ell}\} \subseteq \mathcal{Q}$ with $\mathsf{attr}(q_{t,i}) \cap \mathsf{attr}(q_{t,j}) = \emptyset$ for all $1 \le i < j \le \ell$ in a $\varepsilon_T$-$\nabla_0$CDP manner using the exponential mechanism, 
i.e., \[\pr{}{(q_{t,1},\dots,q_{t,\ell})=(q_1,\dots,q_\ell)} \propto \exp\left({\varepsilon_T n} \sum_{i=1}^\ell |q_i(A_t)-q_i(x)| \right).\]
    \STATE Answer $q_{t,1}, \dots, q_{t,\ell}$ with $\varepsilon_T$-$\nabla_0$CDP by sampling $a_{t,i} \gets \mathcal{N}\left(q_{t,i}(x), \frac{1}{n^2\varepsilon_T^2}\right)$ for all $i \in [\ell]$. (But, if $a_{t,i} > 1$, set $a_{t,i} \gets 1$ and if $a_{t,i}<0$, set $a_{t,i} \gets 0$.)
    \STATE Define the distribution $A_{t+1}$ on $\mathcal{X}$ by \[A_{t+1}(u) \propto A_t(u) \cdot \exp\left(\frac12 \cdot \sum_{i=1}^{\ell} q_{t,i}(u) \cdot (a_{t,i} - q_{t,i}(A_t)) \right).\]
    \ENDFOR
    \STATE $A \gets \frac1T \sum_{t=1}^T A_t$.
    \STATE \textbf{Output:} $\ex{X \sim A}{\mathcal{Q}(X)}$.
\end{algorithmic}
\end{algorithm}

\begin{theorem}\label{thm:pdp-mwem}
    $M_\ell$ in Algorithm~\ref{alg:pdp-mwem} satisfies $\varepsilon_0$-$\nabla_0$CDP and simultaneously $\frac12 (\ell\cdot\varepsilon_0)^2$-zCDP and, for all inputs $x \in \mathcal{X}^n$, we have 
    \begin{equation}\label{eq:pdp-mwem-T}
    \ex{}{\max_{q_1,\dots, q_\ell \in \mathcal{Q} \atop \forall i<j ~\mathsf{attr}(q_i)\cap\mathsf{attr}(q_j)=\emptyset}  \frac1\ell \sum_{i=1}^\ell |q_i(A)-q_i(x)| } \le \sqrt{ \frac{2T}{n^2 \varepsilon_0^2} + \frac{4 \log |\mathcal{X}|}{T \cdot \ell} } + \frac{\sqrt{2T}}{\varepsilon_0 n} \log m.
    \end{equation} 
    If $T=\Theta\left(\frac{\sqrt{\log|\mathcal{X}|} \cdot \varepsilon n }{\sqrt{\ell} \cdot \log m}\right)$, then the RHS of (\ref{eq:pdp-mwem-T}) is
    $O\left(\sqrt{\frac{\sqrt{\log |\mathcal{X}|}\cdot \log m}{\sqrt{\ell} \cdot \varepsilon n}}\right)$.
\end{theorem}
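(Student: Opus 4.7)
The plan is to verify that each of the $T$ iterations consists of two sub-mechanisms, each satisfying $\varepsilon_T$-$\nabla_0$CDP, then invoke zCDP composition (Lemma~\ref{lem:composition}) over the $2T$ sub-mechanisms. Because the $\ell$ queries selected in each round have pairwise disjoint attribute sets, changing a single attribute of one record affects at most one of the $\ell$ query answers; thus the exponential-mechanism score $n\sum_i |q_i(A_t) - q_i(x)|$ has per-attribute sensitivity at most $1$, and the joint vector of Gaussian answers has per-attribute $\ell_2$ sensitivity at most $1/n$. The Gaussian mechanism with $\sigma = 1/(n\varepsilon_T)$ is $\frac12\varepsilon_T^2$-zCDP, and the exponential mechanism with weights $\exp(\varepsilon_T n \cdot u)$ is $\frac12\varepsilon_T^2$-zCDP by the tight exp-mech-to-zCDP conversion. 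Composing $2T$ sub-mechanisms with $\varepsilon_T = \varepsilon_0/\sqrt{2T}$ gives $\varepsilon_0$-$\nabla_0$CDP. For the per-person zCDP bound, changing a full record can affect all $\ell$ answers, so the score sensitivity becomes $\ell$ and the $\ell_2$ sensitivity $\sqrt{\ell}/n$; each sub-mechanism is then at worst $\frac12 \ell^2 \varepsilon_T^2$-zCDP, and composition yields $\frac12 (\ell\varepsilon_0)^2$-zCDP.

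\textbf{Accuracy.} I would adapt the MWEM potential argument of~\cite{HardtLM10} to the $\ell$-query-per-round update. Let $D$ denote the empirical distribution of $x$ on $\mathcal{X}$ and define the potential $\Phi_t := \mathrm{KL}(D \| A_t)$; then $\Phi_1 \le \log|\mathcal{X}|$ since $A_1$ is uniform, and $\Phi_{T+1}\ge 0$. The central claim is a per-round decrease of the form
\begin{equation*}
\Phi_t - \Phi_{t+1} \;\ge\; \tfrac12 \sum_{i=1}^{\ell}(a_{t,i}-q_{t,i}(A_t))(q_{t,i}(D)-q_{t,i}(A_t)) \;-\; \tfrac{1}{32}\sum_{i=1}^{\ell}(a_{t,i}-q_{t,i}(A_t))^2,
\end{equation*}
obtained by evaluating the normalization constant of $A_{t+1}$ and applying a Hoeffding-type log-moment bound to the $\ell$-term exponential. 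Writing $a_{t,i} - q_{t,i}(A_t) = (q_{t,i}(D) - q_{t,i}(A_t)) + (a_{t,i} - q_{t,i}(D))$, telescoping, and taking expectations (using that each Gaussian noise has mean zero, is independent of $A_t$, and has variance $1/(n\varepsilon_T)^2 = 2T/(n\varepsilon_0)^2$) would give a bound of the form $\mathbb{E}\!\left[\tfrac{1}{T\ell}\sum_{t,i}(q_{t,i}(A_t)-q_{t,i}(D))^2\right] \le O(\log|\mathcal{X}|/(T\ell) + 2T/(n\varepsilon_0)^2)$. Combining with the exponential mechanism's utility guarantee (the selected tuple's sum-of-errors is within $O(\ell\log m/(n\varepsilon_T)) = O(\sqrt{2T}\ell\log m/(\varepsilon_0 n))$ of the maximum over all $\le m^\ell$ disjoint-attribute tuples) and applying Jensen's inequality through the concave $\sqrt{\cdot}$ -- after dividing by $\ell$ to get the per-query average -- yields the RHS of~\eqref{eq:pdp-mwem-T}. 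Convexity of the max-error functional in the distribution allows pushing the time-average inside, giving the stated bound for $A = \frac{1}{T}\sum_t A_t$. The optimal $T$ balances the two $T$-dependent terms.

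\textbf{Main obstacle.} The hardest step is establishing the per-round potential decrease with $\ell$ simultaneous updates without losing a factor of $\ell$: a naive Hoeffding bound on the log-moment-generating function of $\tfrac12 \sum_i q_{t,i}(U) r_{t,i}$ under $U \sim A_t$ gives a second-order term proportional to $(\sum_i |r_{t,i}|)^2 \le \ell \sum_i r_{t,i}^2$, which would destroy the $\log|\mathcal{X}|/(T\ell)$ rate inside the square root. The disjoint-attribute constraint is what enables the sharper $\sum_i r_{t,i}^2$ form: the $\ell$ parallel multiplicative-weights updates can be re-expressed as $\ell$ sequential single-query updates, each of which contributes an independent second-order term $\tfrac{1}{32} r_{t,i}^2$; the key subtlety is that the intermediate distributions in the sequential chain agree with $A_t$ on the marginal of each $\mathsf{attr}(q_{t,i})$, because the updates for the other queries modify only the marginal on disjoint attribute sets. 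Making this decoupling rigorous -- without assuming $A_t$ itself is a product distribution -- is the main technical hurdle of the accuracy analysis.
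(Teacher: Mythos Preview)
Your proposal follows the same architecture as the paper's proof: privacy by composing $2T$ sub-mechanisms (exponential mechanism + Gaussian) with attribute-disjointness bounding the per-attribute sensitivity at $1/n$ and per-record sensitivity at $\ell/n$; accuracy via the KL potential $\Phi_t=\dr{1}{x}{A_t}$, a per-round decrease lemma, the exponential-mechanism utility bound over the $\le m^\ell$ attribute-disjoint tuples, convexity to pass to the average $A=\tfrac1T\sum_t A_t$, and Jensen's inequality to extract the square root.

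Two small differences are worth flagging. First, the paper invokes the potential-drop lemma in the form $\Psi(A)-\Psi(A')\ge\tfrac14(q(A)-q(x))^2-\tfrac14(a-q(x))^2$ (their quoted lemma from \cite{HardtLM10}), rather than your first-order form $\tfrac12(a-q(A))(q(x)-q(A))-\tfrac{1}{32}(a-q(A))^2$. These are interchangeable up to constants, but the paper's version is cleaner here: the noise sits entirely in $(a_{t,i}-q_{t,i}(x))^2$, whose expectation is at most the Gaussian variance $1/(n\varepsilon_T)^2$ even after clipping $a_{t,i}$ to $[0,1]$ (clipping toward $q_{t,i}(x)\in[0,1]$ only shrinks it). Your route instead uses ``mean zero'' to kill the cross term $(a_{t,i}-q_{t,i}(x))(q_{t,i}(x)-q_{t,i}(A_t))$, which the clipping spoils; this is fixable but the paper's form sidesteps it.

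Second, on the $\ell$-query step, the paper does exactly what you propose: it rewrites the simultaneous update as a chain of $\ell$ single-query updates $A_{t,0}\to\cdots\to A_{t,\ell}$ and applies the single-query lemma to each, telescoping to $\tfrac14\sum_i(q_{t,i}(A_t)-q_{t,i}(x))^2-\tfrac14\sum_i(a_{t,i}-q_{t,i}(x))^2$. The paper justifies this only with the sentence ``because the queries we select in each round are attribute-disjoint, we can treat the update using the sum of the queries as identical to a series of sequential updates,'' without spelling out why the lemma applied at $A_{t,i-1}$ yields $q_{t,i}(A_t)$ rather than $q_{t,i}(A_{t,i-1})$ in the bound. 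Your diagnosis that this is the delicate point, and that it is not automatic for non-product $A_t$, is accurate; the paper's treatment of this step is at least as terse as your sketch.
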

Note that the LHS of (\ref{eq:pdp-mwem-T}) refers to the maximum average error over an arbitrary set of $\ell$ attribute-disjoint queries; while this is a weaker bound than the maximum error over all individual queries, it is stronger than MSE. We leave it as an open problem to close this gap and obtain a standard maximum error bound.

The main difference between Algorithm~\ref{alg:pdp-mwem} and standard MWEM is that in each iteration we sample $\ell$ attribute-disjoint queries in parallel, rather than a single query; setting $\ell=1$ recovers the standard MWEM algorithm and bound. For $\ell>1$, the RHS of (\ref{eq:pdp-mwem-T}) gets smaller, but the LHS also changes. 

For the case of $k$-way conjunctions on $\mathcal{X}=\{0,1\}^d$, setting $\ell=\lfloor d/k \rfloor$ obtains the error bound of $\alpha = O\left(k^{\frac34} \sqrt{\frac{\log d}{\varepsilon_0 n}}\right)$; in contrast, the corresponding bound under standard DP is 
$\alpha = O\left(k^{\frac12} d^{\frac14}  \sqrt{\frac{\log d}{\varepsilon_0 n}}\right)$.  Thus, the partial DP bound is 
near-independent of $d$ and instead depends mainly on $k$.

\begin{proof}[Proof of Theorem \ref{thm:pdp-mwem}]
    The privacy guarantee follows from composition over rounds (Lemma \ref{lem:composition}) and the privacy of Gaussian noise addition and the exponential mechanism \citep{DPorg-exponential-mechanism-bounded-range}. Note that $\sum_{i=1}^\ell |q_{t,i}(A_t)-q_{t,i}(x)|$ has sensitivity $1/n$ in the setting of partial DP as long as they are attribute disjoint, i.e., $\mathsf{attr}(q_{t,i}) \cap \mathsf{attr}(q_{t,j}) = \emptyset$ for all $1 \le i < j \le \ell$. That is, changing one attribute of one individual in $x$ can only change one of the $q_{t,i}(x)$ terms by $1/n$.
    The sensitivity with respect to changing an entire record is $\ell/n$, as each of the $\ell$ terms may change by $1/n$.
    
    Now we delve into the accuracy analysis. Fix an input $x \in \mathcal{X}^n$.  We use a potential function:
    \[\Psi(A) := \dr{1}{x}{A} = \sum_{u \in \mathcal{X}} x(u) \log\left(\frac{x(u)}{A(u)}\right),\]
    where we view $x$ as a probability distribution: $x(u) = \frac1n |\{i \in [n] : x_i = u\}|$.

    By the standard properties of KL divergence we have $\Psi(A_t) \ge 0$ for all $t \in [T+1]$ and we have $\Psi(A_1) \le \log |\mathcal{X}|$.
    Next we have a simple technical lemma \cite[Lemma A.4]{HardtLM10}:
    
    \begin{lemma}[Change in potential function]
    Let $A$ and $A'$ and $x$ be probability distributions on $\mathcal{X}$. Suppose \[\forall u \in \mathcal{X} ~~~~ A'(u) \propto A(u) \cdot \exp \left( \frac12 \cdot q(u) \cdot (a-q(u)) \right),\] where $q : \mathcal{X} \to [0,1]$ and $a \in [0,1]$.
    Define $\Psi(A)=\sum_{u \in \mathcal{X}} x(u) \log(x(u)/A(u))$ and, similarly, $\Psi(A')=\sum_{u \in \mathcal{X}} x(u) \log(x(u)/A'(u))$.
    Then \[\Psi(A)-\Psi(A') \ge \frac14 (q(A)-q(x))^2 - \frac14 (a-q(x))^2.\]
    \end{lemma}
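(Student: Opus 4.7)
The hypothesis as typeset reads $A'(u) \propto A(u)\exp(\tfrac12 q(u)(a - q(u)))$, whereas the update actually used in Step~7 of Algorithm~\ref{alg:pdp-mwem} is $A_{t+1}(u)\propto A_t(u)\exp(\tfrac12 q(u)(a - q(A_t)))$ (for $\ell=1$). The claimed bound is exactly the classical MWEM potential drop, which holds for the latter, so I shall interpret the second $q(u)$ in the hypothesis as $q(A)$ and plan the proof accordingly; I address at the end what obstructs a proof of the literal wording.

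\textbf{The three-step potential argument.} Let $Z := \sum_{u\in\mathcal{X}} A(u)\exp(\tfrac12 q(u)(a-q(A)))$ denote the partition function, so that $\log A'(u) - \log A(u) = \tfrac12 q(u)(a-q(A)) - \log Z$ for every $u$. Averaging this identity against $x$ gives
\[
\Psi(A) - \Psi(A') \;=\; \tfrac12\,(a-q(A))\,q(x) \;-\; \log Z.
\]
Because $q(u), a, q(A)\in[0,1]$, the exponent lies in $[-\tfrac12,\tfrac12]$, so the elementary inequality $e^y \le 1+y+y^2$ on $|y|\le 1$, combined with $\log(1+t)\le t$ and $\ex{u\sim A}{q(u)^2}\le 1$, yields
\[
\log Z \;\le\; \tfrac12\,(a-q(A))\,q(A) \;+\; \tfrac14\,(a-q(A))^2.
\]
Subtracting,
\[
\Psi(A) - \Psi(A') \;\ge\; \tfrac12\,(a-q(A))\bigl(q(x) - q(A)\bigr) \;-\; \tfrac14\,(a-q(A))^2;
\]
and writing $\alpha = q(A) - q(x)$, $\beta = a - q(x)$ (so $a - q(A) = \beta - \alpha$), the right-hand side expands to $-\tfrac12\alpha\beta + \tfrac12\alpha^2 - \tfrac14(\beta - \alpha)^2 = \tfrac14\alpha^2 - \tfrac14\beta^2$, which is exactly $\tfrac14(q(A) - q(x))^2 - \tfrac14(a - q(x))^2$, as claimed.

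\textbf{Main obstacle.} If one takes the hypothesis literally with the second $q(u)$ retained, then the potential identity becomes $\Psi(A) - \Psi(A') = \tfrac12 a\,q(x) - \tfrac12\,\ex{u\sim x}{q(u)^2} - \log Z$, and any $e^y\le 1+y+y^2$ style bound on $\log Z$ leaves residual terms $\tfrac12\ex{u\sim x}{q(u)^2}$ and $\tfrac12\ex{u\sim A}{q(u)^2}$ that cannot be merged into a clean function of $q(x)$, $q(A)$, $a$ alone without extra assumptions on $q$. Thus the sole real obstacle to the proof is recognising the $q(u)\leftrightarrow q(A)$ typo in the hypothesis; once that is reconciled with Step~7 of Algorithm~\ref{alg:pdp-mwem}, every remaining step is elementary quadratic algebra.
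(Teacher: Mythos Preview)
Your diagnosis of the typo is correct: the lemma as written, with $a-q(u)$ in place of $a-q(A)$, is false (take $\mathcal{X}=\{0,1\}$, $q=\mathrm{id}$, $a=1$, $A$ uniform, $x=\delta_1$: then $A'=A$ but the right-hand side is $1/16$), and the intended statement is the one matching Step~7 of Algorithm~\ref{alg:pdp-mwem}. Your proof of the corrected statement is the standard multiplicative-weights potential argument and is correct in all details; the paper does not supply its own proof but simply cites \cite[Lemma~A.4]{HardtLM10}, whose argument is exactly the one you reproduce.
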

    
    We also need the following claim: 
    Suppose  $\{q_{t,1}, \cdots, q_{t,\ell}\} \subseteq \mathcal{Q}$ satisfy $\mathsf{attr}(q_{t,i}) \cap \mathsf{attr}(q_{t,j}) = \emptyset$ for all $1 \le i < j \le \ell$. Let $A_t=A_{t,0}$ be an arbitrary probability distribution on $\mathcal{X}$.
    For $i \in [\ell]$, iteratively define a probability distribution $A_{t,i}$ on $\mathcal{X}$ by $A_{t,i}(u) \propto A_{t,i-1}(u) \cdot \exp\left(\frac12 \cdot q_{t,i}(u) \cdot (a_{t,i} - q_{t,i}(A_t)) \right)$.
    Then $A_{t,\ell}(u) = A_{t+1}(u) \propto A_t(u) \cdot \exp\left(\frac12 \cdot \sum_i^{\ell} q_{t,i}(u) \cdot (a_{t,i} - q_{t,i}(A_t)) \right)$ for all $u$.
    Intuitively the claim says that, because the queries we select in each round are attribute-disjoint, we can treat the update using the sum of the queries as identical to a series of sequential updates.
    
    Putting everything together we have \[\log |\mathcal{X}| \ge \Psi(A_1)-\Psi(A_{T+1}) = \sum_{t=1}^T \Psi(A_t)-\Psi(A_{t+1}) \ge \frac14 \sum_{t = 1}^T \sum_{i = 1 }^\ell (q_{t,i}(A_t)-q_{t,i}(x))^2 - (a_{t,i}-q_{t,i}(x))^2.\]
    For all $i \in [\ell]$ and $t \in [T]$, the quantity $a_{t,i}-q_{t,i}(x)$ is distributed according to a truncated version of $\mathcal{N}(0,1/n^2\varepsilon_T^2)$. Thus $\ex{}{(a_{t,i}-q_{t,i}(x))^2} \le 1/n^2\varepsilon_T^2 = 2T/n^2\varepsilon_0^2$. Rearranging and applying Jensen's inequality gives the bound \[\ex{}{\frac1T \sum_{t = 1}^T \frac1\ell \sum_{i = 1 }^\ell |q_{t,i}(A_t)-q_{t,i}(x)|}^2 \le \ex{}{\frac1T \sum_{t = 1}^T \frac1\ell \sum_{i = 1 }^\ell (q_{t,i}(A_t)-q_{t,i}(x))^2} \le \frac{2T}{n^2 \varepsilon_0^2}+ \frac{4 \log |\mathcal{X}|}{T \cdot \ell}.\]
    
    Next we invoke the accuracy guarantee of the exponential mechanism \cite[Lemma 7.1]{BassilyNSSSU16}:
    \[\forall t ~~~ \ex{}{\sum_{i=1}^\ell |q_{t,i}(A_t)-q_{t,i}(x)|} \ge \max_{q_1,\cdots, q_\ell \in \mathcal{Q} \atop \forall i<j ~\mathsf{attr}(q_i)\cap\mathsf{attr}(q_j)=\emptyset} \sum_{i=1}^\ell |q_i(A_t)-q_i(x)| -  \frac{1}{\varepsilon_T n} \log |\mathcal{Q}^\ell|.\]
    
    Now we have
    \begin{align*}
        \ex{}{\max_{q_1,\cdots, q_\ell \in \mathcal{Q} \atop \forall i<j ~\mathsf{attr}(q_i)\cap\mathsf{attr}(q_j)=\emptyset}  \frac1\ell \sum_{i=1}^\ell |q_i(A)-q_i(x)| } &= \ex{}{ \max_{q_1,\cdots, q_\ell \in \mathcal{Q} \atop \forall i<j ~\mathsf{attr}(q_i)\cap\mathsf{attr}(q_j)=\emptyset}  \frac1\ell \sum_{i=1}^\ell \left| \frac1T \sum_{t=1}^T q_i(A_t)-q_i(x) \right| } \\
         &\le \ex{}{ \frac1T \sum_{t=1}^T \max_{q_1,\cdots, q_\ell \in \mathcal{Q} \atop \forall i<j ~\mathsf{attr}(q_i)\cap\mathsf{attr}(q_j)=\emptyset}  \frac1\ell \sum_{i=1}^\ell \left| q_i(A_t)-q_i(x) \right| }\\
         &\le \ex{}{ \frac1T \sum_{t=1}^T \frac1\ell \sum_{i=1}^\ell |q_{t,i}(A_t)-q_{t,i}(x)| + \frac{1}{\varepsilon_T n \ell} \log |\mathcal{Q}^\ell| }\\
         &\le \sqrt{ \frac{2T}{n^2 \varepsilon_0^2} + \frac{4 \log |\mathcal{X}|}{T \cdot \ell} } + \frac{\sqrt{2T}}{\varepsilon_0 n} \log |\mathcal{Q}|.
         \qedhere
    \end{align*}
\end{proof}

We remark that both the projection mechanism and the multiplicative weights exponential mechanism are not polynomial time algorithms in general. This limitation is not specific to partial DP. Under standard cryptographic assumptions it is known to be computationally infeasible to generate synthetic data (as Algorithm \ref{alg:pdp-mwem} does) even for $2$-way conjunctions \cite{ullman2011pcps}. This computational hardness is also not specific to synthetic data \cite{kowalczyk2018hardness}. However, for the special case of $k$-way conjunctions or parities it remains an open problem to devise a polynomial-time algorithm that comes close to matching the guarantees of MWEM, or to prove an impossibility result. There has been some limited progress on devising efficient versions of the projection mechanism \cite{dwork2014using}.
\section{Histograms, Heavy Hitters \& Applications}\label{sec:heavy_hitters}

We consider the fundamental problem of computing a histogram or, equivalently, of computing the heavy hitters, which is well-studied in DP in various models and settings \cite{DworkMNS06,MishraS06,KorolovaKMN09,HardtT10,HsuKR12,ErlingssonPK14,BassilyS15,BunNS19,BunNS19hhldp,BalcerC20,BalcerCJM21}.

\begin{definition}[Histogram Problem]
In the \emph{histogram} problem, the input dataset consists of $x_1, \dots, x_n \in \{0, 1\}^d$; the frequency of an element $y \in \{0, 1\}^n$ is defined as $f_y := |\{i \in [n] : x_i = y\}|$.  An algorithm is said to solve the histogram problem with (normalized $\ell_\infty$) error $\nu \in (0, 1)$ if it outputs $\{\hf_y\}_{y \in \{0, 1\}^d}$ such that $\max_y |\hf_y - f_y| \leq \nu n$.
\end{definition}

Computing histograms is generally an easy task as far as privacy is concerned. (Although it can be challenging when we combine privacy with computational or communication constraints in a distributed setting.)
The canonical algorithm is to add independent noise to each count $f_y$ and apply parallel composition -- each individual only contributes to one count.
To attain pure DP we would add Laplace noise. For Concentrated DP, we would add Gaussian noise. And, for approximate DP, we could add truncated Laplace or Gaussian noise.

However, this canonical algorithm very closely resembles the worst-case algorithm envisaged by the DP definition.
Suppose an adversary wishes to determine whether or not $x_i=y$. That is, the adversary seeks to learn one bit about individual $i$.
Often the histogram is sparse, so no other individual $j \in [n] \setminus\{i\}$ has $x_j=y$. Thus it suffices for the adversary to figure out whether $f_y=1$ or whether $f_y=0$ .
If we add Laplace noise to attain $\varepsilon$-DP with large $\varepsilon$, then the adversary can easily distinguish between these two cases. Namely, we would add $\xi_y \gets \mathsf{Laplace}(1/\varepsilon)$ to $f_y$ and $\pr{}{|\xi_y|\ge\frac12} = e^{-\varepsilon/2}$. So, with probability $1-e^{-\varepsilon/2}$, rounding the private value $f_y+\xi_y$ to the nearest integer returns the non-private value $f_y$. In particular, if $\varepsilon=10$, rounding returns the true count with probability $>99\%$.

Given that histograms are a well-studied problem and the canonical algorithm yields the kind of worst-case privacy outcomes that we want to avoid, it is natural to ask whether we can design a per-attribute partial DP algorithm for histograms that avoids this worst-case behaviour. In terms of our intuitive justification for tolerating large $\varepsilon$, the canonical algorithm is not ``nice'' and the question is whether we can devise a ``nice'' algorithm for histograms.

We restrict our attention to pure DP, both for simplicity and because this highlights the strength of our results -- we are able to obtain these results under the most stringent form of DP. Naturally, our methods can be extended to Concentrated DP etc.

We provide the following result. We also provide a nearly-matching lower bound later in Theorem \ref{thm:heavy-hitters-lb}.

\begin{theorem} \label{thm:heavy-hitters-main}
Let $n \geq O\left(\frac{1}{\eps\nu} \cdot \log(d/\eta) \cdot \log(1/\nu)\right)$. Then, there exists an $\eps$-\PDP algorithm that with probability $1 - \eta$ solves the histogram problem with error $\nu$. Moreover, the algorithm runs in expected time $\poly(nd/(\eta \nu))$.
\end{theorem}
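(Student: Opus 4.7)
The plan is to use a prefix-tree algorithm that maintains, for each level $\ell = 0, 1, \ldots, d$, a set $S_\ell$ of candidate heavy prefixes of length $\ell$, starting from $S_0 = \{\emptyset\}$. At each level, extend every $c \in S_{\ell-1}$ to its two children $c\|0, c\|1$, privately test whether the count $f_{c\|b}$ exceeds the heavy-hitter threshold $\nu n/2$ using a noisy comparison, and keep those that pass as $S_\ell$. Since at most $1/\nu$ prefixes of any fixed length can have count $> \nu n/2$, we get $|S_\ell| \le 1/\nu$ and hence at most $2/\nu$ comparisons per level. At the end, output a noisy count $\hf_y$ for each $y \in S_d$ and $\hf_y = 0$ for all other $y \in \{0,1\}^d$.

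For privacy under $\eps$-$\nabla_0$DP, I would apply the sparse vector technique (SVT), so that only positive comparisons incur a privacy cost. The key observation is that a single attribute flip at bit position $j$ leaves every query at level $\ell < j$ completely unchanged and, at each level $\ell \ge j$, changes exactly two comparison queries by $\pm 1$ (namely, the counts of the old and new prefix of the affected record), so each query has $\nabla_0$-sensitivity $1$. I would then structure the algorithm in $\log(1/\nu)$ ``doubling'' epochs, one per threshold level $n/2, n/4, \ldots, \nu n$, each allocated budget $\eps/\log(1/\nu)$; within each epoch, SVT with appropriately scaled Laplace noise should yield per-test error of order $\log(d/\eta)\log(1/\nu)/\eps$, and the target sample complexity $O(\log(d/\eta)\log(1/\nu)/(\eps\nu))$ then drops out by insisting this error is below $\nu n/4$.

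The hardest part is achieving the exponential improvement from $d$ to $\log d$: a naive group-privacy treatment that accounts for the $O(d)$ affected queries per bit flip separately yields $n = \tilde{O}(d/(\eps\nu))$, matching only the standard pure-DP bound. To shave the factor $d$ down to $\log d$, one must exploit the structural fact that a bit flip perturbs only one specific pair of ``sibling'' root-to-leaf paths in the prefix tree, so the correlated queries should be accounted for as a single event with a joint SVT analysis rather than summed via generic group privacy. The utility claim then follows from a Laplace tail bound together with a union bound over the $O(d/\nu)$ candidate prefixes ever examined, and the expected runtime $\poly(nd/(\eta\nu))$ follows because only those $O(d/\nu)$ prefixes are processed, each in $O(n)$ time, with a rejection-sampling overhead absorbed into the expectation.
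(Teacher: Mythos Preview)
There is a genuine structural gap in your plan. Your prefix tree processes bits left to right over $d$ levels, so flipping bit $j$ of one record perturbs the count queries at every level $\ell \ge j$, i.e.\ up to $d$ levels when $j=1$. You correctly flag that naive composition over these $O(d)$ perturbed queries recovers only the standard $\tilde O(d/(\eps\nu))$ bound, but the fix you sketch --- SVT plus $\log(1/\nu)$ doubling epochs and a ``joint analysis'' of the affected path --- does not close the gap. SVT charges for each above-threshold answer, and along the perturbed root-to-leaf path there can be $\Theta(d)$ positives (this happens whenever the affected record's prefixes remain heavy for many levels, which is the typical case). Your doubling epochs control the $\log(1/\nu)$ dependence, not the $d$-versus-$\log d$ one, and no SVT variant collapses a length-$d$ chain of sensitivity-$1$ queries with many positives to $O(1)$ privacy cost.

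The idea you are missing is a \emph{different tree}. The paper does not build a depth-$d$ prefix tree over bit values; it builds a depth-$\log d$ binary tree over attribute \emph{positions}: each node is an interval $I \subseteq [d]$, the list $L_I$ holds the heavy hitters of $\{x_i|_I\}_i$, and a parent's candidate list is $L_{\Ileft} \circ L_{\Iright}$. The point is that any single attribute $j$ lies in exactly one interval per level and hence in only $\log d$ nodes total, so a one-bit flip touches only $O(\log d)$ count queries. Plain composition across those already yields $n = \tilde O((\log d)^2/(\eps\nu))$; the remaining $\log d$ factor is removed with the PrivTree biasing-and-capping trick (clamp the count to $\max\{f_s^I,\tau_\ell-\mu\}$ and let the threshold $\tau_\ell$ drift by $\mu$ per level), which bounds the cumulative privacy loss along a monotone path by $O(1/\lambda)$ rather than $O((\text{path length})/\lambda)$. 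That trick requires the total drift $\mu \cdot (\text{path length})$ to stay below $\tau \approx \nu n$, which is feasible for path length $\log d$ but would force $n = \Omega(d/(\eps\nu))$ on a depth-$d$ prefix tree --- so even granting you a PrivTree-style ``joint'' argument, your tree structure cannot reach the target bound.
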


This sample complexity bound should be compared with the one in the standard $\eps$-DP setting, which is $n = \Theta\left(\frac{d}{\eps \nu}\right)$.\footnote{The standard $\frac12\varepsilon^2$-zCDP sample complexity is $n =\Theta\left(\frac{\sqrt{d}}{\eps \nu}\right)$ and under $(\varepsilon,\delta)$-DP it is $n = \Theta\left(\frac{\min\{d,\log(1/\delta)\}}{\eps \nu}\right)$.} In other words, there is an exponential separation (in terms of the dimension) between the standard DP parameter and the per-attribute partial DP parameter. 

\subsection{Our Algorithm}

In this section we design a novel algorithm for the histogram problem under per-attribute partial DP.  
Our algorithm will in fact output a succinct representation of $\{\hf_y\}_{y \in \{0, 1\}^d}$ by outputting a list $L \subset \{0, 1\}^d$ and $\{\hf_y\}_{y \in L}$, where $\hf_y = 0$ for every $y \notin L$ (this is needed for efficiency).

The main component of our algorithm is an $\eps$-\PDP algorithm for finding \emph{heavy hitters}---i.e., a list $L$ that is ``not too large'' and contains all $y$ with $f_y \geq \nu n$.
The algorithm builds \emph{a binary tree over attributes}, i.e., each node of the tree corresponds to an interval $I \subseteq [d]$ whose length is a power of two. Each node stores a list $L_I$ of heavy hitters among $x_1,\dots,x_n$ restricted to the attributes indexed by $I$. Each leaf stores heavy hitters of a single attribute; then the next level nodes store heavy hitters among pairs of attributes; and the root has the final list of heavy hitters. The list $L_I$ can be constructed by estimating the substring frequencies of $y_1 \circ y_2$ for a shortlist of all pairs $y_1, y_2$ that appears as heavy hitters of its children. The key is that each attribute $j \in [d]$ only appears in $\log d$ intervals---one in each level of the tree. This means that we only have to divide the privacy budget over $\log d$ levels, resulting in a noise of roughly $O(\log(d) / \eps)$ per level. (Under standard DP, we would need to apply composition over all $2d-1$ nodes, yielding noise scale $O(d/\eps)$).

Unfortunately, implementing the binary tree directly requires sample complexity $n \geq O_{\eta, \nu}((\log d)^2 / \eps)$ because we need to take a union bound over all $2d-1$ nodes to bound the probability that a heavy hitter is erroneously dropped, which contributes another factor of $\log d$ in addition to the one we get from composition over levels. To get from here to $O_{\eta, \nu}(\log(d)/ \eps)$ as claimed in \Cref{thm:heavy-hitters-main}, we employ yet another technique to save on privacy loss introduced by\citet{ZhangXX16} for their \emph{PrivTree} algorithm. The idea is roughly that when a count is far above the threshold, the privacy loss is actually much smaller than usual. Therefore, they introduce ``biasing'' and capping techniques, which can be thought of as lowering the threshold by a certain amount $\mu$ at each level and clipping the count to be at least the threshold minus $\mu$, respectively. %
We employ these ideas to shave a $\log d$ factor.

To formally describe our algorithm, we assume, without loss of generality, that $d$ is a power of two and define several additional notations:
\begin{itemize}[nosep]
\item For any $I := \{a, \dots, b\}$, we let $\Ileft := \left\{a, \dots, \lfloor \frac{a+b}{2} \rfloor\right\}$ and $\Iright := \left\{\lfloor \frac{a+b}{2} \rfloor + 1, \dots, b\right\}$.
\item For every $I := \{a, \dots, b\}$ and $s \in \{0, 1\}^{|I|}$, we define the frequency of $s$ w.r.t. position $I$ as $f^I_s := |\{i \in [n] \mid x^i|_I = s\}|$.
\item For every $\ell \in [\log d]$, let $\cI_\ell$ denote the collection of all sets $\{2^\ell (t - 1) + 1, \dots, 2^\ell t\}$ where $t \in [d / 2^\ell]$. Furthermore, we let $\cI := \bigcup_{\ell \in [\log d]} \cI_\ell$.
\item For two sets $S_1, S_2$ of strings, let $S_1 \circ S_2$ denote the set of strings resulting from concatenating an element of $S_1$ and an element of $S_2$, i.e., $S_1 \circ S_2 := \{s_1s_2 \mid s_1 \in S_1, s_2 \in S_2\}$.
\end{itemize}
\Cref{alg:priv-tree} contains the complete description. Here, $\Lap(\cdot)$ is the Laplace noise.   
It is worth noting that we will eventually choose $\lambda = O(1/\eps)$ and $\mu = O(\lambda \cdot \log(1/\nu))$ where each big-O notation hides a sufficiently large constant.  Our algorithm for histogram has the following guarantee:

\begin{wrapfigure}{R}{0.55\textwidth}
\centering
\vspace*{-4mm}
\begin{minipage}{0.44\textwidth}
\begin{algorithm}[H]
\small
\caption{PrivTree-Based Heavy Hitters.} \label{alg:priv-tree}
$\textsc{PrivHeavyHitter}$
\begin{algorithmic}[1]
\STATE{\bf Inputs:} $x^1, \dots, x^n$. 
\STATE{\bf Parameters:} $\lambda, \tau, \mu > 0$.
\FOR{$j \in [\log d]$}
\STATE $L_{\{j\}} \leftarrow \{0, 1\}$.
\ENDFOR
\FOR{$\ell \in [\log d]$}
\STATE $\tau_\ell \leftarrow \tau + (\ell - 1)\mu$.
\hfill \COMMENT{Threshold}
\FOR{$I \in \cI_\ell$}
\STATE $L_I \leftarrow \emptyset$.
\FOR{$s \in L_{\Ileft} \circ L_{\Iright}$}
\STATE $\hf^I_s \leftarrow \max\{f^I_s, \tau_\ell - \mu\} + \Lap(\lambda)$.
\IF{$\hf^I_s > \tau_\ell$}
\STATE Add $s$ to $L_I$.
\ENDIF
\ENDFOR
\ENDFOR
\ENDFOR
\STATE{\textbf{Output:}} $L_{[d]}$.
\end{algorithmic}
\end{algorithm}
\end{minipage}
\end{wrapfigure}

\subsection{Analysis}

We provide an overview of the privacy and utility analysis of Algorithm~\ref{alg:priv-tree}. 
The complete analysis is in the supplementary material.

\para{Privacy.}
For clarity, below we write the frequencies and lists as functions of the input datasets $\bD$ or $\bD'$.  We first show the following: suppose $\lambda, \mu$ are such that $\mu > 1$ and $\frac{2}{\lambda}\left(1 + \frac{1}{1 - e^{-\mu/\lambda}} \right) \leq \eps$, then \textsc{PrivHeavyHitter} is $\eps$-\PDP.  In fact, we will prove that even outputting all the sets $(L_I)_{I \in \cI}$ is $\eps$-\PDP, i.e., we show 
that for any neighboring datasets $\bD, \bD'$ and any values of $(S_I)_{I \in \cI}$, it holds that $\Pr\left[\forall I \in \cI, L_I(\bD) = S_I\right] \leq e^{\eps} \cdot \Pr\left[\forall I \in \cI, L_I(\bD') = S_I\right].$

To prove this statement, it suffices to consider the case where the differing elements in $\bD$ and $\bD'$ are $0_d$ and $10_{d - 1}$ respectively.  We let $I^\ell := \{1, \dots, 2^{\ell}\}$ and write $\hf_s$ and $f_s$ to mean $\hf_s^{I^\ell}$ and $f_s^{I^\ell}$, for notational ease.  We show that $\frac{\Pr\left[\forall I \in \cI, L_I(\bD) = S_I\right]}{\Pr\left[\forall I \in \cI, L_I(\bD') = S_I\right]}
$ is equal to
\begin{align*}
\prod_{\ell \in [\log d]} \prod_{s \in S_{I^\ell} \atop s \in \{0_{2^\ell}, 10_{2^{\ell - 1}}\}} \frac{\Pr[\hf_s(\bD) > \tau_\ell]}{\Pr[\hf_s(\bD') > \tau_\ell]} \times
\prod_{\ell \in [\log d]} \prod_{s \in (S_{\Ileft^\ell} \circ S_{\Iright^\ell}) \setminus S_{I^\ell} \atop s \in \{0_{2^\ell}, 10_{2^{\ell - 1}}\}} \frac{\Pr[\hf_s(\bD) \leq \tau_\ell]}{\Pr[\hf_s(\bD') \leq \tau_\ell]},
\end{align*}
and bound each of the RHS terms by $e^{\varepsilon/2}$.  To bound the first term, since $f_{01_{2^\ell - 1}}(\bD) < f_{01_{2^\ell - 1}}(\bD')$, we have $\Pr[\hf_{01_{2^\ell - 1}}(\bD) > \tau_\ell] \leq \Pr[\hf_{01_{2^\ell - 1}}(\bD') > \tau_\ell]$. Therefore, it suffices to bound $\prod_{\ell \in [\log d] \atop 0_{2^\ell} \in S^{{I^\ell}}_\ell} \frac{\Pr[\hf_{0_{2^\ell}}(\bD) > \tau_\ell]}{\Pr[\hf_{0_{2^\ell}}(\bD') > \tau_\ell]}$.
Let $L_0$ be the smallest integer such that $f_{0_{2^{L_0}}}(\bD) > \tau_{L_0} - \mu$.  (i) For all $\ell < L_0$ we have $\min\{\hf_{0_{2^\ell}}(\bD), \tau_\ell - \mu\} = \tau_\ell - \mu = \min\{\hf_{0_{2^\ell}}(\bD'), \tau_\ell - \mu\}$, and hence $\forall \ell < L_0, \frac{\Pr[\hf_{0_{2^\ell}}(\bD) > \tau_\ell]}{\Pr[\hf_{0_{2^\ell}}(\bD') > \tau_\ell]} = 1$.
(ii) For $\ell = L_0$, notice that $\min\{\hf_{0_{2^\ell}}(\bD), \tau_\ell - \mu\} - \min\{\hf_{0_{2^\ell}}(\bD'), \tau_\ell - \mu\} \leq 1$.  Hence, by the DP property of the Laplace mechanism, 
$
\frac{\Pr[\hf_{0_{2^{L_0}}}(\bD) > \tau_{L_0}]}{\Pr[\hf_{0_{2^{L_0}}}(\bD') > \tau_{L_0}]} \leq e^{1/\lambda}$.
(iii) For $\ell > L_0$, $f_{0_{2^{\ell}}}(\bD') > \tau_\ell + (\ell - L_0)\mu - 1$. Together,
\begin{align*}
\frac{\Pr[\hf_{0_{2^{\ell}}}(\bD) > \tau_{\ell}]}{\Pr[\hf_{0_{2^{\ell}}}(\bD') > \tau_{\ell}]}
&\leq \frac{\Pr[(\ell - L_0)\mu + \Lap(\lambda) > 0]}{\Pr[(\ell - L_0)\mu - 1 + \Lap(\lambda) > 0]}
&\leq \exp\left(\frac{1}{\lambda} \cdot \exp\left(\frac{1 - (\ell - L_0)\mu}{\lambda}\right)\right),
\end{align*}
where the last step uses~\cite[Lemma 2.1]{ZhangXX16}.  From these inequalities, the bound of
$e^{\varepsilon/2}$ on the first term follows from our choice of $\lambda$, $\mu$.  The bound for the second term follows similarly.

\para{Utility.}
We need to show that we discover all heavy hitters and that the expected list size is small; the latter will also imply that the expected running time of the algorithm is small.  Let $\eta, \nu \in (0, 0.1]$. Suppose that $\tau = 0.5\nu n$, $\tau \geq 8 \mu \log d + 8 \lambda \log(d/(\eta \nu))$ and $\mu \geq \lambda \log(16/\nu)$. We show

(i) Heavy hitters discovered: W.p. $1 - 0.5\eta$, $L_{[d]}$ contains all $s \in \{0, 1\}^d$ such that $f_s \geq 2\tau$.

(ii) Expected list size: $\E[|L_{[d]}|] \leq 8 / \nu$.

(iii) The expected running time of the algorithm is $\poly(d/\nu)$; this will follow from (ii).

To show (i), we fix any $s$ such that $f_s \geq 2\tau$. We will prove that $\Pr[s \notin L] \leq 0.5 \eta \nu$; since there are at most $1 / \nu$ such $s$'s, a union bound yields the claimed result.  Indeed,
\begin{align*}
&\Pr[s \notin L_{[d]}] = \Pr\left[\exists I \in \cI, s|_I \notin L_I\right]
\leq \sum_{I \in \cI} \Pr[s|_I \notin L_I \mid s|_{\Ileft} \in L_{\Ileft} \wedge s|_{\Iright} \in L_{\Iright}] \\
&\leq \sum_{I \in \cI} \Pr[2\tau + \Lap(\lambda) \leq \tau + \mu \log d]
\leq \sum_{I \in \cI} \Pr[0.5\tau + \Lap(\lambda) \leq 0]\\ 
&\leq \sum_{I \in \cI} \exp(-0.5\tau / \lambda) / 2 
\leq 2d \cdot \exp(-0.5\tau / \lambda)/2 \enspace \leq \enspace 0.5 \eta \nu.
\end{align*}
We show (ii), by an induction on $\ell$ that $\E[|L_I|] \leq 8 / \nu$ for all $\ell \in \{0, \dots, \log d\}$ and $I \in \cI_\ell$.  Consider any $\ell \in [\log d]$ and assume that the inductive hypothesis holds for $\ell - 1$ and any $I' \in \cI_{\ell - 1}$. Consider any $I \in \cI_\ell$ and let $S := \{s \in \{0, 1\}^{2^\ell} \mid f^I_s > 0.5\tau\}$. Note that $|S| < n/(0.5\tau) = 4/\nu$ and for any $s \notin S$, we have $f^I_s \leq \tau_\ell - \mu$.  Hence, for any $s \in (S_{\Ileft} \circ S_{\Iright}) \setminus S$, we have $\Pr[s \in L_I] \leq \nu / 16.$ $(*)$ 
\begin{align*}
\E[|L_I|] 
&= \E[|S \cap L_I|] + \E[|L_I \setminus S|] \leq |S| + \sum_{s \in \{0, 1\}^{2^\ell} \setminus S} \Pr[s \in L_I] \\
&\overset{(*)}{\leq} 4 / \nu + \sum_{s \in \{0, 1\}^{2^\ell} \setminus S} \nu / 16 \cdot \Pr[s \in S_{\Ileft} \circ S_{\Iright}] \\
&\leq 4 / \nu + \nu / 16 \cdot \E[|L_{\Ileft} \circ L_{\Iright}|]
\leq 4 / \nu + \nu / 16 \cdot (8 / \nu)(8 / \nu) \enspace = \enspace 8 / \nu,
\end{align*}
where the last inequality follows from induction.

\subsection{Lower Bound}

In this section we show that the $\log d$ dependence in Theorem~\ref{thm:heavy-hitters-main} necessary; our lower bound matches the upper bound up to the dependence on $\eta$ and $\log(1/\nu)$:

\begin{theorem} \label{thm:heavy-hitters-lb}
Assume that $d \geq 10 e^{1.1\eps}$. If there exists an $\eps$-\PDP algorithm that with probability $0.1$ solves the histogram problem with error $\nu$, then $n \geq \Omega\left(\frac{1}{\eps\nu}\log d\right)$.
\end{theorem}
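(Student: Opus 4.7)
The plan is a packing argument combined with group privacy for the per-attribute metric, exploiting the fact that two datasets can differ in many entire records yet still be at small per-attribute Hamming distance. For each $i \in [d]$, let $e_i \in \{0,1\}^d$ denote the $i$th standard basis vector, and consider the family $X^{(0)}, X^{(1)}, \dots, X^{(d)} \in (\{0,1\}^d)^n$ defined by: $X^{(0)}$ is $n$ copies of $0^d$, and for $i \in [d]$ the dataset $X^{(i)}$ replaces the first $m := \lceil 2\nu n \rceil + 1$ records of $X^{(0)}$ by copies of $e_i$. On $X^{(i)}$ the frequency vector satisfies $f_{e_i} = m$ and $f_{e_j} = 0$ for $j \neq i$, so any $\nu$-accurate output $\hf$ necessarily lies in
\[A_i := \{\hf \;:\; \hf_{e_i} > \nu n \text{ and } \hf_{e_j} \leq \nu n \text{ for all } j \in [d]\setminus\{i\}\}.\]
The sets $A_1, \dots, A_d$ are pairwise disjoint, since an $\hf \in A_i \cap A_k$ with $i \neq k$ would simultaneously require $\hf_{e_i} > \nu n$ (from $A_i$) and $\hf_{e_i} \leq \nu n$ (from $A_k$).

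Next I would apply group privacy. Transforming $X^{(0)}$ into $X^{(i)}$ only flips coordinate $i$ in each of $m$ records, so $X^{(0)}$ and $X^{(i)}$ lie at per-attribute Hamming distance exactly $m \leq 2\nu n + 2$. Iterating Definition~\ref{defn:ppdp} along these single-attribute edits (equivalently, invoking the triangle inequality for the per-attribute metric noted in the footnote of Definition~\ref{defn:ppdp}) yields, for every measurable $S$,
\[\Pr[M(X^{(0)}) \in S] \;\geq\; e^{-m\eps} \cdot \Pr[M(X^{(i)}) \in S].\]
Specializing to $S = A_i$ and using the hypothesized $\Pr[M(X^{(i)}) \in A_i] \geq 0.1$ gives $\Pr[M(X^{(0)}) \in A_i] \geq 0.1\, e^{-m\eps}$ for every $i \in [d]$.

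Summing over the pairwise disjoint $A_1, \dots, A_d$ then gives
\[1 \;\geq\; \sum_{i=1}^d \Pr[M(X^{(0)}) \in A_i] \;\geq\; 0.1\, d \cdot e^{-m\eps},\]
which rearranges to $m\eps \geq \ln(d/10)$, i.e., $n \geq \frac{\ln(d/10) - O(\eps)}{2\nu\eps}$; under the hypothesis $d \geq 10\, e^{1.1\eps}$ the numerator is $\Theta(\log d)$, yielding the claimed $n = \Omega(\log(d)/(\nu\eps))$. I do not foresee any serious obstacle: the conceptual content is the single observation that $X^{(0)}$ and $X^{(i)}$ lie at per-attribute distance only $\Theta(\nu n)$ --- one attribute per altered record, rather than whole-record replacements --- which is exactly the gap between the standard-DP lower bound $\Omega(d/(\nu\eps))$ and the partial-DP bound $\Omega(\log(d)/(\nu\eps))$, mirroring the exponential separation in \Cref{thm:heavy-hitters-main}. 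The only technicality worth double-checking is the strict-versus-non-strict boundary at $\hf_{e_i} = \nu n$, which the ``$+1$'' in $m$ absorbs.
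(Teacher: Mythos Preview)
Your proposal is correct and follows essentially the same packing argument as the paper: both compare the all-zeros dataset $X^{(0)}$ to $d$ datasets $X^{(i)}$ built from copies of the one-hot vector $e_i$, observe that the per-attribute distance from $X^{(0)}$ to each $X^{(i)}$ is only $\Theta(\nu n)$, apply group privacy, and sum over the disjoint acceptance sets. The only differences are cosmetic choices of constants (the paper uses $\lceil 3\nu n\rceil$ copies where you use $\lceil 2\nu n\rceil + 1$), and both proofs are a bit loose in the final constant bookkeeping that turns $\ln(d/10) - O(\eps)$ into $\Omega(\log d)$ under the hypothesis $d \geq 10e^{1.1\eps}$; this is harmless for an $\Omega(\cdot)$ statement.
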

The constant $0.1$ in Theorem~\ref{thm:heavy-hitters-lb} was chosen for concreteness, but a similar statement holds for any positive constant.  To prove Theorem \ref{thm:heavy-hitters-lb}, we follow the same packing-based approach that was used for the standard DP lower bound~\cite{HardtT10}. The difference is that our packing consists of only $d$ one-hot vectors (instead of all of $\{0, 1\}^d$); since the one-hot vectors are at Hamming distance only $2$ apart, the rest of the proof proceeds as before. 

\subsection{Applications of Histograms}

Algorithms for histograms are often used as subroutines for other algorithms.  As a concrete application of~\Cref{thm:heavy-hitters-main}, we obtain partial DP algorithms for the problems of PAC learning point functions and threshold functions, and discrete distribution estimation with $\ell_2^2$ error.%

\begin{theorem}\label{thm:point-and-threshold-func-intro}
For every $\eps, \alpha > 0$, $d\in\mathbb{N}$, there exists an $\eps$-$\nabla_0$DP proper PAC learner with error at most $\alpha$ and with sample complexity $n=\tilde{O}\left(\frac{1}{\alpha\eps} \log d\right)$ for point functions and threshold functions.
\end{theorem}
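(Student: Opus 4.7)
I will reduce both PAC-learning tasks to the $\varepsilon$-$\nabla_0$DP heavy hitters algorithm of Theorem~\ref{thm:heavy-hitters-main}. Given labeled samples $\{(x_i, b_i)\}_{i=1}^n \in \{0,1\}^d \times \{0,1\}$, form augmented records $z_i := (x_i, b_i) \in \{0,1\}^{d+1}$. Any two labeled datasets that are $\nabla_0$DP-neighbors remain neighbors in $\{0,1\}^{d+1}$, so running the heavy hitters algorithm on $\{z_i\}$ and post-processing the returned list into a hypothesis preserves $\varepsilon$-$\nabla_0$DP. Setting heavy-hitter error $\nu = \Theta(\alpha)$ and a constant failure probability, Theorem~\ref{thm:heavy-hitters-main} gives the target sample complexity $n = \tilde{O}(\log d/(\alpha\varepsilon))$.

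For point functions, the post-processing outputs $c_{\hat y}$ where $\hat y := \arg\max_y \hat f_{(y,1)}$ provided the max exceeds the heavy-hitter threshold, and the constant-zero hypothesis otherwise. Realizability forces $f_{(y,1)} = 0$ for every $y \neq y^*$, so any returned positive heavy hitter must equal $(y^*,1)$. The utility proof splits on $D(y^*)$: if $D(y^*) \geq 2\alpha$, a Chernoff bound gives $f_{(y^*,1)} \geq \alpha n$ with high probability and the learner recovers $\hat y = y^*$ for zero error; otherwise the constant-zero fallback has error $D(y^*) < 2\alpha$.

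For threshold functions on $\{0,1\}^d$ under the natural (lex) order, I first coarsen the domain by bucketing on the top $\lceil \log(1/\alpha)\rceil$ bits of $x_i$, yielding $K = O(1/\alpha)$ equal buckets, then run heavy hitters on the labeled buckets $\{(k_i, b_i)\}_{i=1}^n$ viewed as records in $\{0,1\}^{\lceil \log(1/\alpha)\rceil + 1}$, and finally output $\hat y$ at the right endpoint of the maximum positively-labeled heavy bucket. Since bucket indices depend only on prefix bits, a single-bit change of $x_i$ changes at most one bit of $k_i$, so per-attribute sensitivity is preserved. The PAC error is the $D$-mass of the ``uncertain region'' between the maximal positive heavy bucket and the minimal negative heavy bucket, which is $O(\alpha)$ since each heavy bucket has $D$-mass $\geq \nu = \Theta(\alpha)$, only $O(1)$ buckets straddle the true boundary, and the non-heavy buckets collectively contribute $O(K\cdot\nu) = O(1)\cdot O(\alpha)$.

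The main obstacle is the single bucket that actually contains $y^*$: if its $D$-mass is $\gg \alpha$ and splits non-trivially into positive and negative parts, a constant-factor ambiguity in $\hat y$ can blow up the PAC error. I address this with a second recursive call of heavy hitters inside the straddling bucket, using a disjoint block of lower-order attributes (e.g., the next $\lceil \log(1/\alpha)\rceil$ bits of $x_i$); by the $\nabla_0$DP parallel composition property (McSherry's lemma in the per-attribute setting), the total $\nabla_0$DP budget remains $\varepsilon$ and the combined sample complexity remains $\tilde{O}(\log d/(\alpha\varepsilon))$.
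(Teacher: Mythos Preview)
Your point-function argument is essentially the paper's: both concatenate the label as an extra bit and run the heavy-hitters algorithm of Theorem~\ref{thm:heavy-hitters-main}. One small issue: the constant-zero hypothesis is not a point function, so your fallback is not proper. The paper instead outputs $\point_{x^0}$ for some $x^0$ with $\hat f_{x^0 0}\le\nu n$ (such an $x^0$ always exists), which keeps the learner proper while still having error $O(\alpha)$.

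The threshold argument, however, has a real gap. With $K=O(1/\alpha)$ buckets and heavy-hitter threshold $\nu=\Theta(\alpha)$, the bound you write as ``$O(K\cdot\nu)=O(1)\cdot O(\alpha)$'' is actually $O(1)$, not $O(\alpha)$: there can be $\Theta(1/\alpha)$ non-heavy buckets between the rightmost $0$-heavy bucket and the leftmost $1$-heavy bucket, each carrying mass just below $\nu$, so the uncertain region can absorb a constant fraction of the distribution. Your proposed fix---recursing into ``the single bucket that actually contains $y^*$''---does not address this, because the problem is not a single heavy mixed bucket but many light pure buckets whose union is heavy. Moreover, the parallel-composition justification fails: both the first and the recursive call must use the \emph{label} bit in addition to their respective blocks of feature bits, so the attribute sets are not disjoint and Lemma~\ref{lem:composition} (sequential composition) applies instead.

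The paper handles thresholds by a different mechanism. After disposing of the trivial cases where one label dominates, it searches for the longest \emph{polarizing prefix}: a prefix $p$ for which both $f_{p*0}$ and $f_{p*1}$ exceed $\gamma n=\Theta(\alpha n)$. This is found by a binary search over prefix lengths $\ell$, using Laplace noise on $\max_{p\in\{0,1\}^\ell}\min\{f_{p*0},f_{p*1}\}$ to locate the right level and then the histogram algorithm to extract the prefix itself. Once $p$ is known, the true threshold lies in the interval with prefix $p$, and since neither $p0$ nor $p1$ is polarizing, one of the three candidates $\Thre_{p0_{d-\ell}},\Thre_{p10_{d-\ell-1}},\Thre_{p1_{d-\ell}}$ has small empirical error; an exponential mechanism over these three finishes the job. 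The binary search is what lets the algorithm adapt to the scale at which the mass straddles the threshold, which is exactly what your fixed-granularity bucketing cannot do.
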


Theorem~\ref{thm:point-and-threshold-func-intro} should be contrasted with the sample complexity of proper PAC learning of point functions and threshold functions in the standard $\varepsilon$-DP setting, both of which are $n=\Theta\left(\frac{d}{\eps\alpha}\right)$~\cite{BeimelNS19,FeldmanX15}.

\begin{theorem} \label{thm:dist-est-l2-intro}
For every $\eps > 0$ and $n,d\in\mathbb{N}$, there exists an $\eps$-$\nabla_0$DP algorithm for discrete distribution learning whose $\ell_2^2$ error is $\tilde{O}\left(\frac{\log d}{\eps n} + \frac{1}{n}\right)$ with probability at least $0.9$.
\end{theorem}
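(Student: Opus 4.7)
The plan is to use Algorithm~\ref{alg:priv-tree} essentially as a black box. Let $x_1,\dots,x_n \in \{0,1\}^d$ be drawn i.i.d.\ from the unknown distribution $q$ and let $p$ denote the empirical distribution $p_y = f_y/n$. Run \textsc{PrivHeavyHitter} with privacy parameter $\eps$, failure parameter $\eta = 0.01$, and error parameter $\nu$ to be chosen, obtaining a list $L \subseteq \{0,1\}^d$ together with noisy counts; we set $\hf_y$ equal to the returned count for $y \in L$ and $\hf_y := 0$ for $y \notin L$, and output $\hat p_y := \hf_y/n$. The $\eps$-$\nabla_0$DP guarantee is immediate from Theorem~\ref{thm:heavy-hitters-main} together with the post-processing property.

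For the utility analysis, I would decompose the error via the triangle inequality
\[
\|\hat p - q\|_2^2 \le 2\|\hat p - p\|_2^2 + 2\|p - q\|_2^2.
\]
The sampling term is standard: $\mathbb{E}\|p-q\|_2^2 = \tfrac{1}{n}\sum_y q_y(1-q_y) \le 1/n$, so by Markov $\|p-q\|_2^2 = O(1/n)$ with probability at least $0.97$. For the estimation term, condition on the event (of probability at least $0.99$) that the histogram algorithm succeeds with error $\nu$, meaning $|\hf_y - f_y| \le \nu n$ for every $y$. In particular, for $y \notin L$ we have $\hat p_y = 0$ and $p_y \le \nu$, so $\sum_{y \notin L}(\hat p_y-p_y)^2 = \sum_{y\notin L} p_y^2 \le \nu \sum_y p_y = \nu$. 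For $y \in L$ we have $|\hat p_y - p_y| \le \nu$; combined with the expected list-size bound $\mathbb{E}|L| \le 8/\nu$ (from the utility analysis of Algorithm~\ref{alg:priv-tree}) and Markov, we get $|L| \le O(1/\nu)$ with probability at least $0.97$, hence $\sum_{y \in L}(\hat p_y - p_y)^2 \le |L|\cdot\nu^2 = O(\nu)$. Thus $\|\hat p - p\|_2^2 = O(\nu)$ with probability at least $0.9$ by a union bound.

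It remains to choose $\nu$. The sample-complexity requirement of Theorem~\ref{thm:heavy-hitters-main} is $n \ge C \cdot \tfrac{1}{\eps\nu}\log(d/\eta)\log(1/\nu)$ for some absolute constant $C$, which is satisfied by taking $\nu = \Theta\!\left(\tfrac{\log d \cdot \log n}{\eps n}\right)$ (for example). Substituting yields $\|\hat p - q\|_2^2 = \tilde O\!\left(\tfrac{\log d}{\eps n} + \tfrac{1}{n}\right)$ with probability at least $0.9$, as claimed.

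\textbf{Main obstacle.} There is no deep difficulty once Theorem~\ref{thm:heavy-hitters-main} is in hand; the proof is essentially a packaging of the histogram guarantee together with the elementary $\ell_2^2$ decomposition into a ``heavy'' part (controlled by the worst-case error $\nu$ and the bounded list size $O(1/\nu)$) and a ``light'' part (controlled by $p_y \le \nu$ for $y \notin L$). The mildly delicate step is verifying that the implicit noisy counts returned by Algorithm~\ref{alg:priv-tree} really do satisfy $|\hf_y - f_y| \le \nu n$ for $y \in L$ (and not merely for $y \notin L$): this uses the fact that the per-level Laplace noise is $O(\log(1/\nu)/\eps)$ and that the cap $\tau_\ell - \mu$ introduces at most an additive $\mu = O(\log(1/\nu)/\eps)$ bias, both of which are $O(\nu n)$ under our choice of $\nu$. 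This logarithmic overhead is absorbed into the $\tilde O(\cdot)$ notation.
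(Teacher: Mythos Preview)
Your proposal is correct and follows the paper's overall strategy: decompose $\|\hat p - q\|_2^2 \le 2\|\hat p - p\|_2^2 + 2\|p - q\|_2^2$, bound the sampling term by $O(1/n)$ via Markov, and split the estimation term into in-list and out-of-list parts, the latter controlled by $p_y \le \nu$ for $y\notin L$.

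The one substantive difference is how the in-list part is handled. The paper spends only half the privacy budget on the histogram algorithm of Theorem~\ref{thm:heavy-hitters-main} and uses its output $\hf$ \emph{solely to select} the set $\{x:\hf_x\ge 2\nu n\}$; it then spends the remaining budget on a fresh round of Laplace noise $\Lap(2/\eps)$ applied to the true counts $f_x$ of the selected items. This gives a per-item squared error of $O(1/(\eps^2 n^2))$ and (conditioned on the good event) a deterministic bound of at most $1/\nu$ selected items, hence a contribution of $O(1/(\nu\eps^2 n^2))$. You instead reuse the noisy counts already returned by Theorem~\ref{thm:heavy-hitters-main}, getting per-item error at most $\nu^2$ and a high-probability $O(1/\nu)$ bound on $|L|$ via Markov on the expected list size from Lemma~\ref{lem:priv-tree-util}, hence a contribution of $O(\nu)$. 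Both routes land at $\tilde O(\log d/(\eps n))$ after choosing $\nu$, and your version is arguably simpler since it avoids the second Laplace pass.

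One terminological wrinkle: \textsc{PrivHeavyHitter} (Algorithm~\ref{alg:priv-tree}) outputs only the list $L_{[d]}$, not counts; the counts $\hf_y$ with the guarantee $|\hf_y-f_y|\le\nu n$ arise only after the additional Laplace step wrapped into the proof of Theorem~\ref{thm:heavy-hitters-main}. So once you invoke Theorem~\ref{thm:heavy-hitters-main} as a black box (as you do for privacy and sample complexity), your ``main obstacle'' paragraph about extracting that bound from the internals of Algorithm~\ref{alg:priv-tree} is unnecessary.
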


In contrast, for standard DP, a packing lower bound~\cite{HardtT10} shows that even getting an $\ell_2^2$ error of $0.1$ (with constant probability) requires $n \geq \Omega(d/\eps)$.%

\section{Robust Learning of Halfspaces}
\label{sec:robust-halfspaces}

We next consider the problem of robust learning of halfspaces, under the (normalized) Hamming distance. 
A \emph{halfspace} is a function $h_{\bw} : \mathbb{R}^d \to \{-1,+1\}$ where $\bw \in \R^d$ and is defined as $h_{\bw}(x) = \sgn(\left<\bw, x\right>)$. We consider the class $\cH_{\hs} := \{h_{\bw} \mid \bw \in \R^d\}$ of all halfspaces.
Our input dataset consists of pairs $(x_1,y_1), \dots, (x_n,y_n) \in \mathcal{X} = \{-1,+1\}^{d} \times \{-1,+1\}$ drawn i.i.d.~from a distribution $\cD$ and our goal is to output a halfspace $h_{\bw} \in \cH_{\hs}$ that mislabels as small fraction of points w.r.t.~$\cD$ as possible---i.e., minimizing $\pr{(x, y) \leftarrow \cD}{h_{\bw}(x)\ne y}$.
We are interested in \emph{robust} learning. A sample $(x, y)$ is \emph{$\gamma$-robustly classified} if the hypothesis assigns all points in a $\gamma d$-radius Hamming ball around $x$ to the label $y$---i.e., $\forall \check x \in \{0,1\}^d ~ \|\check x - x\|_0 \le \gamma d \implies h_{\bw}(\check x) = y$. More formally, the $\gamma$-robust error is defined as follows: 
\begin{definition}[Robust Error]
For a distribution $\cD$ on $\mathcal{X} = \{-1,+1\}^d \times \{-1,+1\}$ and a hypothesis $h :\{-1,+1\}^d \to \{-1,+1\}$, we define its
\emph{$\gamma$-(Hamming-)robust error} to be \[\cR_\gamma(h, \cD) := \pr{(x, y) \gets \cD}{\exists \check x ~ h(\check x) \ne y \wedge \|\check x - x\|_0 \le d\gamma}.\] %
\end{definition}

The goal is to find a halfspace whose $\gamma'$-robust error is not much more than the optimal $\gamma$-robust error. Here $\gamma' < \gamma$ represents a relaxation in the margin that we pay for privacy, along with a relaxation in accuracy. %
The problem is well understood both in the non-private setting~\cite{DiakonikolasKM20} and in the standard DP setting~\cite{GhaziRMN21}. We give a partial DP algorithm:

\begin{theorem} \label{lem:robust-halfspaces}
Let $\eps, \gamma, \gamma' \in (0, 1]$ such that $\gamma > \gamma'$. There is an $\eps$-\PDP algorithm $M : \mathcal{X}^n \to \cH_{\hs}$ such that the following holds.
Let $\cD$ be a distribution on $\mathcal{X}$ and let $S \gets \cD^n$. Then
\begin{align*}
    & \ex{}{\cR_{\gamma'}(M(S), \cD)} \enspace \le \inf_{h \in \cH_{\hs}} \cR_{\gamma}(h,\cD) & \quad + \quad O\left(\frac{1}{\eps \sqrt{n} (\gamma - \gamma')^2} + \frac{\log(1/(\gamma - \gamma'))}{\eps^2 n \cdot (\gamma - \gamma')} \right).
\end{align*}
\end{theorem}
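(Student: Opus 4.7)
The plan is to reduce the robust halfspace learning problem to a differentially private stochastic convex optimization (DP-SCO) task, exploiting the coordinate-separable structure of the hinge gradient on $\{\pm 1\}^d$ to extract a strong per-attribute sensitivity bound.

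First I would introduce a convex surrogate. Following the hinge-loss approach used for Hamming-robust halfspaces in \cite{DiakonikolasKM20,GhaziRMN21}, I define, for a margin parameter $\tau \asymp \gamma-\gamma'$,
\[
\ell_\tau(\bw; (x,y)) \;=\; \max\!\left(0,\; 1 - \frac{y\langle \bw, x\rangle}{\tau}\right),
\]
and optimize over a scaled $\ell_\infty$ ball $\mathcal{W}$ of weight vectors. The structural lemma I would prove is two-sided: (i) the optimal halfspace for the $\gamma$-robust $0/1$ loss can be rescaled into $\mathcal{W}$ so that its population surrogate loss is bounded by its $\gamma$-robust error, and (ii) any $\bw\in\mathcal{W}$ with small population surrogate loss induces a halfspace with small $\gamma'$-robust error. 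The gap $\gamma-\gamma'$ supplies exactly the margin slack needed to pass between the ``idealized'' adversarial margin $2T_{\gamma d}(|\bw|)$ (sum of the top $\gamma d$ coordinates of $|\bw|$) and the Lipschitz hinge surrogate.

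Next, I would exploit per-attribute sensitivity. The gradient of $\ell_\tau$ with respect to $\bw$ is $-(y/\tau)\, x \cdot \mathbb{I}[\text{margin active}]$, so flipping a single attribute $x_{ij}$ changes only the $j$-th coordinate of that sample's gradient, and by at most $2/\tau$. Hence the averaged empirical gradient has per-attribute, per-coordinate sensitivity $O(1/(\tau n))$, and because different attributes perturb disjoint gradient coordinates, parallel composition (Lemma~\ref{lem:composition}) lets us release the entire noisy gradient under $\varepsilon$-$\nabla_0$DP at the cost of a single coordinate's noise. I would then run projected noisy gradient descent on the empirical surrogate over $\mathcal{W}$, injecting independent $\mathrm{Lap}(O(1/(\tau \varepsilon n)))$ noise into each coordinate at each step, with sequential composition across $T$ iterations to obtain an $\varepsilon$-$\nabla_0$DP algorithm overall. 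Combining the standard noisy-GD convergence guarantee with a Rademacher-based generalization bound for the $\ell_\infty$-bounded linear class on $\{\pm 1\}^d$ inputs (which is dimension-free by the $\ell_\infty$/$\ell_1$ duality) yields an excess surrogate loss of $O(1/(\tau\sqrt{n}))$ plus a DP overhead term; translating through Step~1 with $\tau \asymp \gamma-\gamma'$ reproduces the stated bound, with the $\log(1/(\gamma-\gamma'))$ factor arising from a doubling trick over the a priori unknown norm of the optimal comparator.

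The main obstacle is making Step~1 quantitatively tight without incurring a dimension factor, since the stated bound is dimension-free. A naive $\ell_2$ constraint on $\bw$ would penalize sparse comparator halfspaces by $\sqrt{d}$ and is thus inadmissible. The right geometry to exploit is the duality between Hamming perturbations on $\{\pm 1\}^d$ and $\ell_\infty$ bounds on $\bw$, together with a delicate top-$(\gamma d)$ argument that relates the exact adversarial margin reduction $2T_{\gamma d}(|\bw|)$ to an $\ell_\infty$-Lipschitz surrogate. Establishing the scaling and constants so that the comparator halfspace provably sits in $\mathcal{W}$ with the correct effective margin, while simultaneously keeping the Rademacher complexity of the class dimension-free, is the most intricate step of the argument; everything downstream (per-attribute sensitivity, noisy GD, generalization) is then essentially routine.
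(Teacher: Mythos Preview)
Your approach is genuinely different from the paper's (exponential mechanism over an $\ell_1$-net, after randomized response on labels), but it has a real gap in the per-attribute sensitivity step.

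You claim that flipping a single attribute $x_{ij}$ changes only the $j$-th coordinate of that sample's hinge gradient. That is false. The gradient is $-(y/\tau)\,x\cdot\mathbb{I}[y\langle \bw,x\rangle<\tau]$; flipping $x_{ij}$ perturbs $y\langle \bw,x\rangle$ by $\pm 2 y\, w_j$, which can toggle the indicator. When the indicator flips, \emph{every} coordinate $k$ of the gradient jumps by $|x_k|/\tau=1/\tau$, not just coordinate $j$. The same issue arises, even more acutely, for the label attribute $y$: flipping $y$ reverses the sign of the whole gradient. So the ``different attributes perturb disjoint gradient coordinates'' structure you rely on for parallel composition simply does not hold for the hinge loss (or any smooth surrogate of it, since coordinate $k$ of the gradient still depends on $\langle \bw,x\rangle$ through the activation). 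Without this separability you are back to a per-record $\ell_1$ gradient sensitivity of order $d/(\tau n)$, which reintroduces the dimension dependence you were trying to avoid. A secondary concern: an $\ell_\infty$-bounded linear class on $\{\pm 1\}^d$ inputs does \emph{not} have dimension-free Rademacher complexity (the dual $\ell_1$ radius of the data is $d$), so the generalization step as stated would also pick up a $d$ factor.

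The paper sidesteps both problems by never looking at gradients. It uses a loss $\ell(h,(x,y))=\clip_{[0,1]}((\gamma d-\dec(x,y,h))/((\gamma-\gamma')d))$ depending only on the Hamming distance $\dec(x,y,h)$ to the decision boundary; since $\dec$ changes by at most $1$ under a single-attribute flip, this scalar loss has per-attribute sensitivity $1/((\gamma-\gamma')d)$. The exponential mechanism over a $\nu$-net of the unit $\ell_1$-ball (log-size $O(d\log(1/\nu))$) then yields an $O(\log(1/\nu)/(\eps\nu n))$ excess empirical loss, the $d$ in the net size cancelling the $1/d$ in the sensitivity. Label privacy is handled separately by randomized response, costing a $1/\eps$ factor. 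If you want to salvage a first-order method, you would need a surrogate whose gradient coordinates are genuinely attribute-local; the hinge loss is not such a surrogate.
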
%
Note that our error bound is independent of the number of attributes $d$, while in the standard DP setting, the error must grow linearly with $d$~\cite{GhaziRMN21}.

Our algorithm first privatizes the label $y \in \{-1,+1\}$ via Randomized Response; then, we can focus on a learner that is private in terms of the features $x \in \{-1,+1\}^d$. Our learner is in fact an instantiation of the exponential mechanism~\cite{McSherryT07} except we do \emph{not} apply it directly with respect to the (empirical) robust error, because for $x$ that is exactly at distance $\gamma d$ from the decision boundary, changing a single coordinate of $x$ could make it be considered mislabeled under $\gamma$-robust error. Instead, we smoothen the loss based on how far $x$ is from the decision boundary, similarly to the popular hinge loss, in order to reduce the sensitivity which gives the desired result.
To describe our algorithm, it will be most clear to separate the privacy of the labels and the privacy of the samples. In this regards, we say that an algorithm is \emph{$\eps$-sample-\PDP} if the DP guarantee is only enforced on changing a single coordinate of a sample. (There is no privacy guarantee on the labels.) 

For $\gamma > \gamma' > 0$ and $\alpha > 0$, we also say that a mechanism $M$ is \emph{$(\gamma', \gamma)$-learner with excess loss $\alpha$} iff $\ex{}{\cR_{\gamma'}(M(S), \cD)} \enspace \le \inf_{h \in \cH_{\hs}} \cR_{\gamma}(h,\cD) + \alpha$.

\subsection{From Sample-Only Privacy to Sample-and-Label Privacy}

A first observation is that by using randomized response on the labels, we can immediately translate an $\eps$-sample-\PDP algorithm to that of $\eps$-\PDP.

\begin{lemma} \label{lem:label-rr}
Let $\cH$ be any hypothesis class and $\eps \in (0, 1]$. Suppose that there is an $\eps$-sample-\PDP $(\gamma, \gamma')$-robust learner for $\cH$ with excess loss $\alpha$. Then there is an $\eps$-\PDP $(\gamma, \gamma')$-robust learner for with excess loss $O(\alpha / \eps)$ (where the sample complexity remains the same).
\end{lemma}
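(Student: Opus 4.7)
The plan is a randomized-response (RR) reduction. Given the sample-only $\eps$-\PDP learner $M$ promised in the hypothesis, define $M'$ as follows: on input $S = ((x_1, y_1), \dots, (x_n, y_n))$, set $p := 1/(e^\eps+1)$ and independently for each $i \in [n]$ let $\tilde y_i := -y_i$ with probability $p$ and $\tilde y_i := y_i$ with probability $1-p$; then return $M(\tilde S)$ where $\tilde S := ((x_1, \tilde y_1), \dots, (x_n, \tilde y_n))$.

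For privacy, I consider any pair of inputs $S, S'$ differing at a single attribute of a single sample $i$. If that attribute is a feature coordinate of $x_i$, then the RR output distributions for $\tilde y_i$ are identical for $S$ and $S'$, so conditioning on the label-flip randomness reduces to a single feature change in the input to $M$; by $\eps$-sample-\PDP of $M$ together with post-processing of the RR randomness, the mechanism is $\eps$-DP. If the differing attribute is the label $y_i$, the distribution of $\tilde y_i$ changes by a multiplicative factor of at most $(1-p)/p = e^\eps$ (i.e., label RR is $\eps$-DP), and $M$ acts as post-processing of this output. Hence $M'$ is $\eps$-\PDP in both cases.

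For accuracy, let $\tilde \cD$ denote the distribution of $(x, \tilde y)$ when $(x, y) \sim \cD$ and $\tilde y$ is obtained by RR. Since $\tilde S$ consists of i.i.d.\ samples from $\tilde \cD$, the hypothesis on $M$ yields
\[ \ex{}{\cR_{\gamma'}(M(\tilde S), \tilde \cD)} \;\le\; \inf_{h \in \cH} \cR_\gamma(h, \tilde \cD) + \alpha. \]
To translate this back to $\cD$, I would derive the identity (by conditioning on the label flip)
\[ \cR_\gamma(h, \tilde \cD) \;=\; p + (1-2p)\,\cR_\gamma(h, \cD) + p \cdot \Pr[B_\gamma(h, x)], \]
where $B_\gamma(h, x)$ is the event that the Hamming $(d\gamma)$-ball around $x$ is bichromatic under $h$, and use the trivial bound $\Pr[B_\gamma(h, x)] \le \cR_\gamma(h, \cD)$ (from $A_\gamma(h, x, y) = \mathbb{I}[h(x) \ne y \vee B_\gamma(h, x)]$). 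Applying the identity on both sides---with $\gamma'$ and $M(\tilde S)$ on the left, and $\gamma$ and the optimum $h^*$ on the right---and noting that the additive $p$ terms on the two sides cancel, rearranging and dividing by $1-2p = \Theta(\eps)$ yields excess loss $O(\alpha/\eps)$ on $\cD$, as desired.

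The main obstacle is handling the bias term $p \cdot \Pr[B_\gamma(h, x)]$ that appears when we translate from $\tilde \cD$ back to $\cD$. The factor $1/(1-2p) = \Theta(1/\eps)$ is exactly what turns the sample-only excess $\alpha$ into the claimed $O(\alpha/\eps)$; the subtlety is that the same division would, if applied naively, also multiply the benchmark $\inf_h \cR_\gamma(h, \cD)$ by $1/(1-2p)$. The careful cancellation of the $p$ offsets between the LHS and RHS, together with the bound $\Pr[B_\gamma(h, x)] \le \cR_\gamma(h, \cD)$, is what isolates the $O(\alpha/\eps)$ term and is the technical heart of the proof.
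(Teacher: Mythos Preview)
Your reduction is the same one the paper uses: apply $\eps$-randomized response to each label and feed the result to the sample-only learner; your privacy argument matches the paper's. Where you go further is in writing down the exact identity
\[
\cR_\gamma(h,\tilde\cD)\;=\;p+(1-2p)\,\cR_\gamma(h,\cD)+p\cdot\Pr[B_\gamma(h,x)],
\]
with the bichromatic-ball correction $p\cdot\Pr[B_\gamma(h,x)]$. The paper, by contrast, simply asserts the affine relation $\cR_\gamma(h,\cD')=\tfrac{1}{e^\eps+1}+\tfrac{e^\eps-1}{e^\eps+1}\cR_\gamma(h,\cD)$ as ``exactly equivalent'' and proceeds; so on this point you are being more careful than the paper.

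The gap is that your handling of the correction term does not yield the stated conclusion. After dropping the nonnegative $p\cdot\E\Pr[B_{\gamma'}(M(\tilde S),x)]$ on the left and bounding $\Pr[B_\gamma(h^*,x)]\le\cR_\gamma(h^*,\cD)$ on the right, dividing by $1-2p$ gives
\[
\E[\cR_{\gamma'}(M',\cD)]\;\le\;\frac{1-p}{1-2p}\,\cR_\gamma(h^*,\cD)+\frac{\alpha}{1-2p}
\;=\;\cR_\gamma(h^*,\cD)+\frac{p}{1-2p}\,\cR_\gamma(h^*,\cD)+O(\alpha/\eps).
\]
Since $\tfrac{p}{1-2p}=\tfrac{1}{e^\eps-1}=\Theta(1/\eps)$ for $\eps\in(0,1]$, the residual term is $\Theta\!\bigl(\cR_\gamma(h^*,\cD)/\eps\bigr)$, which is not $O(\alpha/\eps)$ unless $\inf_h\cR_\gamma(h,\cD)=O(\alpha)$. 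So the ``careful cancellation'' you describe does not isolate an $O(\alpha/\eps)$ excess; it inflates the benchmark itself by a $\Theta(1/\eps)$ factor. The paper sidesteps this only by omitting the $B_\gamma$ term altogether, so your write-up in fact surfaces a subtlety that the paper's own proof leaves unaddressed; but as stated, your argument does not close it either.
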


\begin{proof}
Let $\cA$ denote the $\eps$-sample-\PDP $(\gamma, \gamma')$-robust learner for $\cH$ with sample complexity $m$. The algorithm $\cA'$ draws $m$ samples $(x_1, y_1), \dots, (x_m, y_m)$, and then applies the $\eps$-DP randomized response (see e.g.,~\cite{KasiviswanathanLNRS11}) to each label $y_i$ to get a private label $\ty_i$ and then run $\cA$ on $(x_1, \ty_1), \dots, (x_m, \ty_m)$. The fact that $\cA'$ is $\eps$-\PDP is immediate.

To see its utility guarantee, note that $(x_1, \ty_1), \dots, (x_m, \ty_m)$ are in fact drawn from a distribution $\cD'$ whose probability mass function is
\begin{align*}
\cD'(x, y) = \frac{e^{\eps}}{e^{\eps} + 1} \cD(x, y) + \frac{1}{e^{\eps} + 1} \cD(x, 1 - y).
\end{align*}
The guarantee of $\cA$ implies that
\begin{align*}
\E[\cR_{\gamma'}(h^{priv}; \cD')] &\leq \inf_{h \in \cH} \cR_\gamma(h; \cD') + \alpha.
\end{align*}
Using the definition of $\cD'$, this is exactly equivalent to
\begin{align*}
\frac{1}{e^{\eps} + 1} + \frac{e^{\eps} - 1}{e^{\eps} + 1} \cdot \E[\cR_{\gamma'}(h^{priv}; \cD)] \leq \frac{1}{e^{\eps} + 1} + \frac{e^{\eps} - 1}{e^{\eps} + 1} \cdot \inf_{h \in \cH} \cR_\gamma(h, \cD) + \alpha,
\end{align*}
which in turn is equivalent to $\E[\cR_{\gamma'}(h^{priv}; \cD)] 
\leq \inf_{h \in \cH} \cR_\gamma(h, \cD) + O(\alpha / \eps)$ as desired.
\end{proof} 

The above lemma essentially means that we can focus our attention to sample-\PDP learners for the rest of the section.

\subsection{Robust Empirical Risk Minimization}

For a set $S$ of labeled examples, we write $\cR_\gamma(h, S)$ to denote the $\gamma$-robust error w.r.t. the uniform distribution on $S$ (aka the empirical $\gamma$-robust error). Below we show that, by using a ``smoothened'' version of the loss similar to the hinge loss, we can get the following sample-\PDP ERM algorithm:

\begin{lemma}[ERM for Robust Error] \label{lem:erm-robust}
Let $\gamma' < \gamma$. For any finite hypothesis class $\cH$, there exists an $\eps$-partial DP algorithm that outputs a hypothesis $h^{priv}$ such that
\begin{align*}
\E[\cR_{\gamma'}(h^{priv}; S)] \leq \inf_{h \in \cH} \cR_\gamma(h; S) + O\left(\frac{\log |\cH|}{\eps(\gamma - \gamma')d \cdot |S|}\right).
\end{align*}
\end{lemma}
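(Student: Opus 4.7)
The plan is to apply the exponential mechanism, but to a smoothened, margin-aware surrogate of the $0/1$ robust loss rather than to the $\gamma$- or $\gamma'$-robust mis-classification indicator itself. The problem with using either indicator directly is exactly what the prose preceding the lemma flags: a single coordinate flip of an $x_i$ that sits near the decision boundary of $h$ can change whether $(x_i,y_i)$ is counted as mis-classified under $\gamma$-robust error, so the empirical robust error has sensitivity $1/|S|$, which is too coarse to yield the claimed $(\gamma-\gamma')$ factor in the denominator. A piecewise-linear interpolating loss fixes this.

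Concretely, for each hypothesis $h \in \cH$ and each labeled example $(x,y)$, I would define the Hamming robust margin $\rho(h,x,y) := \min\{\|x-\check x\|_0 : \check x \in \{-1,+1\}^d,\ h(\check x) \ne y\}$ and set
\[
\ell(h,(x,y)) := \min\!\left\{1,\; \max\!\left\{0,\; \frac{\gamma d - \rho(h,x,y)}{(\gamma-\gamma') d}\right\}\right\}.
\]
This equals $1$ exactly when $\rho \le \gamma' d$ (the $\gamma'$-robust failure case) and $0$ exactly when $\rho \ge \gamma d$ (the $\gamma$-robust success case), and it is linear in between. Writing $\cL(h,S) := \frac{1}{|S|}\sum_{(x,y)\in S} \ell(h,(x,y))$, the pointwise inequalities give the sandwich $\cR_{\gamma'}(h,S) \le \cL(h,S) \le \cR_{\gamma}(h,S)$.

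The main technical step is the sensitivity bound, and I expect this to be the only non-routine part of the argument. For any fixed $h$ and $y$, the margin $\rho(h,\cdot,y)$ is $1$-Lipschitz in Hamming distance: given $x,x'$ with $\|x-x'\|_0=1$, any $\check x^*$ achieving $\rho(h,x,y)$ is also a valid ``mis-labeled witness'' for $(x',y)$ at Hamming distance in $\{\rho(h,x,y)-1,\rho(h,x,y)+1\}$, and by the symmetric argument $|\rho(h,x,y)-\rho(h,x',y)| \le 1$. Composing this with the $\frac{1}{(\gamma-\gamma')d}$-Lipschitz clipping in the definition of $\ell$, flipping one coordinate of one $x_i$ changes $\ell(h,(x_i,y_i))$ by at most $\frac{1}{(\gamma-\gamma')d}$, and hence changes $\cL(h,S)$ by at most $\frac{1}{(\gamma-\gamma')d\cdot|S|}$. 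Sampling $h^{priv} \in \cH$ with probability proportional to $\exp\!\bigl(-\tfrac{\eps}{2}(\gamma-\gamma')d\cdot|S|\cdot \cL(h,S)\bigr)$ is then $\eps$-sample-\PDP by the standard sensitivity-based analysis of the exponential mechanism, and its standard accuracy guarantee gives $\E[\cL(h^{priv},S)] \le \min_{h\in\cH}\cL(h,S) + O\!\left(\tfrac{\log|\cH|}{\eps(\gamma-\gamma')d|S|}\right)$.

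Chaining the sandwich with this utility bound yields
\[
\E[\cR_{\gamma'}(h^{priv},S)] \;\le\; \E[\cL(h^{priv},S)] \;\le\; \inf_{h\in\cH}\cR_\gamma(h,S) + O\!\left(\frac{\log|\cH|}{\eps(\gamma-\gamma')d\cdot|S|}\right),
\]
as claimed. Everything beyond the $1$-Lipschitzness of $\rho$ in Hamming distance is a direct application of the exponential mechanism, and label privacy is then handled separately by Lemma~\ref{lem:label-rr}.
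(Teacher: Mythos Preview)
Your proposal is correct and takes essentially the same approach as the paper: the paper defines the identical surrogate loss $\ell(h,(x,y)) = \clip_{[0,1]}\bigl(\tfrac{\gamma d - \dec(x,y,h)}{(\gamma-\gamma')d}\bigr)$ with $\dec = \rho$, asserts the sensitivity bound $\tfrac{1}{(\gamma-\gamma')d}$, applies the exponential mechanism, and concludes via the same sandwich $\cR_{\gamma'} \le \cL \le \cR_{\gamma}$. Your write-up is in fact slightly more explicit than the paper's, which states the sensitivity bound without spelling out the $1$-Lipschitzness of $\rho$ in Hamming distance.
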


\begin{proof}
For every $(x, y)$ and $h$, we let $\dec(x, y, h)$ to denote the distance from $x$ to the closest point whose label is not equal to $y$ (i.e., the distance from $x$ to the decision boundary); more formally,
\begin{align*}
\dec(x, y, h) := \min_{z, h(z) \ne y} \|x - z\|_0.
\end{align*}

We define the loss by
\begin{align*}
\ell(h, (x, y)) := \clip_{[0, 1]}\left(\frac{\gamma d - \dec(x, y, h)}{(\gamma - \gamma')d}\right).
\end{align*}
Observe that $\ell$'s sensitivity is at most $\Delta := \frac{1}{(\gamma - \gamma')d}$. By running the exponential mechanism~\cite{McSherryT07}, we obtain $h^{priv}$ such that
\begin{align*}
\E[|S| \cdot \cL(h^{priv}; S)] \leq \min_{h \in \cH} |S| \cdot  \cL(h; S) + O\left(\frac{\log |\cH|}{\eps(\gamma - \gamma')d}\right).
\end{align*}
By dividing the inequality on both sides by $|S|$ and noticing that $\cR_{\gamma'}(h; S) \leq \cL(h; S) \leq \cR_\gamma(h; S)$, we arrive at the desired bound.
\end{proof}

\subsection{Robust Learning of Halfspaces: Reduction to Nets}

We now turn our attention back to halfspaces. A first step is to notice that it suffices consider only halfspaces $\bw$ where $\bw$ belongs to some net. For $\nu > 0$, let $N(\nu)$ denote any $\nu$-net $N$ (under $\ell_1$ metric) of the unit $\ell_1$-ball and let $\cH_{\hs}^{N(\nu)} := \{h_{\bw} \mid \bw \in N\}$. Our formal reduction is stated below.

\begin{lemma} \label{lem:halfspace-net}
Suppose that there exists a $\eps$-sample-\PDP $(\gamma' + \nu, \gamma')$-robust learner for $\cH_{\hs}^{N(\nu)}$ with excess loss $\alpha$. Then, there is also an $\eps$-sample-\PDP $(\gamma' + 3\nu, \gamma')$-robust learner for $\cH_{\hs}$ with excess loss $\alpha$ (where the sample complexity remains the same).
\end{lemma}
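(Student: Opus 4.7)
My plan is to take $M$ to be identically $M_0$.  Since $\cH_{\hs}^{N(\nu)} \subseteq \cH_{\hs}$, the output of $M_0$ is already a valid hypothesis in $\cH_{\hs}$, so $M$ inherits both the $\eps$-sample-\PDP guarantee and the sample complexity of $M_0$.  For utility, the assumed guarantee on $M_0$ gives
\[
\E[\cR_{\gamma'}(M(S), \cD)] \;\le\; \inf_{h \in \cH_{\hs}^{N(\nu)}} \cR_{\gamma'+\nu}(h, \cD) + \alpha,
\]
so it will suffice to prove the distribution-independent approximation inequality $\inf_{h \in \cH_{\hs}^{N(\nu)}} \cR_{\gamma'+\nu}(h, \cD) \le \inf_{h \in \cH_{\hs}} \cR_{\gamma'+3\nu}(h, \cD)$.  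I plan to reduce this to the following pointwise claim: for every $h_\bw \in \cH_{\hs}$ (normalized WLOG to $\|\bw\|_1 = 1$ by scale invariance of halfspaces), there exists $h_{\bw'} \in \cH_{\hs}^{N(\nu)}$ such that whenever $h_\bw$ is $(\gamma'+3\nu)$-Hamming-robustly correct on $(x, y)$, $h_{\bw'}$ is $(\gamma'+\nu)$-Hamming-robustly correct on $(x, y)$.  I will take $\bw'$ to be any closest point of $N(\nu)$ to $\bw$, so that $\|\bw-\bw'\|_1 \le \nu$.

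To prove the pointwise claim, I will fix $\tilde x \in \{-1,+1\}^d$ with $\|\tilde x - x\|_0 \le (\gamma'+\nu)d$ and aim for $y\langle \bw', \tilde x\rangle > 0$.  Since $\|\tilde x\|_\infty = 1$, H\"older's inequality will give $|y\langle \bw - \bw', \tilde x\rangle| \le \|\bw-\bw'\|_1 \le \nu$ and hence $y\langle \bw', \tilde x\rangle \ge y\langle \bw, \tilde x\rangle - \nu$, so the task reduces to showing $y\langle \bw, \tilde x\rangle > \nu$.  For this I will construct an adversarial extension $\check x \in \{-1,+1\}^d$ of $\tilde x$ by flipping at most $2\nu d$ further coordinates of $\tilde x$---each chosen to further decrease $y\langle \bw, \cdot\rangle$---so that $\|\check x - x\|_0 \le (\gamma'+3\nu)d$ and $y\langle \bw, \check x\rangle \le y\langle \bw, \tilde x\rangle - \nu$.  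Invoking the $(\gamma'+3\nu)$-robustness of $h_\bw$ on $(x, y)$ will then give $y\langle \bw, \check x\rangle > 0$, hence $y\langle \bw, \tilde x\rangle > \nu$ as required.

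The hardest part will be constructing such a $\check x$: I need to argue that enough ``good'' coordinates exist, i.e., indices $j \notin \{j' : \tilde x_{j'} \ne x_{j'}\}$ at which flipping $\tilde x$ decreases $y\langle \bw, \cdot\rangle$, with the top $2\nu d$ of them accumulating $|w_j|$-mass at least $\nu/2$.  My intended approach is a charging argument that uses $\|\bw\|_1 = 1$ together with a dichotomy on the number of good coordinates: if there are too few, then $\sum_{j : y x_j w_j > 0} |w_j|$ is so small that $h_\bw$ cannot be $(\gamma'+3\nu)$-robust on $(x, y)$ in the first place (so the pointwise implication is vacuous); otherwise the top $2\nu d$ good coordinates supply the requisite mass.
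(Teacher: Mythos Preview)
Your high-level strategy is exactly the paper's: take $M=M_0$, reduce to the approximation inequality
\[
\inf_{h\in\cH_{\hs}^{N(\nu)}}\cR_{\gamma'+\nu}(h,\cD)\;\le\;\inf_{h\in\cH_{\hs}}\cR_{\gamma'+3\nu}(h,\cD),
\]
normalize $\|\bw\|_1=1$, pick $\bw'\in N(\nu)$, and prove the pointwise robustness implication by constructing an adversarial $\check x$ within $2\nu d$ extra flips of $\tilde x$. The paper phrases the last step as a contradiction argument (assume $\langle\bw',z\rangle<0$, get $\langle\bw,z\rangle<\nu$, then produce $z'$ with $\langle\bw,z'\rangle<0$), but this is logically the same as your direct version.

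The place where your plan diverges and becomes shakier is the construction of $\check x$. You impose the restriction that the newly flipped coordinates lie outside $\{j:\tilde x_j\ne x_j\}$, and then propose a dichotomy on the \emph{number} of such ``good'' coordinates, claiming that if there are too few then $\sum_{j:yx_jw_j>0}|w_j|$ must be small. That implication does not follow: the set $\{j:yx_jw_j>0\}$ also contains already-flipped coordinates, which could carry almost all the mass while the un-flipped ones carry very little. So the dichotomy as stated does not close.

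The restriction itself is unnecessary. By the triangle inequality $\|\check x-x\|_0\le\|\check x-\tilde x\|_0+\|\tilde x-x\|_0$, so you may freely flip \emph{any} $\lfloor 2\nu d\rfloor$ coordinates of $\tilde x$; in fact flipping an already-flipped coordinate only moves $\check x$ closer to $x$. Dropping the restriction, the construction is immediate: let $P=\{j:y\tilde x_jw_j>0\}$. Since $(\gamma'+3\nu)$-robustness gives $y\langle\bw,\tilde x\rangle>0$, we have $\sum_{j\in P}|w_j|>1/2$. Either $|P|\le 2\nu d$ and flipping all of $P$ yields $y\langle\bw,\check x\rangle=-1$, or $|P|>2\nu d$ and the top $2\nu d$ coordinates of $P$ carry mass at least $(2\nu d/|P|)\cdot\tfrac12\ge\nu$ by averaging, so two times that mass exceeds $\nu>y\langle\bw,\tilde x\rangle$ whenever the latter is below $\nu$. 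This is precisely the step the paper asserts (``it is always possible to find $z'\in B_{2\nu d}(z)$ such that $\langle\bw^*,z'\rangle<0$'') without writing out the averaging.
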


\begin{proof}
This follows almost immediately from a claim that
\begin{align} \label{eq:discretizing-margin}
\min_{h \in \cH^{N(\nu)}_{\hs}} \cR_{\gamma^0}(h; \cD) \leq \min_{h \in \cH_{\hs}} \cR_{\gamma^0 + 2\nu}(h; \cD)
\end{align}
for any $\gamma^0 > 0$. To see that this is true, suppose that $h_{\bw^*} := \argmin_{h \in \cH_{\hs}} \cR_{\gamma^0 + \nu}(h; \cD)$. We may rescale $\bw^*$ so that $\|\bw^*\|_1 = 1$. Then, let $\bw'$ be the closest point in $N(\nu)$ to $\bw^*$; by our choice of $\bw^*$, we have $\|\bw - \bw^*\|_1 \leq \nu$. Now, suppose that an example $(x, y)$ is $(\gamma^0 + 2\nu)$-robustly classified by $\bw^*$. We will show that it is $\gamma^0$-robustly classified by $\bw'$, which implies the claim. Assume w.l.o.g. that $y = 1$. Suppose for the sake of contradiction that there exists $z \in B_{\gamma^0 d}(x)$ for which $\left<\bw', z\right> < 0$. Since $\|\bw - \bw^*\|_1 \leq \nu$ and $\|z\|_{\infty} = 1$, we have $\left<\bw^*, z\right> < \nu$. From $\|\bw^*\|_1 = 1$, it is always possible to find $z' \in B_{2\nu d}(z)$ such that $\left<\bw^*, z'\right> < 0$, but $z \in B_{(\gamma^0 + 2\nu)d}(x)$ and therefore contradicts with the assumption that $(x, y)$ is $(\gamma^0 + 2\nu)$-robustly classified.

From the above claim, by simply running the learner for $\cH_{\hs}^{N(\nu)}$, it outputs $h^{priv}$ such that
\begin{align*}
\E[\cR_{\gamma'}(h^{priv}; \cD)] \leq \min_{h \in \cH^{N(\nu)}_{\hs}} \cR_{\gamma' + \nu}(h; \cD) + \alpha \overset{\eqref{eq:discretizing-margin}}{\leq} \min_{h \in \cH_{\hs}} \cR_{\gamma' + 3\nu}(h; \cD) + \alpha.
\qquad\qquad
\qedhere
\end{align*} 
\end{proof}

\subsubsection{Private Robust Learner for Halfspaces}

We start by providing a private learner for $\cH_{\hs}^{N(\nu)}$.

\begin{lemma}
There is an $\eps$-sample-\PDP $(\gamma' + 2\nu, \gamma')$-robust learner for $\cH_{\hs}^{N(\nu)}$ with sample complexity $O\left(\frac{1}{\alpha^2 \nu^2} + \frac{\log(1/\nu)}{\alpha  \eps \cdot \nu}\right)$.
\end{lemma}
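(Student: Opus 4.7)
The plan is to instantiate the $\eps$-sample-\PDP exponential mechanism of \Cref{lem:erm-robust} on the finite hypothesis class $\cH := \cH_{\hs}^{N(\nu)}$ with margin parameters $\gamma'+2\nu$ (benchmark) and $\gamma'$ (output), so that the margin gap equals $2\nu$. The $\eps$-sample-\PDP guarantee is inherited immediately; it only remains to combine the empirical bound of \Cref{lem:erm-robust} with a generalization argument.

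The empirical bound from \Cref{lem:erm-robust}, applied to the smoothened surrogate loss $\cL$ defined in its proof, reads
\[\ex{}{\cL(h^{priv}; S)} \le \min_{h \in \cH} \cL(h; S) + O\!\left(\frac{\log |N(\nu)|}{\eps \cdot 2\nu \cdot d \cdot n}\right).\]
A standard volumetric covering argument gives $|N(\nu)| \le (3/\nu)^d$, so $\log|N(\nu)| = O(d\log(1/\nu))$. The factor of $d$ in the numerator cancels the one in the denominator, leaving empirical excess $O(\log(1/\nu)/(\eps\nu n))$---exactly the second term of the claimed sample complexity.

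For the generalization piece, the key observation is that $\cL$ sandwiches the two robust losses: $\cR_{\gamma'}(h;\cdot) \le \cL(h;\cdot) \le \cR_{\gamma'+2\nu}(h;\cdot)$ for every halfspace $h$. Chaining the sandwich with the ERM bound,
\[\ex{}{\cR_{\gamma'}(h^{priv}; \cD)} \le \ex{}{\cL(h^{priv}; \cD)} \le \ex{}{\cL(h^{priv}; S)} + \ex{}{\sup_{h \in \cH}\!\big(\cL(h; \cD) - \cL(h; S)\big)},\]
while $\ex{}{\min_h \cL(h; S)} \le \min_h \cL(h; \cD) \le \inf_{h \in \cH} \cR_{\gamma'+2\nu}(h; \cD)$ by Jensen and the sandwich. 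The only remaining quantity is the Rademacher-style uniform deviation of $\cL$ over $\cH$. I would \emph{not} apply Hoeffding plus a union bound over $\cH$ here (which would cost $\widetilde O(\sqrt{d/n})$ through $\log|N(\nu)|$); instead, exploiting that $\cL$ is an $O(1/\nu)$-Lipschitz ramp of the halfspace decision margin, Talagrand's contraction applied to the halfspace function class yields the dimension-free rate $O(1/(\nu\sqrt n))$.

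Putting the pieces together,
\[\ex{}{\cR_{\gamma'}(h^{priv}; \cD)} \le \inf_{h \in \cH} \cR_{\gamma'+2\nu}(h; \cD) + O\!\left(\frac{1}{\nu\sqrt n} + \frac{\log(1/\nu)}{\eps\nu n}\right),\]
and requiring each term to be at most $\alpha/2$ gives the claimed sample complexity $n = O(1/(\alpha^2\nu^2) + \log(1/\nu)/(\alpha\eps\nu))$. The main obstacle is the generalization step: since $\log|\cH_{\hs}^{N(\nu)}|$ scales linearly in $d$, a naive union bound is too lossy; routing the uniform convergence through the Lipschitz margin surrogate $\cL$, rather than through the raw robust $0$-$1$ loss, is what removes the spurious $\sqrt d$ factor.
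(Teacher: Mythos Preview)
Your overall architecture is the same as the paper's: run the exponential-mechanism ERM of \Cref{lem:erm-robust} on the finite class $\cH_{\hs}^{N(\nu)}$, use $\log|N(\nu)|=O(d\log(1/\nu))$ so that the factor of $d$ cancels against the $d$ in the sensitivity, and then add a uniform-deviation term to pass from empirical to population robust risk. The paper splits the $2\nu$ margin budget as $\nu$ for ERM (comparing $\cR_{\gamma'}$ to $\cR_{\gamma'+\nu}$ on $S$) and $\nu$ for generalization (comparing $\cR_{\gamma'+\nu}$ on $S$ to $\cR_{\gamma'+2\nu}$ on $\cD$), whereas you put the full $2\nu$ into the surrogate $\cL$ and generalize $\cL$ directly; that difference is cosmetic.

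The real issue is your contraction step. You claim $\cL$ is ``an $O(1/\nu)$-Lipschitz ramp of the halfspace decision margin'' and that Talagrand contraction over the linear class gives $O(1/(\nu\sqrt n))$. But the surrogate in \Cref{lem:erm-robust} is a Lipschitz function of $\dec(x,y,h_w)$, the \emph{Hamming} distance from $x$ to the decision boundary, not of the linear score $y\langle w,x\rangle$. These two quantities are not related by a Lipschitz map: $\dec$ depends on the sorted magnitudes $|w_j|$ (greedily flipping the heaviest coordinates), so one cannot write $\cL(h_w,(x,y))=\phi(y\langle w,x\rangle)$ for any fixed Lipschitz $\phi$, and contraction over the linear class does not apply. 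Nor is there an evident dimension-free Rademacher bound for the class $\{(x,y)\mapsto\dec(x,y,h_w):w\}$. The paper instead invokes the margin-based bounds of \cite{BartlettM02,McAllester03} and states the rate $O(1/(\nu^{2}\sqrt n))$, which is what is propagated through \Cref{lem:robust-halfspaces-sample-only}, \Cref{lem:robust-halfspaces}, and the summary table (sample complexity $\sim 1/\gamma^{4}$). In particular the $1/(\alpha^{2}\nu^{2})$ in the lemma statement appears to be a typo for $1/(\alpha^{2}\nu^{4})$; your argument matches the typo but does not actually justify the sharper $1/\nu$ generalization rate you assert.
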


\begin{proof}
We run the ERM algorithm from \Cref{lem:erm-robust}, which gives $h^{priv}$ such that
\begin{align*}
\E[\cR_{\gamma'}(h^{priv}; S)]
&\leq \min_{h \in \cH^{N(\nu)}_{\hs}} \cR_{\gamma' + \nu}(h; S) + O\left(\frac{\log |\cH^{N(\nu)}_{\hs}|}{\eps \nu d \cdot n}\right) \\
&= \min_{h \in \cH^{N(\nu)}_{\hs}} \cR_{\gamma' + \nu}(h; S) + O\left(\frac{\log(1/\nu)}{\eps \nu \cdot n}\right).
\end{align*}
From standard generalization bounds~\cite{BartlettM02,McAllester03}, we also have 
\begin{align*}
\max_{h \in \cH^{N(\nu)}_{\hs}} |\cR_{\gamma' + \nu}(h; S) - \cR_{\gamma' + 2\nu}(h; S)| \leq O\left(\frac{1}{\nu^2 \sqrt{n}}\right).
\end{align*}
Combining the above three inequalities, we arrive at the desired bound:
\begin{align*}
\cR_{\gamma'}(h^{priv}; S) \leq \min_{h \in \cH^{N(\nu)}_{\hs}} \cR_{\gamma' + 2\nu}(h; \cD) + O\left(\frac{1}{\nu^2 \sqrt{n}} + \frac{\log(1/\nu)}{\eps \nu \cdot n}\right).
\qquad\qquad\qedhere
\end{align*}
\end{proof}

Combining the above lemma with \Cref{lem:halfspace-net} with $\nu = (\gamma - \gamma')/5$, we arrive at:

\begin{lemma} \label{lem:robust-halfspaces-sample-only}
There is an $\eps$-sample-\PDP $(\gamma, \gamma')$-robust learner for halfspaces with excess error $$O\left(\frac{1}{(\gamma - \gamma')^2 \sqrt{n}} + \frac{\log(1/(\gamma - \gamma'))}{\eps (\gamma - \gamma') \cdot n}\right).$$
\end{lemma}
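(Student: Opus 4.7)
The plan is to obtain Lemma \ref{lem:robust-halfspaces-sample-only} by directly chaining the two preceding building blocks: the net-based private learner for $\cH_{\hs}^{N(\nu)}$ and the net-to-halfspaces reduction in Lemma \ref{lem:halfspace-net}. Both ingredients preserve sample-only partial DP, so the composition retains the $\eps$-sample-\PDP guarantee. The only design choice is to pick the net granularity $\nu$ so that the final margin gap closes to exactly $\gamma - \gamma'$.

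First, I would invoke the preceding lemma, which by running the ERM mechanism of Lemma \ref{lem:erm-robust} on the finite class $\cH_{\hs}^{N(\nu)}$ and combining with a standard generalization bound for $\gamma$-robust losses (of order $1/(\nu^2 \sqrt n)$) yields an $\eps$-sample-\PDP $(\gamma' + 2\nu, \gamma')$-robust learner on the net with excess error
\[
O\!\left(\frac{1}{\nu^2 \sqrt{n}} + \frac{\log(1/\nu)}{\eps \nu \cdot n}\right).
\]
Next, I would apply Lemma \ref{lem:halfspace-net} to lift this learner to all of $\cH_{\hs}$. The key claim underlying that lemma is that rounding an optimal unit-$\ell_1$ halfspace $\bw^*$ to its nearest net point $\bw' \in N(\nu)$ costs at most $2\nu$ in the effective margin, so any $(\gamma^0 + 2\nu)$-robustly classified sample under $\bw^*$ is $\gamma^0$-robustly classified by $\bw'$; hence a $(\tilde\gamma, \gamma')$-robust net learner automatically becomes a $(\tilde\gamma + 2\nu, \gamma')$-robust learner over all halfspaces, with identical excess error.

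Applied to our net learner with margin pair $(\gamma' + 2\nu, \gamma')$, the reduction produces a $(\gamma' + 4\nu, \gamma')$-robust learner for $\cH_{\hs}$. Choosing $\nu = \Theta(\gamma - \gamma')$ (e.g.\ $\nu = (\gamma-\gamma')/5$ as in the statement, or any constant fraction ensuring $\gamma' + 4\nu \le \gamma$) ensures the resulting learner is $(\gamma, \gamma')$-robust as required. Substituting this value of $\nu$ into the net learner's excess error bound yields the final estimate
\[
O\!\left(\frac{1}{(\gamma - \gamma')^2 \sqrt{n}} + \frac{\log(1/(\gamma - \gamma'))}{\eps (\gamma - \gamma') \cdot n}\right).
\]

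Since both steps are black-box applications of already-proved results, there is no real obstacle beyond bookkeeping: the only subtle point is that the net-reduction absorbs exactly $2\nu$ of margin (rather than the $3\nu$ that would appear if one applied Lemma \ref{lem:halfspace-net} literally starting from a $(\gamma' + \nu, \gamma')$-learner), so one must track that $\gamma - \gamma'$ is a constant multiple of $\nu$ when plugging into the error. The privacy guarantee carries through by post-processing, as the reduction does not touch the data; the sample complexity is inherited from the ERM / generalization balance already established.
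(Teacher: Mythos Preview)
Your proposal is correct and follows exactly the paper's approach: the paper's proof is the single sentence ``Combining the above lemma with \Cref{lem:halfspace-net} with $\nu = (\gamma - \gamma')/5$,'' which is precisely the chaining you describe. Your observation that the underlying net-rounding claim costs $2\nu$ of margin (so the $(\gamma'+2\nu,\gamma')$ net learner becomes a $(\gamma'+4\nu,\gamma')$ learner over all halfspaces) is the right bookkeeping, and any choice $\nu \le (\gamma-\gamma')/4$ suffices.
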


Combining the above lemma with \Cref{lem:label-rr}, we get \Cref{lem:robust-halfspaces}.

\section{Discussion}
\label{sec:disc}

In this section, we discuss the meaning of partial DP.
Ideally, of course, we would provide a standard DP guarantee with a small privacy loss bound (say, $(\varepsilon,\delta)$-DP with $\varepsilon\le1$ and $\delta \le 10^{-6}$).
However, in practice, we are seeing large privacy loss bounds ($\varepsilon\ge10$) and we lack a satisfactory way to interpret such guarantees.

Thus the premise of this discussion is that we are in a setting where, in order to provide reasonable utility, we need a large $\varepsilon$ under the standard definition of $(\varepsilon,\delta)$-DP. The fundamental question is: \emph{How can we justify $(\varepsilon,\delta)$-DP with large $\varepsilon$?} And, even more importantly, when can we \emph{not} justify this? 

Intuitively, large $\varepsilon$s can be justified by informally arguing DP is a worst-case definition and this worst case is not realistic.
The goal of partial DP is to provide a framework for formalizing this intuition for justifying large $\varepsilon$ which is precise enough to also fail to justify large $\varepsilon$ when the algorithm at hand does indeed exhibit worst-case behaviour.
For example, if $\varepsilon(x,x')$ is large when the only difference between $x$ and $x'$ is that the person visited a given website, then we clearly do not have a meaningful privacy guarantee.

\para{Interpretation.}
Partial DP provides a language to formalize the intuitive notion of a ``nice algorithm.'' Specifically, it allows us to rule out algorithms that act like performing randomized response on some sensitive feature. For example, if we want to formalize the constraint that the algorithm does not reveal whether or not a given person has a certain disease, we would require that $\varepsilon(x,x')$ is small whenever the only difference between $x$ and $x'$ is that person's disease status.

To interpret a partial DP guarantee, we must also discuss what constitutes a ``realistic adversary.''
There are many different ways to restrict the adversary (see \S\ref{sec:adv-ass}). Per-attribute partial DP naturally corresponds to assuming that the adversary is interested in learning a function of only a few attributes, whereas standard DP protects an arbitrary function of all the attributes of a person.\footnote{The quantitative guarantee will degrade gracefully with the number of attributes the adversary is interested in.}
E.g., if the dataset is employment records, we can assume that the adversary wishes to learn the target's salary, but is not particularly interested in learning their age or whether or not they are an employee. 
Such assumptions can be justified in a variety of ways, depending on context. In the prior example, age may already be public information and the employer may be willing to disclose who is or is not an employee.
In general, the interpretation of partial DP is context-dependent; the effectiveness of the guarantee depends on what kind of information leakage is concerning. 

\para{Limitations.}
It is also important to discuss the attacks that partial DP does \emph{not} protect against.\footnote{Partial DP implies some standard DP guarantee and hence protects against arbitrary adversaries, but the privacy parameter may be large.}
Membership inference attacks \citep{ShokriSSS17,DworkSSUV15} are an example of a worst-case attack --- whether a person is included in the the dataset or excluded is a function of all the attributes and hence partial DP does not provide a better guarantee than standard DP. Whether membership of the dataset is sensitive depends on the context.
For example, if the dataset consists of the medical records of patients with a certain medical condition, then a membership inference attack can reveal that the target of the attack has that medical condition. In this case, per-attribte partial DP is not particularly useful.
On the other hand, if the dataset consists of all people with public non-anonymous profiles on a social media network, then membership in this dataset is likely not sensitive.

\para{Correlated Attributes.}
Sensitive information may be repeated across multiple attributes and, in this case, the guarantee of per-attribute partial DP would rapidly degrade. E.g., if each attribute is a person's location at a given time, then their home address will be repeated across many attributes; hence such time series data is a bad use case for per-attribute partial DP.

Some attributes will be loosely correlated -- e.g., age and wealth -- but we do not consider this to be a problem if the relationship is not strict. 
While it is possible to, say, guess the income of a person based on their demographic information, this is generally \emph{not} considered to be a privacy violation \cite{frankblog2016,DPorg-inference-is-not-a-privacy-violation}.

In general, privacy should be thought of in terms of causal relationships, not statistical correlations \cite{tschantz2017differential}. Indeed the definition of DP is precisely a causal property, as it considers a pair of datasets, which correspond to the real dataset and a hypothetical counterfactual dataset. The definition of DP does not make any distributional assumptions about the data.

Correlations are present not only between the attributes of a single person, but also between the data of different people. For example, whether or not a given person has an infectious disesase is highly correlated with whether or not the people around them have that disease. Thus revealing the fact that there is an outbreak of an infectious disease reveals information about specific individuals. But this is not a privacy violation. (And if we treated this as a privacy violation, it would prevent us from revealing useful information about disease outbreaks.) However, revealing a specific person's test result is a potential privacy violation -- the key is that there is a direct causal relationship between someone's data (i.e., their test result) and the information being released. 
By the same token, a person may have many attributes that are correlated with having a certain disease, but releasing those correlated attributes is fundamentally different from releasing an actual diagnosis.

\para{Concrete Example: 2020 US Census.}\label{par:census}
The redistricting data from the 2020 US Census was released in a DP manner. The generally quoted guarantee is $(17.14,10^{-10})$-DP plus $(2.47,10^{-10})$-DP \cite{abowd20222020} and applying basic composition to these two releases gives $\varepsilon = 19.61$.
To be more precise, the redistricting data satisfies $2.56+0.07$-zCDP, which implies $(13.8,10^{-6})$-DP.

The Census Bureau provide provide a detailed breakdown of the privacy allocation \cite{abowd20222020,census2021breakdown,censusparams2021}. This exactly corresponds to a partial CDP guarantee. Their TopDown algorithm computes multiple histograms across subsets of attributes (which bears some similarity to our heavy hitters algorithm in Section \ref{sec:heavy_hitters}). To determine the $\varepsilon$-$\nabla$CDP guarantee, for any $x,x'\in\mathcal{X}$, determine the set of attributes on which they differ and then we look at which histograms involve those attributes to determine $\varepsilon(x,x')$. Histograms that only involve attributes on which $x$ and $x'$ agree need not be accounted for under partial DP.

To make this partial CDP guarantee concrete, we can relate it to a specific attack. The Census Bureau conducted a simulated reconstruction and reidentification experiment \cite{desfontainesblog20210526}. The punchline of their simulated attack was learning people’s race and ethnicity from the data that was publicly released from the 2010 US Census. We can calculate a privacy guarantee for just these two attributes in the 2020 release. Specifically, if we allow a person’s race and ethnicity to change, but their other attributes are fixed, then we get a $1.02$-zCDP guarantee, which yields $(7.85,10^{-6})$-DP. That is, in terms of partial DP $\frac12 \varepsilon(x,x')^2 \le 1.02$ when $x$ and $x'$ differ only on the race and ethnicity fields.

\para{Generality of Partial DP Definition.}
Our algorithmic results (in Sections \ref{sec:gen_alg_results}, \ref{sec:heavy_hitters}, \& \ref{sec:robust-halfspaces}) focus on per-attribute partial DP. For simplicity, we assume each attribute has the same privacy parameter $\varepsilon_0$. But not all attributes will be equally sensitive. Thus it is natural to to consider per-attribute guarantees where each attribute has its own privacy parameters. In the 2020 US Census example, we see that the attributes have different privacy parameters.
Our definition of partial DP is general enough to capture such non-uniform per-attribute privacy guarantees. 

As mentioned in Section \ref{sec:related-work}, it is possible to give an even more general definition than we do, where the $\varepsilon$ metric considers a pair of datasets, not just a pair of individual records. 
While such added generality may seem like a feature, it makes such a definition harder to interpret. In particular, it becomes hard to relate such a definition back to standard DP. Thus we deliberately choose not to define partial DP so generally.

\subsection{Assumptions about the Adversary}\label{sec:adv-ass}

To give refined privacy guarantees (including, but not limited to, partial DP) meaning, we must give a characterization of what we might consider reasonable restricted adversaries, which we can then use to interpret our definition. In this section, we discuss different types of restricted adversaries.

The definition of DP does not explicitly mention an adversary; it simply states that the output distribution of the algorithm does not change much (as measured by $\varepsilon$) if we arbitrarily change the data of one individual in the input of the algorithm.
However, to interpret this definition and give meaning to the privacy guarantee, we must envisage an adversary who sees the output of the algorithm, combines this information with their knowledge, and thereby potentially learns a piece of information about an individual that they should not have been able to learn. The adversary could be a stranger, a close friend or relative, a government entity, or a private entity we do business with and each of these potential adversaries will have different knowledge, resources, and goals.

In effect, standard DP makes minimal assumptions about the adversary---the adversary can have near-complete knowledge of the dataset and can target an arbitrary piece of information about an arbitrary individual.

There are four ways in which we could make assumptions that constrain the adversary:

    \textbf{(i) Assumptions about the Adversary's Knowledge.} 
    DP effectively permits the adversary to know everything about the dataset except for the one bit of private information that they are seeking to extract. Although the adversary may have access to a lot of information from auxiliary data sources, it is unrealistic to assume that this information is so complete and so accurate. Thus it is natural to assume some uncertainty about the dataset in the eyes of the adversary; this could be formalized by endowing the dataset with randomness and exploiting this randomness in the privacy guarantee \cite{BhaskarBGLT11,bassily2013coupled,bhowmick2018protection}.
    
    However, such assumptions about the adversary's knowledge are very brittle \cite{DPorg-average-case-dp,DPorg-privacy-composition}. In particular, assumptions about the adversary's knowledge can be invalidated by \emph{future} releases of information. That is, we may assume that certain information is unknown to the adversary, but subsequently an auxiliary dataset is made available that contains this information; when that happens, it is too late to retract the output of our algorithm. Such assumptions are also not robust to composition. That is, the output of our algorithm may itself invalidate these assumptions, so, if we run another algorithm subsequently, we cannot make the same assumptions again.  
    
    It is also difficult to effectively formulate such assumptions about the adversary's knowledge. For example, it is tempting to assume that the data consists of i.i.d.~samples from some nice distribution. However, this corresponds to assuming an entirely na\"ive adversary with effectively no knowledge of the dataset beyond the general characteristics of the population it is collected from.

    \textbf{(ii) Assumptions about the Target Individual or Dataset.}
    Distributional assumptions about the data can also encode a different type of privacy guarantee (as opposed to that distribution representing the uncertainty of the adversary). 
    Intuitively, we can encode the assumption that the adversary only targets ``typical'' individuals in ``typical'' datasets and the privacy guarantee may fail for individual outliers or abnormal datasets.
    For example, a basic DP algorithm is to add noise to some statistic that is scaled to its sensitivity; an average-case assumption about the target individual and dataset would allow us to replace the worst-case sensitivity with a notion of average-case sensitivity \cite{HallRW11,TriastcynF20}. However, an individual deviating significantly from the rest of the dataset will have a correspondingly weaker privacy guarantee \cite{Wang19}. 
    
    This approach has two deficiencies: 
    First, providing unequal privacy protection raises ethical questions. 
    Second, it may be unnecessary to make this compromise. For example, techniques such as clipping can control the worst-case sensitivity or we can use smooth sensitivity \cite{NissimRS07}. We can also test whether the dataset is typical before performing the analysis and abort if it is not \cite{DworkL09}. Thus it is often possible to obtain the benefits of average-case assumptions on the data while still attaining standard DP.

    \textbf{(iii) Assumptions about the Adversary's Capabilities.}
    Performing a privacy attack generally requires effort. Thus we may make assumptions about the adversary's ability or willingness to perform the attack \cite{chaudhuri2019capacity}. A good example is computational DP~\cite{MironovPRV09}, where we assume that the adversary's computational power is limited and thus they cannot, for example, break a cryptographic system.
    
    We can also assume that the adversary will only perform certain types of attacks. For example, $k$-anonymity and related definitions are tailored to preventing a specific style of record-linkage attacks. In the same vein, the data curator can simulate an attack on the output of their algorithm \cite{CarliniLEKS19,JagielskiUO20,carlini2022membership}. On one hand, the success of the simulated attack establishes that the algorithm is not DP---and this can be used to check the privacy analysis \cite{tramer2022debugging}. On the other hand, the failure of the simulated attack establishes a privacy guarantee that is meaningful as long as the real adversary is similar to the simulated adversary.

    \textbf{(iv) Assumptions about the Adversary's Goals.}
    DP protects against an adversary seeking to learn an arbitrary one-bit function of the target individual's data. Equivalently (up to a factor of two in the privacy parameter), it prevents the adversary from learning whether or not the target individual's data was included in the dataset. While being included in the dataset may be sensitive depending on how the dataset was collected \cite{ShokriSSS17,DworkSSUV15}, this often does not correspond to a realistic threat.
    
    Thus we can relax the definition to protect only certain pieces of information from attacks. This corresponds to making an assumption about what function the adversary wants to learn about the target individual. 
    For example, if the dataset corresponds to employment records, we can assume that the adversary wishes to learn the target's salary or their performance rating, but is not particularly interested in learning their age or whether or not they are an employee. 
    Such assumptions can be justified in a variety of ways. In the prior example, it may be the case that age is already public information and that the employer is willing to disclose who is or is not an employee.

Our partial DP approach corresponds to making assumptions about the adversary's goals. Specifically, we provide guarantees for adversaries whose goal is to learn a single attribute or a function of a few attributes. %
Whether such an assumption corresponds to a realistic adversary will depend on the application domain.

\paragraph{Comparison to Prior Approaches.}
As discussed in Section \ref{sec:related-work}, there has been a lot of prior exploration of privacy definitions; and much of it is very similar to our definition.
Yet, despite this exploratory work, there has been very little adoption of these relaxations of DP in either theory or practice.\footnote{We remark that quantitative relaxations of DP, such as Concentrated DP \cite{DworkR16,BunS16}, R\'enyi DP \cite{Mironov17}, and Gaussian DP \cite{dong2019gaussian}, have seen widespread adoption. However, this is orthogonal to our work; these definitions change how we measure closeness of distributions, whereas our work changes which distributions we compare. Similarly, different trust models, such as local DP \cite{KasiviswanathanLNRS11}, shuffled DP \cite{cheu2019distributed,erlingsson2019amplification}, multi-central DP \cite{steinke2020multi}, and pan-privacy \cite{dwork2010pan}, have been explored and/or deployed, but these are also orthogonal; they change how the adversary interacts with the system, rather than the final privacy guarantee.}
Why is this? And what makes our work different?

Although we cannot say with certainty why there has been a limited adoption of these alternative versions of DP, we believe that most prior works either do not (i) give a convincing interpretation of their privacy definition or they do not (ii) demonstrate that the new definition opens up sufficiently interesting algorithmic applications (or both). We now reiterate how our work addresses these questions.

\textbf{(i)} In Section \ref{sec:adv-ass}, we argue that a privacy definition should be interpreted in terms of the adversaries it protects against, and we argue that partial DP can be interpreted as providing stronger privacy protections against adversaries that are only interested in part of a person's record, such as a function of only a few attributes.
We emphasize again that partial DP guarantees should be viewed as complementary the standard notion of DP and not a replacement.  %
Thus partial DP does not represent a radical departure from the established definition of DP. The value of partial DP is in the setting where we are pushing the limits of DP---i.e., the privacy loss bound $\varepsilon$ of the standard DP definition is uncomfortably large (but not crazy). Partial DP provides a formalism to ``break down'' the $\varepsilon$ parameter and tie it to a specific adversary or attack.

\textbf{(ii)} In Sections \ref{sec:gen_alg_results}, \ref{sec:heavy_hitters}, and \ref{sec:robust-halfspaces}, we provide a variety of algorithmic results that fit the (per-attribute) partial DP definition, which demonstrates that partial DP opens up interesting algorithmic questions.
There are two salient points in these results: First, they show that there is a separation between the per-attribute partial DP parameter and the standard DP parameter, which implies that the interpretation of partial DP could be meaningful. Second, since the algorithms we analyze are quite diverse, they demonstrate that our definition is not ``overfitted'' to one particular algorithm. 

\paragraph{Bayesian \& Information Theoretic Interpretations.}
Applying Bayes' law to the standard DP definition gives us a ``semantic'' interpretation of the guarantee---regardless of the adversary's prior beliefs, after seeing the DP output, their posterior beliefs cannot change much based on a single person's data \cite{kasiviswanathan2014semantics}. 
We can give a similar interpretation for pure partial DP:
\begin{proposition}\label{prop:semantic}
    Let $\mathcal{X} = \mathcal{X}_1 \times \cdots \times \mathcal{X}_d$. 
    Let $M : \mathcal{X}^n \to \mathcal{Y}$ satisfy $\varepsilon$-$\nabla_0$DP.
    Let $P$ be a distribution on $\mathcal{X}^n$ representing the adversary's prior beliefs. %
    Assume that for some $s = \{j_1, \cdots, j_{|s|}\} \subset [d]$, the distribution $P$ can be decomposed as a product distribution over the attributes given by $s$ and the remaining $d-|s|$ attributes.
    Let $P_{X|M(X)=y}$ denote the conditional distribution of $X$ obtained by drawing $X \gets P$ and conditioning on the event $M(X)=y$ for some fixed $y \in \mathcal{Y}$. 
    Similarly, let $P_{X|M(X_{-i})=y}$ denote the conditional distribution of $X$ obtained by drawing $X \gets P$ and conditioning on the event $M(X_{-i})=y$ for some fixed $y \in \mathcal{Y}$ and $i \in [n]$, where $X_{-i}$ denotes $X$ with the $i$th entry removed or blanked. 
    Let $P_{X_s}$ denote the marginal distribution on $\mathcal{X}_{j_1} \times \cdots \times \mathcal{X}_{j_{|s|}}$ obtained by sampling $X \gets P$ and only revealing the attributes indexed by $s$. Define $P_{X_s|M(X)=y}$ and $P_{X_s|M(X_{-i})=y}$ analogously as marginals of the conditional distributions.
    Then, for all $y \in \mathcal{Y}$ and all $i \in [n]$, \[d_{\mathrm{TV}}(P_{X_s|M(X)=y},P_{X_s|M(X_{-i})=y}) \le e^{2|s|\varepsilon}-1.\]
\end{proposition}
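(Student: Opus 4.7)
The plan is to exploit the product decomposition $P = P_{X_s} \otimes P_{X_{-s}}$ to reduce the claim to the standard Bayesian/semantic interpretation of differential privacy applied to an auxiliary mechanism acting only on the $s$-attributes. Specifically, define $\hat M : (\mathcal{X}_{j_1} \times \cdots \times \mathcal{X}_{j_{|s|}})^n \to \mathcal{Y}$ on input $A$ by first sampling $B \sim P_{X_{-s}}$ and then returning $M(A, B)$, where $(A, B)$ denotes the natural dataset in $\mathcal{X}^n$ obtained by recombining the $s$-coordinates of $A$ with the $-s$-coordinates of $B$. Because $P$ factors, drawing $A \sim P_{X_s}$ and running $\hat M(A)$ reproduces exactly the joint law of $(X_s, M(X))$ for $X \sim P$; applying the analogous reasoning with person $i$'s data removed on both sides yields
\[ P_{X_s \mid M(X) = y} = P_{A \mid \hat M(A) = y} \quad \text{and} \quad P_{X_s \mid M(X_{-i}) = y} = P_{A \mid \hat M(A_{-i}) = y}, \]
where $A \sim P_{X_s}$ and $A_{-i}$ matches the ``removed or blanked'' convention used for $X_{-i}$.

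Next I would verify that $\hat M$ is standard $(|s|\varepsilon)$-DP, viewing one person's entire $s$-record as a single entry. Fix a pair $A, A' \in (\mathcal{X}_{j_1} \times \cdots \times \mathcal{X}_{j_{|s|}})^n$ that differ on exactly one person's $s$-record, and fix any $b \in \mathcal{X}_{-s}^n$: then the full datasets $(A, b)$ and $(A', b)$ differ in at most $|s|$ attributes of a single person, so by $\varepsilon$-$\nabla_0$DP of $M$ together with group privacy (the triangle inequality for R\'enyi divergences used in Proposition~\ref{prop:ppdp-conversion}), $\Pr[M(A, b) = y]/\Pr[M(A', b) = y] \in [e^{-|s|\varepsilon}, e^{|s|\varepsilon}]$ for every $y$. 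Averaging this pointwise multiplicative bound over $b \sim P_{X_{-s}}$ preserves it, giving $\Pr[\hat M(A) = y]/\Pr[\hat M(A') = y] \in [e^{-|s|\varepsilon}, e^{|s|\varepsilon}]$.

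It then remains to invoke the Bayesian/semantic interpretation of standard DP \citep{kasiviswanathan2014semantics} for $\hat M$. A short Bayes-rule calculation shows that the ratio of the two posterior densities at any $a$ equals
\[ \frac{\Pr[\hat M(a) = y]}{\Pr[\hat M(a_{-i}) = y]} \cdot \frac{\mathbb{E}_{A' \sim P_{X_s}}[\Pr[\hat M(A'_{-i}) = y]]}{\mathbb{E}_{A' \sim P_{X_s}}[\Pr[\hat M(A') = y]]}, \]
and by $(|s|\varepsilon)$-DP of $\hat M$ each factor lies in $[e^{-|s|\varepsilon}, e^{|s|\varepsilon}]$ (the second as a prior-weighted average of the first, which preserves two-sided multiplicative bounds). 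Hence the full ratio of posterior densities lies in $[e^{-2|s|\varepsilon}, e^{2|s|\varepsilon}]$, which converts immediately to the TV distance bound $e^{2|s|\varepsilon} - 1$.

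The most delicate step is the identification of posteriors in the first paragraph: making sure that the ``removed or blanked'' convention for $X_{-i}$ at the $M$-level corresponds cleanly to the analogous $A_{-i}$ convention at the $\hat M$-level. This is precisely where the product-structure hypothesis on $P$ is essential, since it decouples $X_s$ and $X_{-s}$ in the Bayesian update and allows the auxiliary randomness $B \sim P_{X_{-s}}$ to absorb the non-$s$ attributes of person $i$ consistently on both sides. The remainder of the argument is routine bookkeeping following the standard semantic-DP template.
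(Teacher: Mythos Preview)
Your proposal is correct and follows essentially the same approach as the paper. The paper's own proof is a one-sentence sketch---absorb the attributes in $[d]\setminus s$ (using the product hypothesis) so that what remains is a $(|s|\varepsilon)$-DP mechanism, then invoke the semantic interpretation of \cite{kasiviswanathan2014semantics}---and your construction of $\hat M$ together with the Bayes-rule calculation is exactly a fleshed-out version of that outline; you also correctly flag the one genuinely delicate point (aligning the ``removed or blanked'' convention on both sides of the reduction).
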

The proof of Proposition \ref{prop:semantic} directly follows that of \cite{kasiviswanathan2014semantics}. The assumptions of the proposition allow us to essentially ignore all the attributes indexed by $[d]\setminus s$ and, once we discard those irrelevant attributes, we have a $(|s|\varepsilon)$-DP algorithm.

We remark that the product distribution assumption may seem strong, as it implies that there is no connection between the attributes in $s$ and the attributes in $[d] \setminus s$. However, there is one very simple way that this can arise: Suppose the adversary already knows the value of all of the attributes in $[d] \setminus s$. In this case we have a trivial product distribution where the distribution on the attributes in $[d] \setminus s$ is a point mass.

We can also give an information-theoretic interpretation:
Suppose $X_1, \cdots, X_n \in \mathcal{X} = \mathcal{X}_1 \times \cdots \times \mathcal{X}_d$ are independent random variables.
These correspond to the data of $n$ individuals. The independence assumption is essentially saying that the adversary knows the population, but knows neither the individuals nor the relationships between the individuals.
If $M$ is $\varepsilon$-DP or $\frac12\varepsilon^2$-zCDP, then $I(X_i;M(X)) \le \frac12 \varepsilon^2$ for all $i \in [n]$ \cite{BunS16}. Here $I(\cdot;\cdot)$ denotes the mutual information in nats.
That is to say, DP bounds the amount of information that $M$ reveals about any given record in the dataset.
Such a bound can be meaningful even in the large $\varepsilon$ regime; if the individual's data is high-entropy, then we cannot reconstruct it, even if we can learn some of it \cite{bhowmick2018protection}. 
We can give a stronger guarantee under partial DP:
\begin{proposition}
    Suppose $X_1, \ldots, X_n \in \mathcal{X} = \mathcal{X}_1 \times \cdots \times \mathcal{X}_d$ are independent random variables. Fix $i \in [n]$.
    Suppose we can partition $[d]= s_1 \cup \cdots \cup s_k$ such that $X_{i,s_1}, \dots, X_{i,s_k}$ are independent random variables, where $X_{i,s_j}$ denotes the attributes indexed by $s_j$ of individual $i$.
    Let $M : \mathcal{X}^n \to \mathcal{Y}$ satisfy $\varepsilon$-$\nabla_0$CDP. Then $I(X_{i,s_j};M(X)) \le \frac12 \varepsilon^2 |s_j|^2$ for all $j \in [k]$ and $I(X_i,M(X)) \le \frac12 \varepsilon^2 \sum_{j=1}^k |s_j|^2$.
\end{proposition}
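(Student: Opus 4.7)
The plan is to reduce the partial CDP statement to the standard mutual-information bound for zCDP from \cite{BunS16} via a combination of group privacy and the chain rule for mutual information. I will proceed in three steps: (i) use group privacy to amplify the per-attribute guarantee to a guarantee against swapping all of $X_{i,s_j}$; (ii) bound $I(X_{i,s_j};M(X))$ by conditioning on everything else and invoking the zCDP mutual-information bound; (iii) sum over $j$ using the chain rule together with the assumed independence of the $X_{i,s_j}$'s.

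For step (i), the triangle inequality $\sqrt{D_*(P\|Q)} \le \sqrt{D_*(P\|R)} + \sqrt{D_*(R\|Q)}$ recorded in Proposition~\ref{prop:ppdp-conversion} gives $D_*(M(x)\|M(x')) \le \tfrac{1}{2}|s_j|^2\varepsilon^2$ for any pair $x,x'$ differing only on the $|s_j|$ attributes of person $i$ indexed by $s_j$. Equivalently, if we freeze $Y_j := (X_{-i}, X_{i,s_j^c})$ to an arbitrary value $y$, the mapping $z \mapsto M(y,z)$ is a mechanism on the single-record domain $\prod_{t\in s_j}\mathcal{X}_t$ that is $\tfrac{1}{2}|s_j|^2\varepsilon^2$-zCDP under the trivial neighboring relation (any two $z$'s are neighbors). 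For step (ii), the zCDP-to-mutual-information bound of \cite{BunS16} --- a $\rho$-zCDP mechanism under the trivial neighboring relation satisfies $I(Z;M(Z)) \le \rho$ for every input distribution on $Z$ --- yields $I(X_{i,s_j};M(X)\mid Y_j = y) \le \tfrac{1}{2}|s_j|^2\varepsilon^2$ for every $y$, hence $I(X_{i,s_j};M(X)\mid Y_j) \le \tfrac{1}{2}|s_j|^2\varepsilon^2$ after averaging over $Y_j$. Since the partition hypothesis together with independence of $X_i$ from $X_{-i}$ gives $X_{i,s_j} \perp Y_j$, the chain rule and data processing yield
\[
I(X_{i,s_j};M(X)) \;\le\; I(X_{i,s_j};M(X),Y_j) \;=\; I(X_{i,s_j};Y_j) + I(X_{i,s_j};M(X)\mid Y_j) \;\le\; \tfrac{1}{2}|s_j|^2\varepsilon^2,
\]
proving the first claim.

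For step (iii), I write $X_i = (X_{i,s_1},\dots,X_{i,s_k})$ and expand using the chain rule for mutual information:
\[
I(X_i;M(X)) \;=\; \sum_{j=1}^k I\!\left(X_{i,s_j};\,M(X) \,\big|\, X_{i,s_1},\dots,X_{i,s_{j-1}}\right).
\]
Each term is bounded by $\tfrac{1}{2}|s_j|^2\varepsilon^2$ by repeating the argument of step (ii), now taking $Y_j$ to include the remaining $X_{i,s_\ell}$'s with $\ell\neq j$ together with the conditioning variables $X_{i,s_1},\dots,X_{i,s_{j-1}}$ and $X_{-i}$; the required independence $X_{i,s_j} \perp Y_j$ is precisely the partition hypothesis plus independence of $X_i$ from $X_{-i}$. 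Summing over $j$ produces $I(X_i;M(X)) \le \tfrac{1}{2}\varepsilon^2 \sum_{j=1}^k |s_j|^2$.

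The only real subtlety is step (i): the reduction from a partial CDP guarantee on $\mathcal{X}^n$ to a vanilla zCDP guarantee on the single-record domain $\prod_{t\in s_j}\mathcal{X}_t$ after conditioning on the value of $Y_j$. Making this reduction explicit is what allows the applied zCDP-to-MI bound to be invoked under the trivial neighboring relation, so that it holds for \emph{every} distribution of $X_{i,s_j}$ rather than only for ones supported near some particular value; once this is pinned down, everything else is standard information-theoretic bookkeeping.
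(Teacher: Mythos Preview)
Your proof is correct. The paper does not actually supply a proof for this proposition; it states the result immediately after quoting the standard zCDP mutual-information bound $I(X_i;M(X)) \le \tfrac12\varepsilon^2$ from \cite{BunS16}, so your argument is the natural one implied by that context: group privacy over the $|s_j|$ attributes (via the $\sqrt{D_*}$ triangle inequality from Proposition~\ref{prop:ppdp-conversion}) to get a $\tfrac12|s_j|^2\varepsilon^2$-zCDP mechanism after freezing the complementary coordinates, the Bun--Steinke KL bound applied conditionally, and then the chain rule plus independence to remove the conditioning and to sum over blocks.

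One small cosmetic point: in step~(iii) your description of $Y_j$ is redundant, since the ``conditioning variables $X_{i,s_1},\dots,X_{i,s_{j-1}}$'' are already among the $X_{i,s_\ell}$ with $\ell\neq j$. It is cleaner to say: fix $W_j=(X_{i,s_1},\dots,X_{i,s_{j-1}})=w$, observe that conditional on this event the remaining blocks $(X_{i,s_j},X_{i,s_{j+1}},\dots,X_{i,s_k},X_{-i})$ are still mutually independent, and rerun step~(ii) verbatim to get $I(X_{i,s_j};M(X)\mid W_j=w)\le \tfrac12|s_j|^2\varepsilon^2$ for every $w$. This is clearly what you intend, and it goes through without issue.
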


\paragraph{Applicability of Partial DP Depends on Context.}
Whether or not partial DP guarantees are meaningful (beyond the implied standard DP guarantee) will depend on the context. In particular, it depends on what kind of adversaries we need to protect against.
This limitation is inherent---if we want a context-independent guarantee of individual privacy, we cannot do better than the standard DP definition.

The limitations of (per-attribe) partial DP are the attacks that it does \emph{not} give good protection against (beyond the baseline standard DP guarantee that partial DP implies). 
As we have mentioned, partial DP does not provide enhanced protection against membership inference attacks. Thus, if the data selection process itself reveals sensitive information, partial DP is not helpful. For example, if the dataset consists of only HIV patients, then membership inference can reveal that a participant is HIV positive. 
On the other hand, if the dataset consists of all patients of a given hospital system (or a random sample of those patients), then that is potentially less sensitive and membership inference is less of a concern; this is a setting where partial DP may be meaningful. Another example would be a nationwide census. A census dataset ideally contains everyone in the country, so membership inference would only reveal that someone was living in the country.

To interpret the guarantee of per-attribe partial DP, we must also consider whether sensitive information is reflected in many attributes. For example, suppose each attribute is the person's location at a given point in time. We would expect that the person is at home for extended periods of time, so the home location would be revealed in many attributes simultaneously. Partial DP would not be particularly useful in such a scenario. Other bad use cases for partial DP include the setting where each attribute is a text message or photo and a person can contribute many text messages and photos and sensitive information may be repeated in many of those messages or photos \cite{asi2019element}.  This setting corresponds to ``item-level'' or ``event-level'' DP \cite{levy2021learning}.

We consider the ideal setting for partial DP to be the case where attributes are heterogeneous (as opposed to the settings discussed in the previous paragraph where the attributes are homogeneous locations, messages, or images). 
For example, age, race/ethnicity, gender, home address, income, sexuality, medical status, occupation, criminal history, relationship status, commute length, and immigration status are heterogeneous attributes. These attributes are not independent, but the correlations between them are relatively weak. We can thus hope that partial DP provides meaningful guarantees on a dataset containing these attributes. In particular, it is unlikely that an adversary is interested in some complex function covering all or most of these attributes. A realistic adversary is likely only interested in one of these attributes or maybe a pair of them.

In any case, providing a partial DP guarantee contains more information than the single parameter of the standard DP definition. Thus we argue that, even in settings where partial DP is not particularly appropriate, it is still no worse than simply providing a standard DP guarantee.

\section{Conclusion}\label{sec:conc}

We have presented several algorithms and analyzed them under partial DP, which gives more granular privacy guarantees than standard DP. 
Our results demonstrate that there are multiple separations between the achievable per-attribute $\varepsilon_0$-$\nabla_0$DP and the per-person $\varepsilon$-DP parameters --- i.e., settings where the achievable per-person parameter is large (say, $\varepsilon \ge 10$), but we can still give more granular guarantees with a smaller parameter (e.g., $\varepsilon_0 \le 1$).
In this case the partial DP guarantee with a small $\varepsilon_0$ may be a meaningful and useful (depending on the context) supplement to standard DP guarantee with a large $\varepsilon$, as the smaller parameter can interpreted more easily, albeit on a per-attribute basis.

We note that interpreting the guarantees of partial DP depends on what constitutes a ``realistic'' adversary. Per-attribute partial DP naturally corresponds to assuming that the adversary is interested in learning a function of only a few attributes.  On the flip side, partial DP may not provide better guarantees than standard DP for membership inference attacks~\citep{ShokriSSS17,DworkSSUV15} and in the cases where sensitive information may be repeated across multiple attributes.

We hope that our work inspires further exploration of more granular privacy guarantees, and further expansion of the DP algorithmic toolkit.

\paragraph{Further work.}
We hope that our work inspires further study of refined privacy guarantees. Our partial DP framework can and should be explored further, both in terms of developing and applying algorithms and in terms of further developing the definition. In particular, most of our results are restricted to per-attribute partial DP. A natural extension is to have a different $\varepsilon_i$ for each attribute, as some attributes are more sensitive than others.

A specific open question for algorithms development is to improve Theorem \ref{thm:pdp-mwem} so that we can obtain max error guarantees under per-attribute partial DP (like in the standard DP setting), rather than needing to average over a set of $\ell$ queries.

Going beyond partial DP, there is scope for other refined definitions that capture some limitations on the adversary. We have discussed some possible directions in Section \ref{sec:adv-ass}.
To facilitate such work, we propose the following desiderata for other refined definitions:
\begin{itemize}
    \item The adversary should not be ``baked in'' to the privacy definition. That is, the definition should be stated in a way that can be interpreted and verified without knowing the specifics of the adversary. Our definition, like the original DP definition \cite{DworkMNS06}, is frequentist, rather than Bayesian --- i.e., it does not mention the adversary's beliefs and instead asserts that the output distributions are indistinguishable; this makes it easier to use. 
    \item The privacy definition should not be overly tailored to a specific algorithm. It is important to disentangle algorithms from definitions; this is one of the important conceptual contributions of the original DP definition. In other words, definitions should be re-usable.
    \item The new privacy definition should be formally comparable to the standard definition of DP (as we show in Proposition \ref{prop:ppdp-conversion}) or, at least, it should be possible to ensure that algorithms satisfy both definitions simultaneously. We believe that the goal of such research should not be to replace the standard definition of DP, rather the goal should be to supplement it. 
    \item In order for a new privacy definition to be useful, we must demonstrate a quantitative separation between it and the standard definition of DP, as we have done with our algorithmic results. A new definition should not be a substitute for designing better algorithms. For example, when computing the mean of unbounded Gaussian data we could either devise an average-case privacy definition to avoid dealing with the infinite global sensitivity of the mean, or we could simply clip the data \cite{KarwaV17}; we argue that the second option is vastly preferable. Thus, to justify a new privacy definition, new algorithmic results should be matched to a lower bound showing that it is impossible to match the performance under the usual definition of DP.
\end{itemize}

\nocite{bhowmick2018protection}
\nocite{levy2021learning}
\nocite{bassily2013coupled}
\nocite{chaudhuri2019capacity}
\nocite{frankblog2016}

\addcontentsline{toc}{section}{References}
\printbibliography

\newpage
\onecolumn

\appendix

\section{Missing Proofs from~\Cref{sec:heavy_hitters}}
\subsection{Analysis}
\label{app:hist-proof}

\subsubsection{Privacy Analysis}

In this section, we will prove the privacy guarantee of~\Cref{alg:priv-tree} as summarized below.

\begin{lemma}[Privacy Guarantee of \textsc{PrivHeavyHitter}] \label{lem:priv-tree-privacy}
Suppose that $\lambda, \mu$ be such that $\mu > 1$ and $\frac{2}{\lambda}\left(1 + \frac{1}{1 - e^{-\mu/\lambda}} \right) \leq \eps$, then \textsc{PrivHeavyHitter} is $\eps$-\PDP.
\end{lemma}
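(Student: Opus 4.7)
The plan is to follow the standard blueprint for analyzing \textsc{PrivTree}-style mechanisms, adapting it to the per-attribute $\nabla_0$DP setting. By the post-processing property it suffices to show that the full transcript $(L_I)_{I \in \cI}$ is $\eps$-$\nabla_0$DP, since $L_{[d]}$ is a function of this transcript. Fix neighboring datasets $\bD$ and $\bD'$ which agree on all records except one, and which agree on that record except on one attribute. By relabeling coordinates and symbols we may assume without loss of generality that the differing attribute is the first and that the two records are $0_d$ and $10_{d-1}$ respectively.

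The first structural observation is a \emph{localization of differences}: for any interval $I$ arising in the tree and any $s \in \{0,1\}^{|I|}$, the frequencies $f^I_s(\bD)$ and $f^I_s(\bD')$ differ by at most one, and they differ at all only when $s$ is a prefix--restriction of either $0_d$ or $10_{d-1}$ to $I$. In particular, along the unique root-to-leaf path that covers attribute $1$, say the intervals $I^\ell := \{1,\dots,2^\ell\}$, only $s=0_{2^\ell}$ and $s=10_{2^\ell-1}$ are affected; for every other $I\in\cI$ the restricted input is identical under $\bD$ and $\bD'$. This lets us telescope the probability ratio $\Pr[\forall I,\,L_I(\bD)=S_I]/\Pr[\forall I,\,L_I(\bD')=S_I]$ into exactly the product displayed in the excerpt, factored over levels $\ell\in[\log d]$ and the two distinguished strings, with one factor for ``$s$ is added to $L_{I^\ell}$'' and one factor for ``$s$ is not added.''

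The core of the argument is then to bound each of the two products by $e^{\eps/2}$. The case $s = 10_{2^\ell-1}$ is simple: $f^{I^\ell}_s(\bD) < f^{I^\ell}_s(\bD')$, so the ``added'' ratios are $\le 1$, and the ``not added'' ratios are handled symmetrically by swapping $\bD$ and $\bD'$ with the $s=0_{2^\ell}$ analysis. The case $s=0_{2^\ell}$ is where the capping threshold $\mu$ pays off. Let $L_0$ be the smallest $\ell$ with $f^{I^\ell}_{0_{2^\ell}}(\bD)>\tau_\ell-\mu$. For $\ell<L_0$ both sides clip to $\tau_\ell-\mu$ and the ratio is exactly $1$; for $\ell=L_0$ the clipped counts differ by at most $1$, giving a ratio $\le e^{1/\lambda}$ by the Laplace mechanism; for $\ell>L_0$, we use the fact that the true count now exceeds $\tau_\ell$ by at least $(\ell-L_0)\mu-1$ under $\bD'$, so we can invoke Lemma~2.1 of \cite{ZhangXX16} (exactly as indicated in the excerpt) to bound the ratio by $\exp\!\bigl(\tfrac{1}{\lambda}\exp\!\bigl(\tfrac{1-(\ell-L_0)\mu}{\lambda}\bigr)\bigr)$. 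Summing the log-ratios across levels yields a geometric series in $e^{-\mu/\lambda}$, and the hypothesis $\tfrac{2}{\lambda}\bigl(1+\tfrac{1}{1-e^{-\mu/\lambda}}\bigr)\le\eps$ is exactly what is needed for the total log-contribution of the ``added'' product to be at most $\eps/2$. The same argument, with $\bD$ and $\bD'$ interchanged, handles the ``not added'' product, and combining the two yields the $e^{\eps}$ bound.

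The main obstacle, as anticipated, is the $\ell>L_0$ regime for $s=0_{2^\ell}$: here the two distributions of $\hf^{I^\ell}_{0_{2^\ell}}$ are both Laplace random variables shifted by values that are far above $\tau_\ell$ by an increasing margin, so a naive per-level Laplace DP bound gives $e^{1/\lambda}$ at each level and loses a $\log d$ factor in $\eps$. The point of the biasing step $\tau_\ell = \tau + (\ell-1)\mu$ together with capping is precisely to exploit the fact that once we are well above the threshold, the probability of crossing it is essentially one under both $\bD$ and $\bD'$, with a ratio that shrinks doubly exponentially in $\ell-L_0$. Verifying this quantitatively reduces to the PrivTree technical lemma; all other pieces of the proof are bookkeeping.
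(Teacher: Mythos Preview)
Your proposal is correct and follows essentially the same route as the paper's proof: reduce to the full transcript $(L_I)_{I\in\cI}$, localize to the root-to-leaf path $I^\ell$ and the two affected strings $0_{2^\ell}$ and $10_{2^\ell-1}$, split into the ``added'' and ``not added'' products and bound each by $e^{\eps/2}$, and for the nontrivial string use the pivot level $L_0$ together with the PrivTree lemma of \cite{ZhangXX16} to sum a geometric series. One small wording point: the paper handles the ``not added'' product not by literally swapping $\bD$ and $\bD'$ (which would bound the reciprocal ratio) but by the parallel observation that for $s=0_{2^\ell}$ the ``not added'' ratio is $\le 1$, and then running the same $L_1$-pivot/geometric-series argument for $s=10_{2^\ell-1}$; your intent is clearly the same.
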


In fact, we will prove even outputting all the sets $(L_I)_{I \in \cI}$ is $\eps$-\PDP. For clarity, below we write the frequencies and lists as functions of the input datasets $\bD$ or $\bD'$. In other words, we would like to show that:

\begin{lemma}
Suppose that $\lambda, \mu$ satisfy the conditions in \Cref{lem:priv-tree-privacy}.
For any neighboring datasets $\bD, \bD'$ and any values of $(S_I)_{I \in \cI}$, it holds that $\Pr\left[\forall I \in \cI, L_I(\bD) = S_I\right] \leq e^{\eps} \cdot \Pr\left[\forall I \in \cI, L_I(\bD') = S_I\right].$
\end{lemma}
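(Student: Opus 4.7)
The plan is to follow the proof sketch outlined in the main text with additional care on the geometric-sum step. First, by the definition of per-attribute partial DP, it suffices to analyze a pair of neighbors differing on one attribute of one record. By symmetry under relabeling of coordinates and of values within that coordinate, I may assume the differing records are $0_d$ in $\bD$ and $10_{d-1}$ in $\bD'$, so the change lies in position $1$. Consequently, for any interval $I \in \cI$ whose index set does not contain position $1$, the frequency vector $\{f^I_s\}_s$ is identical under $\bD$ and $\bD'$, and the conditional distribution of $L_I$ given its children's lists is unchanged; hence such intervals contribute a factor of $1$ to the likelihood ratio $\Pr[\forall I, L_I(\bD)=S_I] / \Pr[\forall I, L_I(\bD')=S_I]$.

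Next, I would restrict attention to the intervals $I^\ell := \{1,\dots,2^\ell\}$ for $\ell \in [\log d]$. Within $I^\ell$, only the substrings $0_{2^\ell}$ and $10_{2^{\ell-1}}$ have frequencies that differ between $\bD$ and $\bD'$: the former decreases by one and the latter increases by one going from $\bD$ to $\bD'$. Factoring the likelihood ratio over levels and over these two substrings, and splitting within each into a sub-case for inclusion in $S_{I^\ell}$ (which contributes $\Pr[\hf_s(\bD) > \tau_\ell]/\Pr[\hf_s(\bD') > \tau_\ell]$) and non-inclusion (the complementary ratio), one obtains exactly the two-product decomposition written in the main text. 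The contributions from $10_{2^{\ell-1}}$ are at most $1$ in both sub-cases, because the direction of the frequency change makes the inequalities favorable; so only the contributions from $0_{2^\ell}$ require quantitative work, and it suffices to bound each of those two products by $e^{\eps/2}$.

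For the $0_{2^\ell}$ contributions, introduce $L_0$ as the smallest $\ell$ with $f_{0_{2^\ell}}(\bD) > \tau_\ell - \mu$. Three regimes appear. (i) For $\ell < L_0$, both clipped counts equal $\tau_\ell - \mu$, and the per-level ratio is exactly $1$. (ii) At $\ell = L_0$ the clipped counts differ by at most $1$, so by the standard Laplace mechanism guarantee the per-level ratio is at most $e^{1/\lambda}$. (iii) For $\ell > L_0$, monotonicity of $f_{0_{2^\ell}}$ in $\ell$ (since $0_{2^\ell}$ is a prefix-extension of $0_{2^{L_0}}$) together with the biasing structure $\tau_\ell - \mu = \tau_{L_0} + (\ell - L_0 - 1)\mu$ gives $f_{0_{2^\ell}}(\bD') \geq \tau_\ell + (\ell - L_0)\mu - 1$. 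Then the quantitative Laplace tail-ratio bound of \cite[Lemma 2.1]{ZhangXX16} yields a per-level ratio of at most $\exp\bigl(\tfrac{1}{\lambda} \exp\bigl(\tfrac{1-(\ell-L_0)\mu}{\lambda}\bigr)\bigr)$. Summing the logarithms over $\ell > L_0$ produces a geometric series bounded by $\tfrac{1}{\lambda(1-e^{-\mu/\lambda})}$ (using $\mu > 1$), which combined with the $e^{1/\lambda}$ term at $\ell = L_0$ gives $\tfrac{1}{\lambda}\bigl(1 + \tfrac{1}{1-e^{-\mu/\lambda}}\bigr) \leq \eps/2$ for the ``included'' product. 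The ``non-included'' product is bounded by exactly the same argument (the inequalities on the cumulative distribution flip direction but the Laplace density is symmetric), giving another factor of $e^{\eps/2}$ and the claimed $e^{\eps}$ overall.

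The main obstacle is regime (iii): a naive bound using the $1/\lambda$-Laplace privacy guarantee at each of the $\log d$ levels would give a $\tfrac{\log d}{\lambda}$ loss, which would defeat the purpose of the algorithm. The crucial ingredient is the biasing/capping idea of \cite{ZhangXX16}, which lets us convert ``deep'' levels (where the true count greatly exceeds the threshold) into exponentially small contributions and thus absorb them into a convergent geometric series whose sum is $d$-independent.
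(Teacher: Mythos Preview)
Your proposal follows the paper's approach closely, but there is one concrete slip. You claim that the $10_{2^{\ell}-1}$ contributions are at most $1$ in \emph{both} sub-cases (inclusion and non-inclusion), so that only the $0_{2^\ell}$ terms ever require quantitative work. This is false for the non-inclusion product. Since $f_{10_{2^\ell-1}}(\bD) < f_{10_{2^\ell-1}}(\bD')$, we get $\Pr[\hf_{10_{2^\ell-1}}(\bD) \le \tau_\ell] \ge \Pr[\hf_{10_{2^\ell-1}}(\bD') \le \tau_\ell]$, so that ratio is $\ge 1$, not $\le 1$. Symmetrically, in the non-inclusion product the $0_{2^\ell}$ term is the favorable one (ratio $\le 1$ because $f_{0_{2^\ell}}(\bD) > f_{0_{2^\ell}}(\bD')$). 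Thus the roles of the two strings swap between the two products: the inclusion product needs the PrivTree three-regime analysis on $0_{2^\ell}$, while the non-inclusion product needs the \emph{same} analysis on $10_{2^\ell-1}$, with its own pivot level $L_1$ defined as the smallest $\ell$ with $f_{10_{2^\ell-1}}(\bD') > \tau_\ell - \mu$. The paper carries this out explicitly. Your last paragraph (``the non-included product is bounded by exactly the same argument'') is morally right, but as stated it is applied to the wrong string; correcting the attribution repairs the argument.

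A second minor point: in regime~(iii) you invoke ``monotonicity of $f_{0_{2^\ell}}$ in $\ell$'' to derive $f_{0_{2^\ell}}(\bD') \ge \tau_\ell + (\ell - L_0)\mu - 1$. Be careful with the direction here: the count $f_{0_{2^\ell}}$ is \emph{non-increasing} in $\ell$ (a longer all-zero prefix is harder to match), so the ``far-above-threshold'' levels are those on one side of the pivot and the ``clipped'' levels are on the other. Make sure your indexing of the three regimes is consistent with this; the geometric-series bound you wrote is correct once the cases are aligned properly.
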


\begin{proof}
Due to symmetry, it suffices to consider the case where the differing elements in $\bD$ and $\bD'$ are $0_d$ and $10_{d - 1}$ respectively. We may write the LHS probability as
\begin{align*}
\Pr\left[\forall I \in \cI, L_I(\bD) = S_I\right]
&= \prod_{\ell \in [\log d]} \Pr\left[\forall I \in \cI_\ell, L_I(\bD) = S_I \mid \forall I' \in \cI_{\ell - 1}, L_{I'}(\bD) = S_{I'}\right] \\
&= \prod_{\ell \in [\log d]} \prod_{I \in \cI_\ell} \Pr\left[L_I(\bD) = S_I \mid \forall I' \in \cI_{\ell - 1}, L_{I'}(\bD) = S_{I'}\right] \\
&= \prod_{\ell \in [\log d]} \prod_{I \in \cI_\ell} \Pr\left[L_I(\bD) = S_I \mid L_{\Ileft}(\bD) = S_{\Ileft} \mbox{ and } L_{\Iright}(\bD) = S_{\Iright}\right].
\end{align*}
This also means that the probability is zero (for both $\bD$ and $\bD'$) if $S_{I} \nsubseteq S_{\Ileft} \circ S_{\Iright}$ for some $I \in \cI$. Therefore, we may assume henceforth that $S_{I} \subseteq S_{\Ileft} \circ S_{\Iright}$ for all $I \in \cI$. Under this assumption, we may further write the above probability as
\begin{align*}
\Pr\left[\forall I \in \cI, L_I(\bD) = S_I\right]
&= \prod_{\ell \in [\log d]} \prod_{I \in \cI_\ell} \left(\prod_{s \in S_I} \Pr[\hf_s^I(\bD) > \tau_\ell]\right) \left(\prod_{s \in (S_{\Ileft} \circ S_{\Iright}) \setminus S_I} \Pr[\hf_s(\bD) \leq \tau_\ell]\right).
\end{align*}
We let $I^\ell := \{1, \dots, 2^{\ell}\}$. Notice here that, when $s \ne 0_{2^\ell}, 10_{2^{\ell - 1}}$ or $I \ne I^\ell$, we have $f_s^I(\bD) = f_s^I(\bD')$. This means that
\begin{align*}
&\frac{\Pr\left[\forall I \in \cI, L_I(\bD) = S_I\right]}{\Pr\left[\forall I \in \cI, L_I(\bD') = S_I\right]} \\
&= \prod_{\ell \in [\log d]} \prod_{I \in \cI_\ell} \left(\prod_{s \in S_I} \frac{\Pr[\hf_s^I(\bD) > \tau_\ell]}{\Pr[\hf_s^I(\bD') > \tau_\ell]}\right) \left(\prod_{s \in (S_{\Ileft} \circ S_{\Iright}) \setminus S_I} \frac{\Pr[\hf_s^I(\bD) \leq \tau_\ell]}{\Pr[\hf_s^I(\bD') \leq \tau_\ell]}\right) \\
&= \prod_{\ell \in [\log d]} \left(\prod_{s \in S_{I^\ell} \atop s \in \left\{0_{2^\ell}, 10_{2^{\ell - 1}}\right\}} \frac{\Pr[\hf_s^{I^\ell}(\bD) > \tau_\ell]}{\Pr[\hf_s^{I^\ell}(\bD') > \tau_\ell]}\right) \left(\prod_{s \in \left(S_{\Ileft^\ell} \circ S_{\Iright^\ell}\right) \setminus S_{I^\ell} \atop s \in \left\{0_{2^\ell}, 10_{2^{\ell - 1}}\right\}} \frac{\Pr[\hf_s^{I^\ell}(\bD) \leq \tau_\ell]}{\Pr[\hf_s^{I^\ell}(\bD') \leq \tau_\ell]}\right). \\
\end{align*}
From now on, when we write $\hf_s$ and $f_s$ we always mean $\hf_s^{I^\ell}$ and $f_s^{I^\ell}$ and where $\ell$ is the length of $s$, respectively. Therefore, we will drop the superscipt $I^\ell$ for notational ease.

We will separately show that
\begin{align} \label{eq:above-threshold-prob}
\prod_{\ell \in [\log d]} \left(\prod_{s \in S_{I^\ell} \atop s \in \left\{0_{2^\ell}, 10_{2^{\ell - 1}}\right\}} \frac{\Pr[\hf_s(\bD) > \tau_\ell]}{\Pr[\hf_s(\bD') > \tau_\ell]}\right) \leq e^{\eps/2},
\end{align}
and
\begin{align} \label{eq:below-threshold-prob}
\prod_{\ell \in [\log d]} \left(\prod_{s \in \left(S_{\Ileft^\ell} \circ S_{\Iright^\ell}\right) \setminus S_{I^\ell} \atop s \in \left\{0_{2^\ell}, 10_{2^{\ell - 1}}\right\}} \frac{\Pr[\hf_s(\bD) \leq \tau_\ell]}{\Pr[\hf_s(\bD') \leq \tau_\ell]}\right) \leq e^{\eps/2}.
\end{align}
Multiplying the two together yields the desired bound.

We now prove~\eqref{eq:above-threshold-prob}. Since $f_{01_{2^\ell - 1}}(\bD) < f_{01_{2^\ell - 1}}(\bD')$, we have $\Pr[\hf_{01_{2^\ell - 1}}(\bD) > \tau_\ell] \leq \Pr[\hf_{01_{2^\ell - 1}}(\bD') > \tau_\ell]$. Therefore,
\begin{align}
\prod_{\ell \in [\log d]} \left(\prod_{s \in S_{I^\ell} \atop s \in \left\{0_{2^\ell}, 10_{2^{\ell - 1}}\right\}} \frac{\Pr[\hf_s(\bD) > \tau_\ell]}{\Pr[\hf_s(\bD') > \tau_\ell]}\right)
&\leq \prod_{\ell \in [\log d]} \left(\prod_{s \in S_{I^\ell} \atop s = 0_{2^\ell}} \frac{\Pr[\hf_s(\bD) > \tau_\ell]}{\Pr[\hf_s(\bD') > \tau_\ell]}\right) \nonumber \\
&= \prod_{\ell \in [\log d] \atop 0_{2^\ell} \in S^{{I^\ell}}_\ell} \frac{\Pr[\hf_{0_{2^\ell}}(\bD) > \tau_\ell]}{\Pr[\hf_{0_{2^\ell}}(\bD') > \tau_\ell]}.
\end{align}
Now, let $L_0$ denote the smallest integer such that $f_{0_{2^{L_0}}}(\bD) > \tau_{L_0} - \mu$. First, notice that for all $\ell < L_0$ we have $\min\{\hf_{0_{2^\ell}}(\bD), \tau_\ell - \mu\} = \tau_\ell - \mu = \min\{\hf_{0_{2^\ell}}(\bD'), \tau_\ell - \mu\}$. This means that
\begin{align*}
\forall \ell < L_0, \frac{\Pr[\hf_{0_{2^\ell}}(\bD) > \tau_\ell]}{\Pr[\hf_{0_{2^\ell}}(\bD') > \tau_\ell]} = 1.
\end{align*}
Next, consider $\ell = L_0$. In this case, notice that $\min\{\hf_{0_{2^\ell}}(\bD), \tau_\ell - \mu\} - \min\{\hf_{0_{2^\ell}}(\bD'), \tau_\ell - \mu\} \leq 1$. As a result, by standard DP property of the Laplace mechanism, we have
\begin{align*}
\frac{\Pr[\hf_{0_{2^{L_0}}}(\bD) > \tau_{L_0}]}{\Pr[\hf_{0_{2^{L_0}}}(\bD') > \tau_{L_0}]} \leq e^{1/\lambda}.
\end{align*}
Finally, for $\ell > L_0$, we have $f_{0_{2^{\ell}}}(\bD') > \tau_\ell + (\ell - L_0)\mu - 1$. This implies that
\begin{align*}
\frac{\Pr[\hf_{0_{2^{\ell}}}(\bD) > \tau_{\ell}]}{\Pr[\hf_{0_{2^{\ell}}}(\bD') > \tau_{\ell}]}
&\leq \frac{\Pr[(\ell - L_0)\mu + \Lap(\lambda) > 0]}{\Pr[(\ell - L_0)\mu - 1 + \Lap(\lambda) > 0]} \\
&\leq \exp\left(\frac{1}{\lambda} \cdot \exp\left(\frac{1 - (\ell - L_0)\mu}{\lambda}\right)\right),
\end{align*}
where the last inequality follows from Lemma 2.1 of~\cite{ZhangXX16}.

Combining these inequalities, we have
\begin{align*}
\prod_{\ell \in [\log d]} \left(\prod_{s \in S_{I^\ell} \atop s \in \left\{0_{2^\ell}, 10_{2^{\ell - 1}}\right\}} \frac{\Pr[\hf_s(\bD) > \tau_\ell]}{\Pr[\hf_s(\bD') > \tau_\ell]}\right) %
&\leq e^{1/\lambda} \cdot \prod_{i=1}^{\infty} \exp\left(\frac{1}{\lambda} \cdot \exp\left(\frac{1 - i \mu}{\lambda}\right)\right) \\
&\leq e^{1/\lambda} \cdot \prod_{i=1}^{\infty} \exp\left(\frac{1}{\lambda} \cdot \exp\left(\frac{-(i - 1) \mu}{\lambda}\right)\right) \\
&\leq e^{1/\lambda} \cdot \exp\left(\frac{1}{\lambda} \cdot \frac{1}{1 - \exp(-\mu / \lambda)}\right),
\end{align*}
which is at most $e^{\eps/2}$ for our setting of parameters, thereby completing the proof of~\eqref{eq:above-threshold-prob}.

We next prove~\eqref{eq:below-threshold-prob}. Similarly, since $f_{0_{2^\ell}}(\bD) > f_{0_{2^\ell}}(\bD')$, we have $\Pr[\hf_{0_{2^\ell}}(\bD) \leq \tau_\ell] \leq \Pr[\hf_{0_{2^\ell}}(\bD') \leq \tau_\ell]$. Therefore,
\begin{align}
\prod_{\ell \in [\log d]} \left(\prod_{s \in \left(S_{\Ileft^\ell} \circ S_{\Iright^\ell}\right) \setminus S_{I^\ell} \atop s \in \left\{0_{2^\ell}, 10_{2^{\ell - 1}}\right\}} \frac{\Pr[\hf_s(\bD) \leq \tau_\ell]}{\Pr[\hf_s(\bD') \leq \tau_\ell]}\right) 
&\leq \prod_{\ell \in [\log d]} \left(\prod_{s \in \left(S_{\Ileft^\ell} \circ S_{\Iright^\ell}\right) \setminus S_{I^\ell} \atop s = 10_{2^{\ell - 1}}} \frac{\Pr[\hf_s^I(\bD) \leq \tau_\ell]}{\Pr[\hf_s^I(\bD') \leq \tau_\ell]}\right) \nonumber \\
&= \prod_{\ell \in [\log d] \atop 1 0_{2^\ell - 1} \in S^{[0, 2^\ell + 1)}_\ell} \frac{\Pr[\hf_{1 0_{2^\ell - 1}}(\bD) \leq \tau_\ell]}{\Pr[\hf_{1 0_{2^\ell - 1}}(\bD') \leq \tau_\ell]}.
\end{align}
Now, let $L_1$ denote the smallest integer such that $\hf_{10_{2^{L_1} -1}}(\bD') > \tau_{L_1} - \mu$. First, notice that for all $\ell < L_1$ we have $\min\{\hf_{10_{2^{\ell} -1}}(\bD'), \tau_\ell - \mu\} = \tau_\ell - \mu = \min\{\hf_{10_{2^{\ell} -1}}(\bD), \tau_\ell - \mu\}$. This means that
\begin{align*}
\forall \ell < L_1, \frac{\Pr[\hf_{1 0_{2^\ell - 1}}(\bD) \leq \tau_\ell]}{\Pr[\hf_{1 0_{2^\ell - 1}}(\bD') \leq \tau_\ell]} = 1.
\end{align*}
Next, consider $\ell = L_1$. In this case, notice that $\min\{\hf_{1 0_{2^\ell - 1}}(\bD'), \tau_\ell - \mu\} - \min\{\hf_{1 0_{2^\ell - 1}}(\bD), \tau_\ell - \mu\} \leq 1$. As a result, by standard DP property of the Laplace mechanism, we have
\begin{align*}
\frac{\Pr[\hf_{1 0_{2^{L_1} - 1}}(\bD) \leq \tau_\ell]}{\Pr[\hf_{1 0_{2^{L_1} - 1}}(\bD') \leq \tau_\ell]} \leq e^{1/\lambda}.
\end{align*}
Finally, for $\ell > L_1$, we have $f_{10_{2^{\ell} - 1}}(\bD) > \tau_\ell + (\ell - L_1)\mu - 1$. This implies that
\begin{align*}
\frac{\Pr[\hf_{1 0_{2^\ell - 1}}(\bD) \leq \tau_\ell]}{\Pr[\hf_{1 0_{2^\ell - 1}}(\bD') \leq \tau_\ell]}
&\leq \frac{\Pr[(\ell - L_1)\mu + \Lap(\lambda) \leq 0]}{\Pr[(\ell - L_1)\mu - 1 + \Lap(\lambda) \leq 0]} \\
&\leq \exp\left(\frac{1}{\lambda} \cdot \exp\left(\frac{1 - (\ell - L_1)\mu}{\lambda}\right)\right),
\end{align*}
where the last inequality follows from~\cite[Lemma 2.1]{ZhangXX16}.

Combining these inequalities, we have
\begin{align*}
\prod_{\ell \in [\log d]} \left(\prod_{s \in \left(S_{\Ileft^\ell} \circ S_{\Iright^\ell}\right) \setminus S_{I^\ell} \atop s \in \left\{0_{2^\ell}, 10_{2^{\ell - 1}}\right\}} \frac{\Pr[\hf_s^I(\bD) \leq \tau_\ell]}{\Pr[\hf_s^I(\bD') \leq \tau_\ell]}\right) 
&\leq e^{1/\lambda} \cdot \prod_{i=1}^{\infty} \exp\left(\frac{1}{\lambda} \cdot \exp\left(\frac{1 - i \mu}{\lambda}\right)\right) \\
&\leq e^{1/\lambda} \cdot \prod_{i=1}^{\infty} \exp\left(\frac{1}{\lambda} \cdot \exp\left(\frac{-(i - 1) \mu}{\lambda}\right)\right) \\
&\leq e^{1/\lambda} \cdot \exp\left(\frac{1}{\lambda} \cdot \frac{1}{1 - \exp(-\mu / \lambda)}\right),
\end{align*}
which is at most $e^{\eps/2}$ for our setting of parameters, thereby completing the proof of~\eqref{eq:below-threshold-prob}.
\end{proof}

\subsubsection{Utility Analysis}

We next prove the utility analysis of the algorithm, which consists of showing that we discover all heavy hitters and that the expected list size is small. The latter also implies that the expected running time of the algorithm is small, as stated below.

\begin{lemma}[Utility Guarantee of \textsc{PrivHeavyHitter}] \label{lem:priv-tree-util}
Let $\eta, \nu \in (0, 0.1]$. Suppose that $\tau = 0.5\nu n$, $\tau \geq 8 \mu \log d + 8 \lambda \log(d/(\eta \nu))$ and $\mu \geq \lambda \log(16/\nu)$. Then, we have
\begin{itemize}
\item (Heavy Hitters Discovered) W.p. $1 - 0.5\eta$, $L_{[d]}$ contains all $s \in \{0, 1\}^d$ such that $f_s \geq 2\tau$.
\item (Expected List Size) $\E[|L_{[d]}|] \leq 8 / \nu$.
\item (Expected Running Time) the expected running time of the algorithm is $\poly(d/\nu)$.
\end{itemize}
\end{lemma}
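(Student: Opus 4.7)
The plan is to prove the three claims separately, each relying on a per-node analysis combined with either a union bound (for the heavy-hitter discovery) or an induction on the tree level (for the expected list size), and then deduce the running-time bound as a corollary of the list-size bound. Throughout, note that $\tau_\ell \le \tau + \mu\log d$ and $\tau \ge 8\mu\log d$, so $\tau_\ell - \mu < \tau_\ell \le 1.125\tau$, and the threshold is never close to $2\tau$.

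For \emph{heavy hitters discovered}, fix any $s\in\{0,1\}^d$ with $f_s \ge 2\tau$. Since $f^I_s \ge f_s \ge 2\tau$ for every interval $I$ that $s$ projects onto, an induction from the leaves shows that $s \in L_{[d]}$ provided that, at every one of the $2d-1$ nodes $I$ on the path from the leaves to the root, the noisy count $\hat f^I_{s|_I}$ exceeds $\tau_\ell$. The count satisfies
\begin{equation*}
\hat f^I_{s|_I} \;=\; \max\{f^I_{s|_I},\tau_\ell-\mu\} + \Lap(\lambda) \;\ge\; 2\tau + \Lap(\lambda),
\end{equation*}
so the node-level failure probability is $\Pr[\Lap(\lambda) < \tau_\ell - 2\tau] \le \Pr[\Lap(\lambda) < -0.5\tau] \le \tfrac12 e^{-0.5\tau/\lambda}$, using $\tau \ge 8\mu\log d$ to absorb the $(\ell-1)\mu$ term. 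A union bound over the at most $2d$ relevant nodes, and then over the at most $1/\nu$ heavy hitters (there can be no more since their total mass is $\le n$), gives total failure probability $\le d\nu^{-1} e^{-0.5\tau/\lambda}$, which is at most $0.5\eta$ by the hypothesis $\tau \ge 8\lambda\log(d/(\eta\nu))$.

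For \emph{expected list size}, I induct on $\ell$ to show $\E[|L_I|] \le 8/\nu$ for every $I\in\cI_\ell$. The base case $\ell=0$ is immediate since $|L_{\{j\}}|=2 \le 8/\nu$ for $\nu\le1/4$. For the step, let $S_I := \{s \in \{0,1\}^{2^\ell} : f^I_s > 0.5\tau\}$; then $|S_I| < n/(0.5\tau) = 4/\nu$. For $s \notin S_I$ we have $f^I_s \le 0.5\tau < \tau_\ell - \mu$ (since $\mu \le \tau/8$), so the max in the definition of $\hat f^I_s$ is attained by $\tau_\ell-\mu$ and
\begin{equation*}
\Pr[s\in L_I\mid s\in L_{\Ileft}\circ L_{\Iright}] \;=\; \Pr[\Lap(\lambda)>\mu] \;=\; \tfrac12 e^{-\mu/\lambda} \;\le\; \nu/16,
\end{equation*}
using $\mu \ge \lambda\log(16/\nu)$. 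Hence
\begin{equation*}
\E[|L_I|] \;\le\; |S_I| + \tfrac{\nu}{16}\,\E[|L_{\Ileft}\circ L_{\Iright}|] \;\le\; \tfrac{4}{\nu} + \tfrac{\nu}{16}\cdot\tfrac{8}{\nu}\cdot\tfrac{8}{\nu} \;=\; \tfrac{8}{\nu},
\end{equation*}
closing the induction; applying it at the root gives the claimed bound.

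For the \emph{expected running time}, the work performed at node $I$ is dominated by iterating over $L_{\Ileft}\circ L_{\Iright}$ (plus $\poly(d)$ to compute each frequency $f^I_s$ from the input). By the previous bound and independence of the noise additions at disjoint subtrees, $\E[|L_{\Ileft}\circ L_{\Iright}|] \le (8/\nu)^2$, so each node contributes $\poly(nd/\nu)$ expected work. Summing over the $2d-1$ nodes gives $\poly(nd/\nu)$ total. The main obstacle is the inductive argument for the list size: one must be careful that the noise added at node $I$ is independent of $L_{\Ileft},L_{\Iright}$ so that the factorization $\E[|L_{\Ileft}\circ L_{\Iright}|] = \E[|L_{\Ileft}|]\cdot\E[|L_{\Iright}|]$ is valid, and that $f^I_s \le \tau_\ell-\mu$ forces the clipped count to be deterministic at the threshold, which is what produces the clean $\nu/16$ survival probability independent of the inductive hypothesis.
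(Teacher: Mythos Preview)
Your proposal is correct and follows essentially the same approach as the paper: a union bound over all tree nodes and heavy hitters for the discovery guarantee, an induction on the level $\ell$ for the expected list size with the same set $S_I$ and the same $\nu/16$ survival bound, and the running time as an immediate corollary. You are, in fact, slightly more careful than the paper in two places: you explicitly flag the independence of the noise in the two subtrees needed for the factorization $\E[|L_{\Ileft}|\cdot|L_{\Iright}|]=\E[|L_{\Ileft}|]\E[|L_{\Iright}|]$, and you retain the factor of $n$ in the running time (which the paper's proof drops but which reappears in Theorem~\ref{thm:heavy-hitters-main}).
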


\begin{proof}
\begin{itemize}
\item \textbf{(Heavy Hitter Discovered)} Fix any $s$ such that $f_s \geq 2\tau$. We will prove that $\Pr[s \notin L] \leq 0.5 \eta \nu$; since there are at most $1 / \nu$ such $s$'s, a union bound yields the claimed result.

We can bound the desired probability as follows:
\begin{align*}
\Pr[s \notin L_{[d]}] &= \Pr\left[\exists I \in \cI, s|_I \notin L_I\right] \\
&\leq \sum_{I \in \cI} \Pr[s|_I \notin L_I \mid s|_{\Ileft} \in L_{\Ileft} \wedge s|_{\Iright} \in L_{\Iright}] \\
&= \sum_{I \in \cI} \Pr[f^I_{s|_I} + \Lap(\lambda) \leq \tau_\ell] \\
&\leq \sum_{I \in \cI} \Pr[2\tau + \Lap(\lambda) \leq \tau + \mu \log d] \\
&\leq \sum_{I \in \cI} \Pr[0.5\tau + \Lap(\lambda) \leq 0] \\
&\leq \sum_{I \in \cI} \exp(-0.5\tau / \lambda) / 2 \\
&\leq 2d \cdot \exp(-0.5\tau / \lambda)/2 \\
&\leq 0.5 \eta \nu.
\end{align*}

\item \textbf{(Expected List Size)} We will prove by induction on $\ell$ that $\E[|L_I|] \leq 8 / \nu$ for all $\ell \in \{0, \dots, \log d\}$ and $I \in \cI_\ell$. Note that the case $\ell = \log d$ implies the desired bound $\E[|L_{[d]}|]$.

The base case $\ell = 0$ clearly holds since $L_I = \{0, 1\}$.

We next consider any $\ell \in [\log d]$ and assume that the inductive hypothesis holds for $\ell - 1$ and any $I' \in \cI_{\ell - 1}$. Consider any $I \in \cI_\ell$ and let $S := \{s \in \{0, 1\}^{2^\ell} \mid f^I_s > 0.5\tau\}$. Notice that $|S| < n/(0.5\tau) = 4/\nu$. Next, notice that for any $s \notin S$, we have $f^I_s \leq \tau_\ell - \mu$. Therefore, for any $s \in (S_{\Ileft} \circ S_{\Iright}) \setminus S$, we have
\begin{align} \label{eq:drop-prob}
\Pr[s \in L_I] = \Pr[\tau_\ell - \mu + \Lap(\lambda) > \tau_\ell] \leq \exp\left(-\mu/\lambda\right) \leq \nu / 16.
\end{align}
Therefore, we have
\begin{align*}
\E[|L_I|] 
&= \E[|S \cap L_I|] + \E[|L_I \setminus S|] \\
&\leq |S| + \sum_{s \in \{0, 1\}^{2^\ell} \setminus S} \Pr[s \in L_I] \\
&= |S| + \sum_{s \in \{0, 1\}^{2^\ell} \setminus S} \Pr[s \in L_I \mid s \in L_{\Ileft} \circ L_{\Iright}] \Pr[s \in L_{\Ileft} \circ L_{\Iright}] \\
&\overset{\eqref{eq:drop-prob}}{\leq} 4 / \nu + \sum_{s \in \{0, 1\}^{2^\ell} \setminus S} \nu / 16 \cdot \Pr[s \in S_{\Ileft} \circ S_{\Iright}] \\
&\leq 4 / \nu + \nu / 16 \cdot \E[|L_{\Ileft} \circ L_{\Iright}|] \\
&= 4 / \nu + \nu / 16 \cdot \E[|L_{\Ileft}|]\E[|L_{\Iright}|] \\
(\text{Inductive Hypothesis}) &\leq 4 / \nu + \nu / 16 \cdot (8 / \nu)(8 / \nu) \\
&= 8 / \nu,
\end{align*}
which completes the proof of the inductive step.

\item \textbf{(Expected Running Time)} Notice that the expected running time of the algorithm is
\begin{align*}
\poly(d) \cdot \sum_{I \in \cI} \E[|L_{\Ileft}|] \cdot \E[|L_{\Iright}|],
\end{align*}
which is at most $\poly(d/\nu)$ due to the statement shown in the previous item.
\qedhere
\end{itemize}
\end{proof}

\subsubsection{Putting Things Together: Proof of~\Cref{thm:heavy-hitters-main}}

\begin{proof}[Proof of \Cref{thm:heavy-hitters-main}]
We first run \textsc{PrivHeavyHitter} with parameter $\lambda = 3/\eps, \mu = \lambda \log(16/\nu), \tau = 0.5\nu n$. Let $L$ denote its output. We then let $\hf_x = f_x + \Lap(2/\eps)$ for every $x \in L$ and output $(L, \hf)$.

From \Cref{lem:priv-tree-privacy}, \textsc{PrivHeavyHitter} with specified parameters is $0.5\eps$-DP. Furthermore, the second step is simply the $0.5\eps$-DP Laplace mechanism. Thus, the entire algorithm is $\eps$-DP.

We next argue its utility assuming that $$n \geq \frac{16 \mu \log d + 16 \lambda \log(d/(\eta \nu))}{\nu} + \frac{100 \log(1/(\eta\nu))}{\nu \eps} = O\left(\frac{1}{\eps\nu} \cdot \log(d/\eta)\log(1/\nu)\right).$$ By our setting of parameters, we may apply \Cref{lem:priv-tree-util} to conclude that w.p. $1 - 0.5\eta$ all desired heavy hitters belong to the list $L_{[d]}$ and that $\E[|L_{[d]}|] \leq 8/\nu$. The latter together with Markov inequality further implies that w.p. $1 - 0.25\eta$ we have $|L_{[d]}| \leq 32 / \nu$. When this occurs, we may use a union bound to conclude that
\begin{align*}
\Pr[\forall x \in L, |f_x - \hf_x| \leq \nu n] &\geq 1 - \sum_{x \in L} \Pr[|f_x - \hf_x| \leq \nu n] \\
&\geq 1 - \sum_{x \in L} \Pr[|\Lap(2/\eps)| \leq \nu n] \\
&\geq 1 - (32/\nu) \cdot \exp(-0.05 \nu n \eps),
\end{align*}
which is at least $1 - 0.25\eta$ since $n \geq \frac{100 \log(1/(\eta\nu))}{\nu \eps}$. Using a union bound again, we can conclude that our algorithm solves the histogram problem to within an error $\nu$ with probability at least $1 - \eta$.
\end{proof}

\subsection{Histogram: Lower Bound}

\begin{proof}[Proof of \Cref{thm:heavy-hitters-lb}]
Let $\bD^{(1)}, \dots, \bD^{(d)}$ be datasets such that $\bD^{(i)}$ contains $\lceil 3 \nu n\rceil$ copies of the one-hot vector $\bone_{i}$ and $n - \lceil 3 \nu n\rceil$ copies of the all zeros vector. Furthermore, let $\bD^{(0)}$ denote the dataset of $n$ all-zero vectors.

Let $S^{(i)}$ denote the set of solutions of error at most $\nu$ for $\bD^{(i)}$. For any distinct $i, j \in [d]$, we claim that $S^{(i)}$ and $S^{(j)}$ are disjoint. This holds because any $\{\hf_x\}_x \in S^{(i)}$ must satisfy $\hf_{\bone_i} \geq 2\nu n$ but any $(L, \hf) \in S^{(j)}$ must satisfy $\hf_{\bone_i} \leq \nu n$; these two conditions can hold simultaneously.

Observe also that $\bD^{(i)}$ and $\bD^{(0)}$ are $\lceil 2 \nu n \rceil$-neighbor under the partial DP notion. Therefore, for any algorithm $\cA$ that solves the heavy hitter problem with probability $0.1$, we have
\begin{align*}
1 \geq \Pr\left[\cA(\bD^{(0)}) \in \bigcup_{i \in [d]} S^{(i)}\right]
&= \sum_{i \in [d]} \Pr[\cA(\bD^{(0)}) \in S^{(i)}] \\
&\geq \sum_{i \in [d]} e^{-\eps \lceil 2 \nu n \rceil} \Pr[\cA(\bD^{(i)}) \in S^{(i)}] \\
&\geq d \cdot e^{-\eps \lceil 2 \nu n \rceil} \cdot 0.1 \\
(\text{From } d \geq 10 e^{1.1\eps}) &\geq d^{0.1} e^{-2 \eps \nu n},
\end{align*}
which indeed implies that $n \geq \frac{0.05}{\eps \nu} \log d$ as desired.
\end{proof}

\subsection{PAC Learning with Partial DP}

We next consider the PAC learning setting where there is an unknown distribution $\cD$ on $\{0, 1\}^d \times \{0, 1\}$ and the learner receives $n$ i.i.d. samples $(x_1, y_1), \dots, (x_n, y_n)$ drawn from $\cD$. The goal is to output a hypothesis $h: \{0, 1\}^d \to \{0, 1\}$ that minimizes the population error $\err(h; \cD) := \Pr_{(x, y) \sim \cD}[h(x) \ne y]$. An algorithm is said to be a \emph{PAC learner for a hypothesis class $\cH$ with error at most $\alpha$} if and only if, assuming that $\cD$ is realizable with respect to a hypothesis from $\cH$, with probability at least $0.9$ it outputs a hypothesis $h'$ such that $\err(h'; \cD) \leq \alpha$. An algorithm is said to be \emph{proper} if its output $h$ belongs to the hypothesis class $\cH$.

We note that, for \PDP, we view the concatenation of $x_i$ and $y_i$ as the input to user's $i$, i.e., a label is treated as another attribute.

\subsubsection{Learning Point Functions}

A \emph{point function} is a function $h_u$ where $\point_u(z) := \bone[u = z]$. The class of point functions, $\cH_{\point}$, is defined as $\{\point_u \mid u \in \{0, 1\}^d\}$. Learning point functions is well understood in the standard $\eps$-DP setting: the sample complexity is $\Theta_{\alpha}(d / \eps)$ for proper DP learning and $\Theta_{\alpha}(1/\eps)$ for improper DP learning~\cite{BeimelBKN14}. Below we show that the former can improved to $O_{\alpha}(\log d / \eps)$ for partial DP.

Our partial DP algorithm for learning point function uses our (succint) histogram algorithm to find the set of heavy hitters among $x_iy_i$'s. If a heavy hitter has $y = 1$, then output $\point_x$ for the corresponding $x$. Otherwise, if all heavy hitters have $y = 0$, then output any point function $\point_x$ whose $x$ does not appear in the list.

\begin{theorem} \label{thm:point-func}
There exists an $\eps$-\PDP proper PAC learner with error at most $\alpha$ and with sample complexity $O\left(\frac{1}{\alpha\eps} \log d \log(1/\alpha)\right)$ for point functions.
\end{theorem}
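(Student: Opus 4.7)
The plan is to reduce proper PAC learning of point functions to the histogram problem. Treat each labeled sample $(x_i, y_i)$ as a single vector in $\{0,1\}^{d+1}$ (appending the label as an extra attribute) and invoke the $\eps$-\PDP succinct-histogram algorithm of \Cref{thm:heavy-hitters-main} on this augmented dataset with accuracy parameter $\nu = \alpha/16$ and a constant failure probability. A neighboring pair of datasets under $\eps$-\PDP on these augmented records differs on exactly one coordinate of one augmented record, so the histogram algorithm is $\eps$-\PDP; the learner described below reads only the histogram's output, so privacy is inherited by post-processing.

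\textbf{Learner.} Let $\hf_z$ denote the histogram's estimate (with $\hf_z = 0$ for $z$ not in the reported list), and call $z$ \emph{heavy} when $\hf_z \ge \alpha n /4$. If some $(u, 1)$ is heavy, output $\point_u$ (picking any such $u$); otherwise, output $\point_{\hat u}$ for any $\hat u \in \{0,1\}^d$ with $(\hat u, 0)$ not heavy. By the accuracy guarantee, any heavy $z$ has true count at least $3\alpha n/16$, so at most $O(1/\alpha)$ strings are heavy; the required $\hat u$ therefore exists whenever $2^d$ exceeds this count, and the tiny-$d$ regime is handled by falling back to the standard $\eps$-DP proper learner, whose $O(d/(\eps\alpha))$ sample complexity is trivially $\eps$-\PDP and no worse than the claimed bound.

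\textbf{Utility.} Let $u^* \in \{0,1\}^d$ be the target concept, so $y_i = \bone[x_i = u^*]$ and $f_{(u,1)} = 0$ for every $u \ne u^*$ by realizability; write $p_u := \Pr_{\cD}[x = u]$. Condition on (a) the histogram accuracy event $\max_z |\hf_z - f_z| \le \nu n$ and (b) Chernoff concentration $f_{(u,\cdot)} \ge (2/3) p_u n$ for the at most $2/\alpha$ indices $u$ with $p_u \ge \alpha/2$; both hold with constant probability once $n = \Omega(\log(1/\alpha)/\alpha)$. Three cases then cover the utility: (i) if $p_{u^*} \ge \alpha/2$, Chernoff plus accuracy make $(u^*, 1)$ heavy while $\hf_{(u,1)} \le \nu n < \alpha n/4$ for every other $u$, so the algorithm outputs $\point_{u^*}$ exactly; (ii) if $p_{u^*} < \alpha/2$ but some $(u, 1)$ is nevertheless heavy, then $f_{(u,1)} \ge \alpha n/4 - \nu n > 0$, forcing $u = u^*$ by realizability; (iii) if no $(u, 1)$ is heavy, the chosen $\hat u$ satisfies $p_{\hat u} \le \alpha/2$ (otherwise Chernoff would make $(\hat u, 0)$ heavy), so $\err(\point_{\hat u}; \cD) \le p_{\hat u} + p_{u^*} \le \alpha$.

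\textbf{Sample complexity and main obstacle.} Substituting $\nu = \Theta(\alpha)$ and a constant failure probability into \Cref{thm:heavy-hitters-main} gives $n = O\!\left(\frac{\log d \cdot \log(1/\alpha)}{\eps\alpha}\right)$, and the Chernoff step only needs $n = \Omega(\log(1/\alpha)/\alpha)$, which is absorbed. The main conceptual step is choosing the heaviness threshold strictly above the histogram's worst-case additive error $\nu n$: this gap is exactly what lets us conclude simultaneously that ``heavy with label $1$'' certifies a positive true count (hence equals $u^*$ by realizability) and that ``not heavy with label $0$'' certifies $p_u \le \alpha/2$. Once this multiplicative gap between $\nu$ and the threshold is fixed, the remaining work is routine Chernoff and histogram bookkeeping.
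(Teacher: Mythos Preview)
Your proposal is correct and follows essentially the same approach as the paper: augment each sample to $\{0,1\}^{d+1}$, run the $\eps$-\PDP histogram, output $\point_u$ for a heavy $(u,1)$ if one exists, and otherwise output $\point_{\hat u}$ for some $\hat u$ with $(\hat u,0)$ light. The only cosmetic differences are that the paper uses a uniform-convergence bound over $\cH_{\point}$ where you use pointwise Chernoff on the $O(1/\alpha)$ high-mass atoms, and your tiny-$d$ fallback is unnecessary since $(u^*,0)$ always has true count zero and is therefore never heavy, guaranteeing a valid $\hat u$ exists.
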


\begin{proof}[Proof of \Cref{thm:point-func}]
Our algorithm works as follows. First, we let $z_i$ be the concatenation of $x_i$ and $y_i$ and run the $\eps$-\PDP heavy hitter algorithm from~\Cref{thm:heavy-hitters-main} with $\nu = 0.2\alpha, \eta = 0.01$ to obtain an approximate frequency $\hf$ for $z_1, \dots, z_n \in \{0, 1\}^d$. We then select the output hypothesis as follows:
\begin{itemize}
\item We attempt to find $x^1 = 1$ and $\hf(x^1 1) > \nu n$ (with ties broken arbitrarily). If such an $x^1$ is found, then we output $h_{x^1}$.
\item Otherwise, if no such $z^*$ is found, we attempt to find $x^0$ such that $\hf(x^0 0) \leq \nu n$ (with ties broken arbitrarily). If such an $x^0$ is found, then we output $h_{x^0}$.
\item If neither $z^*$ or $x^0$ is found, then we output an arbitrary hypothesis from $\cH_{\text{point}}$.
\end{itemize}

It is immediate that the algorithm is $\eps$-\PDP. We next analyze its error guarantee assuming that $n \geq \frac{C}{\alpha\eps} \log d \log(1/\alpha)$ where $C$ is a sufficiently large constant. A standard generalization bound implies that with probability at least $0.99$, it holds that $|\err(h; S) - \err(h; \cD)| \leq 0.1\alpha$ for all $h \in \cH_{\text{point}}$. Recall also from~\Cref{thm:heavy-hitters-main} that with probability at least $0.99$, the following holds: $|\hf_z - f_z| \leq 0.1\nu n$ for all $z \in \{0, 1\}^{d + 1}$. We will henceforth assume that this event occurs.

Since $\cD$ is realizable by some hypothesis $h_{x^*} \in \cH$, $S$ is also consistent with this hypothesis. Consider two cases based on whether we found $x^1$.
\begin{itemize}
\item If we found $x^1$, then it must be equal to $x^*$ because for all $x \ne x^*$ we have that $\hf_{x \circ 1} \leq f_{x \circ 1} + \nu n = \nu n$. Therefore, $\err(h_{x^1}; \cD) = 0$ as desired.
\item If we do not find $x^1$, then we have that $\hf_{x^* 1} < 2 \nu n$. Furthermore, since $\hf_{x^* 0} \leq f_{x^* 0} + \nu n = \nu n$, we are guaranteed to find $x^0$. Thus, we have that
\begin{align*}
\err(h_{x^0}; \cD) \leq 0.1\alpha + \err(h_{x^0}; S) = 0.1\alpha + \nu + f_{x^0 0} / n + f_{x 1} / n &\leq 0.1 \alpha + 2\nu + \hf_{x^0 0} / n + \hf_{x 1} / n \\
&\leq 2.1 \nu + \nu + \nu = 4.1\nu < \alpha.
\end{align*}
\end{itemize}
Hence, in both cases, the population error of the output hypothesis is at most $\alpha$ as desired.
\end{proof}

\subsubsection{Learning Threshold Functions}

Let the elements in $\{0, 1\}^d$ be ordered lexicographically. %
A \emph{threshold function} $\Thre_z$ is defined as $\Thre_z(x) := \bone[x \geq z]$. The class of threshold functions is $\cH_{\Thre} := \{\Thre_z \mid z \in \{0, 1\}^d\}$. Learning threshold functions is among the most well-studied problem in the DP literature~\cite{BeimelNS19,BeimelNS16,BunNSV15,FeldmanX15,AlonLMM19,KaplanLMNS20}. In the standard $\eps$-DP setting, it is known that the sample complexity is $\Theta_\alpha(d / \eps)$. Here we will show that this can be reduced to $\Theta_\alpha(\log d / \eps)$ under $\eps$-\PDP.

Our algorithm uses the histogram algorithm to find a longest ``polarizing'' prefix, i.e., a prefix $p$ for which there are sufficiently many input examples of the form $(p s, 0)$ and $(p s, 1)$. %
Intuitively, this is the longest prefix for which we are still unsure what to label. Once such a prefix is found, it is now easy to find a low-error threshold function as we only have to figure out what to label those with prefix $p0$ and $p1$.

We start by describing the algorithm for finding a longest polarizing prefix. For a prefix $p$, we write $f_{p*0}$ (resp. $f_{p*1}$) to denote the number of input samples of the form $(ps, 0)$ (resp. $(ps, 1)$). We define $\pol(p) := \min\{f_{p*0}, f_{p*1}\} / n$, and say that a prefix $p$ is \emph{$\gamma$-polarizing} if $\pol(p) \geq \gamma$. Our subroutine has the following guarantee.

\begin{lemma} \label{lem:polarizing-prefix}
There exists an $\eps$-\PDP algorithm that, given a set $S$ of $n \geq \tilde{O}\left(\frac{1}{\eps \gamma} \log d\right)$ labeled examples where $\pol(\perp) \geq 2\gamma$, with probability $0.99$ finds a prefix $p$ that is $\gamma$-polarizing and no prefix longer than $p$ is $2\gamma$-polarizing.
\end{lemma}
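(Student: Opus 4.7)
The plan is to decompose the task into two $(\eps/2)$-\PDP subroutines---one that locates the deepest level at which polarization is significant, another that extracts a concrete polarizing prefix at that level---and combine them by basic composition.

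For each $\ell \in \{0,1,\dots,d\}$, define the query $q_\ell(S) := \max_{p \in \{0,1\}^\ell} \min\{f_{p*0}, f_{p*1}\}$. Two facts drive the analysis. First, $q_\ell$ has sensitivity at most $1$ under a single per-attribute change, since such a change perturbs $f_{p*b}$ by at most $1$ for at most two prefixes at each level, hence shifts $q_\ell$ by at most $1$. Second, $q_\ell$ is non-increasing in $\ell$: the parent $p$ of any length-$(\ell+1)$ prefix $p'$ satisfies $f_{p*b} \ge f_{p'*b}$, so $\min\{f_{p*0},f_{p*1}\} \ge \min\{f_{p'*0},f_{p'*1}\}$. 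Consequently there is a unique cutoff level beyond which $q_\ell$ falls below $2\gamma n$.

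The first subroutine runs the above-threshold sparse vector technique at privacy level $\eps/2$ on the sequence $q_d, q_{d-1}, \dots, q_0$ with threshold $T = 1.8\gamma n$, stopping after the first $\top$ answer. This is $(\eps/2)$-\PDP by the standard analysis (each $q_\ell$ has sensitivity $1$ per attribute change), and provided $n = \Omega(\log(d)/(\eps\gamma))$, it yields with probability at least $0.995$ a level $\ell^\dagger$ satisfying $q_{\ell^\dagger} \ge 1.6\gamma n$ while $q_{\ell'} < 2\gamma n$ for every $\ell' > \ell^\dagger$. A first $\top$ is guaranteed to occur because $q_0 = n \cdot \pol(\perp) \ge 2\gamma n > T$.

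The second subroutine recovers an explicit prefix at level $\ell^\dagger$. Apply Theorem~\ref{thm:heavy-hitters-main} at privacy level $\eps/2$ with error parameter $\nu = 0.3\gamma$ to the dataset of $(\ell^\dagger + 1)$-bit strings $\{(x_i)|_{[\ell^\dagger]} \circ y_i\}_{i\in[n]}$. Each per-attribute change on the original data translates to at most one per-attribute change on the truncated dataset, so this step is also $(\eps/2)$-\PDP; it uses $n = \tilde{O}(\log(d)/(\eps\gamma))$ samples and, with probability at least $0.995$, returns approximate frequencies $\hat f$ with $\max_s|\hat f_s - f_s| \le 0.3\gamma n$. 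Output any $p \in \{0,1\}^{\ell^\dagger}$ with $\hat f_{p\circ 0}, \hat f_{p\circ 1} \ge 1.3\gamma n$, breaking ties arbitrarily. Any witness $p^*$ of $q_{\ell^\dagger} \ge 1.6\gamma n$ satisfies $\hat f_{p^*\circ b} \ge 1.3\gamma n$, so the output set is non-empty; and any output $p$ has $f_{p\circ b} \ge 1.3\gamma n - 0.3\gamma n = \gamma n$, i.e.\ is $\gamma$-polarizing. Together with the SVT guarantee that no longer prefix is $2\gamma$-polarizing, this matches the lemma. A union bound over the two failure events and basic composition of the two halves give an $\eps$-\PDP algorithm that succeeds with probability at least $0.99$.

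The main subtlety will be calibrating the SVT threshold, the SVT noise buffer, and the heavy-hitters error parameter so that the margin inherited from SVT (here $1.6\gamma n$ versus $2\gamma n$) comfortably exceeds twice the heavy-hitters error---otherwise the final filtering step cannot simultaneously guarantee non-emptiness (the true witness survives) and soundness (no spurious prefix is reported). The choices $T = 1.8\gamma n$ and $\nu = 0.3\gamma$ above leave adequate slack on both sides.
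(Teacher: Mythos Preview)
Your argument is correct and follows the same two-stage decomposition as the paper: first privately locate a level $\ell^\dagger$ at which some prefix is roughly $1.5\gamma$-polarizing while no deeper level is $2\gamma$-polarizing, then invoke the heavy-hitters algorithm (\Cref{thm:heavy-hitters-main}) on the truncated dataset $\{x_i|_{[\ell^\dagger]}\circ y_i\}$ to extract a concrete $\gamma$-polarizing prefix. The second stage is essentially identical to the paper's.

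The one substantive difference is in the first stage. The paper exploits the monotonicity of $\ell\mapsto\max\pol_\ell$ via \emph{binary search}: it queries $O(\log d)$ levels, each with Laplace noise of scale $O(\log d/(\eps n))$, and uses monotonicity to make binary search well-defined. You instead run \emph{AboveThreshold} on the full sequence $q_d,\dots,q_0$, which gives the same $O(\log d/\eps)$ additive error from the union bound over $d+1$ queries. Both are $(\eps/2)$-\PDP via the sensitivity-$1$ observation you state, and both yield the $\tilde O(\log d/(\eps\gamma))$ sample bound. Your route has the minor advantage that privacy does not rely on monotonicity (only correctness does, to ensure a $\top$ eventually fires and to propagate the ``no deeper prefix'' conclusion); the paper's binary search genuinely needs monotonicity just to be coherent. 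Conversely, the paper's approach touches only $O(\log d)$ levels rather than all $d+1$, though this has no effect on the final bound.
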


Here we use $\perp$ to denote the empty string.

\begin{proof}
For $\ell \in \{0, \dots, \log d\}$, let $\max\pol_\ell$ denote $\max_{p \in \{0, 1\}^\ell} \pol(p)$. The algorithm starts by using a binary search to find $\ell^*$ such that $\widetilde{\max\pol}_\ell = \max\pol_\ell + \Lap(2 \log d/(\eps n)) \geq 1.5\gamma$; if no such $\ell^*$ exists, simply let $\ell^* = 0$. Then, we use our $0.5\eps$-\PDP histogram algorithm from \Cref{thm:heavy-hitters-main} on the set $S_{\ell^*} := \{x|_{[\ell^*]} y \mid (x, y) \in S\}$ with $\nu = 0.1\gamma$. Finally, we attempt to find $p$ such that $\hf_{p0}, \hf_{p1} \geq 1.5\gamma$ and output such a $p$ if found (tie broken arbitrarily).

The privacy guarantee of the algorithm follows directly from that of the Laplace mechanism, the histogram algorithm, and basic composition. 

We will prove the guarantee assuming that $n \geq C \cdot \left(\frac{1}{\eps \gamma} \log d \log(1/\gamma) \log \log d\right)$, where $C$ is a sufficiently large constant. Under this assumption, w.p. 0.995 we have $|\widetilde{\max\pol}_\ell - \max\pol_\ell| < 0.1 \gamma$ for all $\ell$ used during the binary search. This indeed implies that there is no $2\gamma$-polarizing prefix of length $\ell^* + 1$ and that there exists an $1.4\gamma$-polarizing prefix of length $\ell^*$. Now, using the guarantee of~\Cref{thm:heavy-hitters-main}, we can conclude that we find a $\gamma$-polarizing prefix with probability at least $0.99$ as desired.
\end{proof}

We can now use the above algorithm to learn threshold functions.

\begin{theorem} \label{thm:learning-thresholds}
There exists an $\eps$-\PDP proper PAC learner with error at most $\alpha$ and with sample complexity $\tilde{O}\left(\frac{1}{\alpha\eps} \log d\right)$ for threshold functions.
\end{theorem}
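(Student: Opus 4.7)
My plan is to build the learner as three $O(\eps)$-\PDP stages, composed sequentially. Let $\gamma := \alpha/C$ for a sufficiently large constant $C$. First, I would privately test whether $\pol(\perp) \geq 2\gamma$ by releasing a Laplace-noised estimate of $\pol(\perp)$. If the test fails, the dataset is dominated by a single label, and the algorithm outputs $\Thre_{0_d}$ (majority label $1$) or $\Thre_{1_d}$ (majority label $0$) based on a private estimate of $\sgn(f_{*1} - f_{*0})$; either choice has empirical error at most $2\gamma n$. Otherwise, I would invoke \Cref{lem:polarizing-prefix} with privacy parameter $\eps/3$ to obtain a longest $\gamma$-polarizing prefix $p$. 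Finally, I would run the exponential mechanism~\cite{McSherryT07} over the three candidate thresholds
\[
\Thre_{z_0}\ \text{(labels every $p$-prefix element as $1$)},\ \Thre_{z_1}\ \text{(labels $p0$-prefix as $0$ and $p1$-prefix as $1$)},\ \Thre_{z_2}\ \text{(labels every $p$-prefix element as $0$)},
\]
using the empirical error $\err(\Thre_{z_i}; S)$ (sensitivity $1$ under a single attribute change) as the negated quality score.

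Privacy would follow from basic composition of the three stages. For utility, the key structural step is to argue that, under realizability, if Stage 3 is reached then the prefix $p$ returned by Stage 2 must be a prefix of $z^*$: indeed, $\gamma$-polarization of $p$ requires the existence of both a $p$-prefix sample $<z^*$ (label $0$) and a $p$-prefix sample $\ge z^*$ (label $1$), which forces $z^* \in (p\cdot 0_{d-|p|},\ p\cdot 1_{d-|p|}]$. I would then split into case A ($z^*$ has prefix $p0$) and case B ($z^*$ has prefix $p1$). In case A one has $f_{p1*0} = 0$, so $\err(\Thre_{z_0};S)=f_{p0*0}$ and $\err(\Thre_{z_1};S)=f_{p0*1}$; the ``no longer prefix is $2\gamma$-polarizing'' guarantee of \Cref{lem:polarizing-prefix} then yields $\min\{f_{p0*0},f_{p0*1}\} = \pol(p0)\cdot n < 2\gamma n$. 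Case B is symmetric and gives $\min\{\err(\Thre_{z_1};S),\err(\Thre_{z_2};S)\} < 2\gamma n$. Either way, $\min_i \err(\Thre_{z_i};S) < 2\gamma n$.

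Combining this with the standard accuracy guarantee of the exponential mechanism over $k=3$ candidates shows that Stage 3 outputs a threshold with empirical error $\le 2\gamma n + O(\log(1/\eta)/\eps)$ with probability $1-\eta$; taking $n = \Omega(\log(1/\eta)/(\eps\gamma))$ bounds the empirical error by $3\gamma n$. Standard realizable-case uniform convergence over $\cH_{\Thre}$ (VC dimension $1$) then turns this into population error $O(\gamma + \log(1/\gamma)/n)$, which is at most $\alpha$ once I set $\gamma = \Theta(\alpha)$ and take $n$ large enough to dominate every term; the overall sample complexity is $\tilde O(\log d/(\eps\alpha))$, dominated by the requirement of \Cref{lem:polarizing-prefix}.

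The main obstacle I would expect is the design of Stage 3. A more naive strategy that separately estimates the $p0$ and $p1$ subtree majorities with two independent Laplace queries can produce an inconsistent combination (e.g., estimated majority $1$ on $p0$ and estimated majority $0$ on $p1$) for which no fixed single threshold output is guaranteed to have small empirical error in all realizable scenarios. The three-candidate exponential-mechanism selection above sidesteps this precisely because the structural claim ensures that at least one of the three candidate thresholds achieves near-optimal empirical error in both case A and case B.
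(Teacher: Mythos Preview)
Your proposal is correct and follows essentially the same three-stage architecture as the paper: a Laplace-based test for whether one label dominates, the polarizing-prefix subroutine of \Cref{lem:polarizing-prefix}, and an exponential mechanism over the same three candidate thresholds $\Thre_{p0_{d-\ell}}$, $\Thre_{p10_{d-\ell-1}}$, $\Thre_{p1_{d-\ell}}$. The only cosmetic difference is in Stage~1: the paper releases a single noisy estimate of the fraction of $1$-labels and uses it both to decide whether to skip to a trivial output and to pick which trivial output, whereas you split this into a test on $\pol(\perp)$ followed by a separate sign query; both are $O(1/(\eps n))$-accurate and lead to the same guarantee. Your case-A/case-B analysis showing $\min_i \err(\Thre_{z_i};S) < 2\gamma$ is exactly the content of the paper's one-line claim that ``since neither $p0$ nor $p1$ is $0.2\alpha$-polarizing, at least one of the three candidates has empirical error at most $0.2\alpha$,'' just spelled out more carefully.
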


\begin{proof}
Our algorithm works as follows:
\begin{itemize}
\item First, we privately compute the fraction of 1 labels, i.e., $a = \E_{(x, y) \sim S}[y] + \Lap(3/(\eps n))$.
\item If $a \leq 0.5\alpha$, then output $\Thre_{0_d}$.
\item Otherwise, if $a \geq 1 - 0.5\alpha$, then output $\Thre_{1_d}$.
\item Otherwise, if $a \in (0.5\alpha, 1 - 0.5\alpha)$, run the $(\eps/3)$-\PDP algorithm from~\Cref{lem:polarizing-prefix} with $\gamma = 0.1\alpha$ to find a prefix $p \in \{0, 1\}^\ell$. Finally, use an $(\eps/3)$-DP exponential mechanism to select among the hypotheses $\Thre_{p0_{d}}, \Thre_{p10_{d - \ell - 1}}, \Thre_{p1_{d}}$.
\end{itemize}

The privacy guarantee of the algorithm follows from that of Laplace mechanism and basic composition. We next argue its accuracy assuming that $n \geq C \cdot \left(\frac{1}{\eps \alpha} \log d \log(1/\alpha) \log \log d\right)$ for some sufficiently large constant $C$. For this, we will only show that the empirical error $\err(h; S)$ of the output hypothesis $h$ is at most $0.9\alpha$ w.p. 0.95. A standard generalization bound then implies that $\err(h; \cD)$ is at most $\alpha$ w.p. 0.9.

To bound $\err(h; S)$, first notice that, w.p. 0.99, the Laplace noise added in the first step is at most $0.1\alpha$. Therefore, if we output either $\Thre_{0_d}$ or $\Thre_{1_d}$, then their errors are at most $0.6\alpha$. Furthermore, if we proceed to the last step, then we must have $\pol(\lambda) \geq 0.4\alpha$. Therefore, \Cref{lem:polarizing-prefix} ensures that with probability 0.99, we indeed finds $p$ that is $0.1\alpha$-polarizing and that neither $p0$ nor $p1$ is $0.2\alpha$-polarizing. Since $p$ is $0.1\alpha$-polarizing, it must be that the underlying hypothesis is $\Thre_{p s}$ for some $s \in \{0, 1\}^d$. Since neither $p0$ nor $p1$ is $0.2\alpha$-polarizing, at least one of $\Thre_{p0_{d}}, \Thre_{p10_{d - \ell - 1}}, \Thre_{p1_{d}}$ must have empirical error at most $0.2\alpha$. As a result, with probability 0.99, the exponential mechanism will indeed pick a hypothesis $h$ such that $\err(h; S) \leq 0.9\alpha$ as desired.
\end{proof}

\subsection{Discrete Distribution Estimation Under $\ell_2^2$ Error}

Let $\Delta_{\{0, 1\}^d} := \{\phi \in [0, 1]^{\{0, 1\}^d} \mid \sum_{x \in \{0, 1\}^d} \theta_x = 1\}$, where $\theta \in \Delta_{\{0, 1\}^d}$ is also viewed as a probability distribution over $\{0, 1\}^d$. In discrete distribution estimation, we are given samples $x_1, \dots, x_n \sim \theta$ for some unknown $\theta \in \Delta_{\{0, 1\}^d}$. The goal is to output $\theta^{priv} \in [0, 1]^d$ that minimizes the $\ell_2^2$ (aka squared) error: $\|\theta^{priv} - \theta\|_2^2 = \sum_{x \in \{0, 1\}^d} (\theta^{priv}_x - \theta_x)^2$. The problem is well studied in the local (standard) DP setting (e.g.,~\cite{DuchiWJ13,KairouzBR16,YeB18}). We are not aware of works that studies it in the central DP setting before but it is not hard to see (using similar lower bound as in histogram~\cite{HardtT10}) that to get a constant say 0.1 error $n$ has to be at least $\Omega\left(d/\eps\right)$. On the other hand, below we show that this is possible in partial DP even when $n = \tilde{O}(\log d / \eps)$.

Our algorithm uses the histogram algorithm to discover all $x$ with $\theta_x \geq \nu$. For these $x$'s, we then use the Laplace mechanism to estimate $\theta^{priv}_x$. For the remaining $x$'s, we output $\theta^{priv}_x = 0$; since $\theta_x < \nu$, these zero outputs only contribute to at most $\nu$ in the $\ell_2^2$ error.

\begin{theorem} \label{thm:dist-est-l2}
There exists an $\eps$-\PDP algorithm for discrete distribution whose $\ell_2^2$ error is at most $\tilde{O}\left(\frac{\log d}{\eps n} + \frac{1}{n}\right)$ with probability at least $0.9$.
\end{theorem}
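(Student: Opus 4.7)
The plan is to combine the $\eps$-$\nabla_0$DP heavy-hitters algorithm from~\Cref{thm:heavy-hitters-main} with a Laplace-based estimation step. First I would set a threshold $\nu = \tilde\Theta(\log d/(\eps n))$, chosen large enough to meet the sample-complexity hypothesis of~\Cref{thm:heavy-hitters-main}. The algorithm then proceeds in two stages: run \textsc{PrivHeavyHitter} with privacy budget $\eps/2$ and error parameter $\nu$ on $x_1,\dots,x_n$ to produce a list $L \subseteq \{0,1\}^d$ which, with probability at least $0.99$, contains every $x$ with $f_x \ge \nu n$ and has $\E[|L|] = O(1/\nu)$; then for each $x \in L$ release $\theta^{priv}_x := \clip_{[0,1]}\bigl((f_x + \Lap(4/\eps))/n\bigr)$ and set $\theta^{priv}_x := 0$ for $x \notin L$.

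Privacy follows from sequential composition (\Cref{lem:composition}): the first step is $\eps/2$-$\nabla_0$DP by~\Cref{thm:heavy-hitters-main}, and conditioned on $L$ the second step is $\eps/2$-$\nabla_0$DP because changing a single attribute of a single user alters at most two coordinates of the count vector $(f_x)_{x \in L}$ by $\pm 1$, giving $\ell_1$-sensitivity at most $2$, so Laplace noise of scale $4/\eps$ suffices.

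For utility, writing $\hat\theta_x := f_x/n$, I would decompose the error as
\begin{align*}
\|\theta^{priv}-\theta\|_2^2 \;\le\; 2\sum_{x \in L}(\theta^{priv}_x - \hat\theta_x)^2 + 2\sum_{x \in L}(\hat\theta_x - \theta_x)^2 + \sum_{x \notin L}\theta_x^2.
\end{align*}
The first (Laplace) sum has expectation $|L| \cdot O(1/(\eps n)^2) = O(1/(\nu \eps^2 n^2))$; the second (sampling) sum has expectation at most $\sum_x \theta_x(1-\theta_x)/n \le 1/n$; and the third I argue below is $O(\nu)$. Summing, applying Markov's inequality, and plugging in $\nu = \tilde\Theta(\log d/(\eps n))$ will yield $\|\theta^{priv}-\theta\|_2^2 = \tilde O(\log d/(\eps n) + 1/n)$ with probability at least $0.9$.

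The hard part will be bounding $\sum_{x \notin L}\theta_x^2$, since membership in $L$ is governed by empirical counts rather than by the true probabilities $\theta_x$. To handle this, I would apply a multiplicative Chernoff bound together with a union bound over $\{0,1\}^d$ to conclude that, with high probability, every $x$ with $\theta_x \ge C\log d/n$ satisfies $f_x \ge \theta_x n/2 \ge \nu n$ (using $\nu \gtrsim \log d/n$, which holds for $\eps \le 1$ by our choice of $\nu$) and therefore lies in $L$; consequently every $x \notin L$ has $\theta_x = O(\nu)$, giving $\sum_{x \notin L}\theta_x^2 \le O(\nu) \cdot \sum_x \theta_x = O(\nu)$. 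This is what dictates the choice $\nu = \tilde\Theta(\log d/(\eps n))$: large enough to satisfy both the heavy-hitter sample-complexity constraint and the Chernoff bound, yet small enough for the tail $O(\nu)$ to match the target.
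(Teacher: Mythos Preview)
Your algorithm and overall plan match the paper exactly: run the heavy-hitter subroutine at budget $\eps/2$ with $\nu=\tilde\Theta(\log d/(\eps n))$, then add Laplace noise to the listed counts; the privacy argument and the treatment of the Laplace and sampling terms are fine.

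The gap is in the tail step. A union bound over all of $\{0,1\}^d$, as you literally state it, cannot work: that is $2^d$ events, while the multiplicative Chernoff tail for a single $x$ with $\theta_x\approx C\log d/n$ is only $\exp(-\Theta(C\log d))=d^{-\Theta(C)}$. Even if you restrict the union bound to the at most $1/(2\nu)$ indices with $\theta_x\ge 2\nu$, the per-item failure is $e^{-\Theta(\nu n)}=d^{-\Theta(1/\eps)}$ while $1/\nu=\tilde\Theta(\eps n/\log d)$ grows with $n$, so the argument only goes through when $n\le\poly(d)$ and needs separate case analysis otherwise.

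The paper sidesteps Chernoff entirely by routing through the \emph{empirical} distribution $\hat\theta=f/n$ rather than $\theta$ in the tail. Using the global split
\[
\|\theta^{priv}-\theta\|_2^2 \;\le\; 2\,\|\theta^{priv}-\hat\theta\|_2^2 \;+\; 2\,\|\hat\theta-\theta\|_2^2,
\]
the second piece is just the sampling error with $\E\|\hat\theta-\theta\|_2^2\le 1/n$. In the first piece, on the heavy-hitter success event every unlisted $x$ already satisfies $f_x<\nu n$, so $\sum_{x\notin L}\hat\theta_x^2 \le \nu\sum_x\hat\theta_x=\nu$ with no reference to $\theta$ at all. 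In your own decomposition this is the one-line fix $\sum_{x\notin L}\theta_x^2\le 2\sum_{x\notin L}(\theta_x-\hat\theta_x)^2+2\sum_{x\notin L}\hat\theta_x^2$, with the first sum absorbed into the sampling error; no concentration of $f_x$ around $n\theta_x$ is needed.
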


\begin{proof}[Proof of \Cref{thm:dist-est-l2}]
We run the $0.5\eps$-\PDP heavy hitter algorithm from~\Cref{thm:heavy-hitters-main} with $\nu = C \cdot \frac{\log d}{\eps n} \cdot \frac{1}{\log(\eps n)}, \eta = 0.01$ where $C$ is approximate frequencies $(\hf_x)_{x \in \{0, 1\}^d}$ for w.r.t. the input $x_1, \dots, x_n \in \{0, 1\}^d$. We then let
\begin{align*}
\theta^{priv}_x := 
\begin{cases}
(f_x + \Lap(2/\eps)) / n &\text{ if } \hf_x \geq 2\nu n, \\
0 &\text{ otherwise.}
\end{cases}
\end{align*}

It is obvious that the algorithm is $\eps$-\PDP. Next, we analyze its error guarantee. First, notice that
\begin{align*}
\|\theta^{priv} - \theta\|_2^2 \leq 2\|\theta^{priv} - f / n\|_2^2 + 2\|f / n - \theta\|_2^2.
\end{align*}
Hence, it suffices to show that $\Pr\left[\|\theta^{priv} - f/n\|_2^2 \leq \tilde{O}\left(\frac{\log d}{\eps n}\right)\right] \geq 0.95$ and $\Pr[\|f/n - \theta\|_2^2 \leq O(1/n)] \geq 0.95$. We start with the former. For this, recall from~\Cref{thm:heavy-hitters-main} that with probability 0.99, the following holds: $|\hf_x - f_x| \leq \nu n$ for all $x \in \{0, 1\}^d$. We will henceforth assume this event occurs. Notice that we may write
\begin{align*}
\E[\|\theta^{priv} - f/n\|_2^2] &= \sum_{x \in \{0, 1\}^d \mid \hf_x \geq 2\nu n} \E[(\theta^{priv}_x - f_x / n)^2] + \sum_{x \in \{0, 1\}^d \mid \hf_x < 2\nu n} \E[(f_x / n)^2] \\
&\leq \sum_{x \in \{0, 1\}^d \mid \hf_x \geq 2\nu n} O\left(\frac{1}{\eps^2n^2}\right) + \sum_{x \in \{0, 1\}^d \mid \hf_x < 2\nu n} \E[(3\nu)(f_x / n)] \\
&\leq O\left(\frac{1}{\eps^2 n^2 \nu}\right) + O(\nu) \\
&= \tilde{O}\left(\frac{\log d}{\eps n}\right).
\end{align*}
Applying the Markov inequality yields the desired probabilistic bound.

For the latter, we have
\begin{align*}
\E[\|f/n - \theta\|_2^2] &= \sum_{x \in \{0, 1\}^d} \E[\left(f_x / n - \cdot \theta_x\right)^2] \\
&= \sum_{x \in \{0, 1\}^d} \theta_x(1 - \theta_x) / n \\
&\leq \sum_{x \in \{0, 1\}^d} \theta_x / n = 1/n.
\end{align*}
Applying the Markov inequality yields the desired probabilistic bound.
\end{proof}

\end{document}